\pgfplotsset{compat=1.18} 
\def\E{ {\cal E} }
\def\F{ {\cal F} } 
\def\T{ {\cal T} }
\def\>{\rangle}
\def\<{\langle}
\newcommand{\ket}[1]{| {#1} \rangle}
\newcommand{\ketbra}[2]{\ensuremath{\left|#1\right\rangle\!\!\left\langle#2\right|}}
\newcommand{\tr}[1]{\mathrm{Tr}\left( #1 \right)}
\newcommand{\trr}[2]{\mathrm{Tr}_{#1}\left( #2 \right)}
\newcommand{\iden}{\mathbbm{1}}
\newcommand{\conv}[1]{\mathrm{conv}\left[#1\right]}
\renewcommand{\v}[1]{\ensuremath{\boldsymbol #1}}
\theoremstyle{plain}
\newtheorem{thm}{Theorem}
\newtheorem{lem}[thm]{Lemma}
\newtheorem{prop}[thm]{Proposition}
\newtheorem{cor}[thm]{Corollary}
\theoremstyle{definition}
\newtheorem{rem}[thm]{Remark}
\newtheorem{ex}[thm]{Example}
\begin{document}

\title{Resource engines}
	
	\author{Hanna Wojew{\'o}dka-{\'S}ci\k{a}{\.z}ko}	
	\affiliation{Institute of Mathematics, University of Silesia in Katowice, Bankowa 14, 40-007 Katowice, Poland}
	\affiliation{Institute of Theoretical and Applied Informatics, Polish Academy of Sciences,  Ba{\l}tycka 5, 44-100 Gliwice, Poland}
	\email{hanna.wojewodka@us.edu.pl}
    
    \author{Zbigniew Pucha\l{}a}
    \affiliation{Institute of Theoretical and Applied Informatics, Polish Academy of Sciences,  Ba{\l}tycka 5, 44-100 Gliwice, Poland}
	
	\author{Kamil Korzekwa}
	\affiliation{Faculty of Physics, Astronomy and Applied Computer Science, Jagiellonian University, 30-348 Krak\'{o}w, Poland}
	
\date{\today}

\begin{abstract}
    In this paper we aim to push the analogy between thermodynamics and quantum resource theories one step further. Previous inspirations were based predominantly on thermodynamic considerations concerning scenarios with a single heat bath, neglecting an important part of thermodynamics that studies heat engines operating between two baths at different temperatures. Here, we investigate the performance of resource engines, which replace the access to two heat baths at different temperatures with two arbitrary constraints on state transformations. The idea is to imitate the action of a two--stroke heat engine, where the system is sent to two agents (Alice and Bob) in turns, and they can transform it using their constrained sets of free operations. We raise and address several questions, including whether or not a resource engine can generate a full set of quantum operations or all possible state transformations, and how many strokes are needed for that. We also explain how the resource engine picture provides a natural way to fuse two or more resource theories, and we discuss in detail the fusion of two resource theories of thermodynamics with two different temperatures, and two resource theories of coherence with respect to two different bases.
\end{abstract}

\maketitle


\section{Introduction}

Thermodynamics occupies a distinguished place among physical theories: it permeates all other fields and finds applicability everywhere, from astrophysics~\cite{davies1978thermodynamics} and biophysics~\cite{zuckerman2010statistical} to condensed matter~\cite{lifshitz2013statistical} and physics of computation~\cite{bennett1982thermodynamics}. The reason for this is that thermodynamics actually forms a meta-theory that employs statistical reasoning to tell us which state transformations are allowed, and which are probabilistically impossible. In other words, thermodynamics can be seen as a field studying the accessibility and inaccessibility of one physical state from another~\cite{giles1964mathematical}, abstracting away whether these states describe electromagnetic radiation, a gas of spin particles or electrons in an atom. 

This perspective inspired the development of quantum resource theories~\cite{chitambar2019quantum}, where one investigates allowed transformations between quantum states under general, not necessarily thermodynamic, constraints. The best-known example is given by the theory of entanglement~\cite{horodecki2009quantum}, where bipartite system transformations are constrained to local operations and classical communication, and one is interested in the possible manipulations of entangled states. Other widely studied examples include state transformations under operations that are incoherent~\cite{baumgratz2014quantifying}, symmetric~\cite{marvianthesis} or Clifford~\cite{veitch2014resource}. In all these theories, clear parallels with thermodynamic considerations have been made. Similarly as every thermodynamic transition can be made possible by investing enough work, universal resources in other theories (such as ebits~\cite{bennett1996concentrating}, refbits~\cite{van2005quantifying} or coherent bits~\cite{chitambar2016relating}) have also been identified and their optimal manipulations have been investigated. Also, the thermodynamic concept of catalysis was extended to general resource theories, where catalysts are given by ancillary systems that make otherwise forbidden transitions possible, while being unchanged in the process~\cite{jonathan1999entanglement,bu2016catalytic}. Finally, probably the most important link between thermodynamics and other resource theories is through the second law of thermodynamics~\cite{horodecki2002laws}, where general resource monotones~\cite{gonda2019monotones}, such as entropy of entanglement~\cite{horodecki2009quantum} or relative entropy of coherence~\cite{baumgratz2014quantifying}, play the role of monotonically decreasing thermodynamic free energy, and the questions concerning reversibility and irreversibility of a given resource theory are central to the field~\cite{brandao2008entanglement,kumagai2013entanglement,korzekwa2019avoiding,lami2023no}.

Given how fruitful the thermodynamic inspirations have been so far for quantum resource theories, in this paper we aim at pushing this analogy one step further. So far most of such inspirations have been based on thermodynamic considerations concerning scenarios with a single heat bath. Notable exception is given by the resource theory thermodynamics itself, where the efficiency and performance of quantum heat engines~\cite{ng2017surpassing,tajima2017finite,bera2021attaining} and more general autonomous thermal machines~\cite{tonner2005autonomous,mitchison2019quantum} working between two baths at different temperatures were studied from a resource-theoretic perspective. However, the approach to general resource theories neglects this very important part of thermodynamics with access to two heat baths. It is true that, from a purely resource-theoretic perspective, having access to two infinite baths at different temperatures in some sense trivialises the theory, as one can then perform infinite amount of work, and so every state transformation becomes possible (at least in the semi-classical regime of energy-incoherent states~\cite{lostaglio2015description}). However, the physics of heat engines is far from being trivial, one just needs to ask different questions. Instead of looking for the amount of work that can be extracted from a given non-equilibrium state, one rather asks about the optimal efficiency of converting heat into work, or about the maximum power of a heat engine. 

We thus propose to investigate the performance of \emph{resource engines}, which generalise the concept of heat engines by replacing the access to two heat baths at different temperatures with two arbitrary constraints on state transformations. More precisely, we consider two agents (traditionally refereed to as Alice and Bob), each of which is facing a different constraint, meaning that each of them can only prepare a subset of free states, $F_A$ and $F_B$, and can only perform quantum operations from a subset of free operations, $\F_A$ and $\F_B$. Now, the idea is to imitate the action of a two-stroke heat engine: instead of subsequently connecting the system to the hot and cold bath, it is sent to Alice and Bob in turns and they can perform any operation on it from their constrained sets $\F_A$ and $\F_B$. Since the free states and operations of Alice will generally be resourceful with respect to Bob's constraints (and vice versa), a number of such communication rounds with local constrained operations (i.e., strokes of a resource engine) may generate  quantum states outside of $F_A$ and $F_B$. Thus, by fusing two resource theories described by $(F_A,\F_A)$ and $(F_B,\F_B)$, one can obtain a new resource theory with free operations $\F_{AB}\supseteq \F_A \cup \F_B$ and free states $F_{AB}\supseteq F_A \cup F_B$.

A number of natural questions then arise. First, can a~resource engine defined by given two constraints generate a full set of quantum operations, or at least approach every element of this set arbitrarily well with the number of strokes going to infinity? Alternatively, can it achieve all possible final states starting from states belonging to $F_A$ or $F_B$? If the answer to these questions is yes, then can we bound the number of strokes needed to generate every operation or state? And if there exists a state that is maximally resourceful with respect to both Alice's and Bob's constraints, what is the minimal number of strokes needed to create it? Note that given that each stroke takes a fixed amount of time, this effectively corresponds to studying the optimal power of a~resource engine. One can also ask about the equivalent of engine's efficiency. Namely, whenever Bob gets a state from Alice and transforms it using an operation from $\F_B$, he necessarily decreases the resource content of the state with respect to his constraint, but may increase it with respect to Alice's constraint. Thus, one may investigate the optimal trade-off, i.e., the efficiency of transforming his resource into Alice's resource.

In this paper, we start with our resource-theoretic perspective on standard heat engines in Sec.~\ref{sec:thermo}, where Alice and Bob are constrained to having access to heat baths at different temperatures. More precisely, in Sec.~\ref{sec:thermo_setting}, we recall the necessary notions of thermal operations and thermomajorisation, set the notation, and formally state the main problems of athermality engines we want to investigate. Next, in Sec.~\ref{sec:thermo_qubit}, we fully solve these problems for an elementary example of a two-level system. Section~\ref{sec:thermo:bounding} contains our main results for athermality engines with arbitrary $d$-dimensional systems, where we lower and upper bound the set of achievable states $F_{AB}$, as well as find its exact form in the limit of the hotter bath having infinite temperature. We then proceed to Sec.~\ref{sec:coherence} that is devoted to the concept of unitary coherence engines, where Alice and Bob are constrained to only performing unitary operations diagonal in their fixed bases, so that coherence with respect to these bases is a~resource for them. First, in Sec.~\ref{sec:coherence_setting}, we set the scene by recalling the necessary formalism, fixing the notation and stating the problems. Next, as in the athermality case, we fully solve these problems for the simplest case of a two-level system in Sec.~\ref{sec:coherence_qubit}. In Sec.~\ref{sec:coherence_operations}, we then derive and discuss the conditions under which the full set of unitary operations can be performed jointly by Alice and Bob, i.e., when $\F_{AB}$ becomes the full set of unitary operations. We analyse the number of strokes $N$ needed to get all these operations in Sec.~\ref{sec:coherence_bounds}, presenting both lower and upper bounds for $N$. Finally, in Sec.~\ref{sec:coherence_optimal}, we discuss the problem of using the resource engine to produce a state that is simultaneously maximally resourceful for both Alice and Bob. We end the paper in Sec.~\ref{sec:outlook}, where we discuss relations between resource engines and known problems within quantum information, and outline opportunities for future research.


\section{Athermality engine}
\label{sec:thermo}


\subsection{Setting the scene}
\label{sec:thermo_setting}

In the resource-theoretic approach to thermodynamics~\cite{horodecki2013fundamental}, one assumes there is a single agent $A$ that has access to a thermal heat bath at inverse temperature $\alpha=1/(k T_A)$, with $k$ denoting the Boltzmann constant, and is allowed to unitarily couple the system of interest to the bath in an energy-conserving way. More formally, given a quantum system in a state $\rho$ and described by a Hamiltonian $H$, the set of operations $\F_A$ that the agent can perform consists of \emph{thermal operations} $\E$ with respect to $\alpha$ defined by~\cite{janzing2000thermodynamic,horodecki2013fundamental}
\begin{equation}
    \label{eq:thermal_op} 
    \E(\rho)=\trr{E}{U(\rho\otimes\gamma_E)U^\dagger},
\end{equation}
where 
\begin{equation}
    \label{eq:gibbs}
    \gamma_E=\frac{\exp(-\alpha H_E)}{\tr{\exp(-\alpha H_E)}}
\end{equation}
is a thermal Gibbs state of the environment described by an arbitrary Hamiltonian $H_E$, and $U$ is a joint system-bath unitary preserving the total energy, i.e.,
\begin{equation}
    [U,H\otimes \iden_E + \iden\otimes H_E]=0.
\end{equation}
The central question is then given by the following interconversion problem: which final states $\sigma$ are achievable from a given initial state $\rho$ via thermal operations? Note that the only free state that any state can be mapped to is the system's thermal Gibbs state $\gamma$ (defined by Eq.~\eqref{eq:gibbs} with $H_E$ replaced by $H$), and all other states are treated as resources. In other words, the free set $F_A$ consists of a single element $\gamma$.

Here, we will focus on the quasi-classical version of the interconversion problem with initial and final states of the system commuting with $\gamma$. This case is usually referred to as incoherent thermodynamics, since one then focuses on interconversion between states that do not have quantum coherence between different energy eigensectors. Instead of representing quantum states as density matrices of size $d$, one can then use a simpler representation using $d$-dimensional probability distributions describing occupations of different energy levels. We will thus denote the initial, final and Gibbs states by probability vectors $\v{p}$, $\v{q}$ and $\v{\gamma}$, respectively. 

The main reason why we restrict to the incoherent setting is because then the necessary and sufficient conditions for the existence of a thermal operation mapping $\v{p}$ to $\v{q}$ are known to be fully characterised by a thermomajorisation condition $\v{p}\succ_{\v{\gamma}} \v{q}$~\cite{horodecki2013fundamental} (also known as $d$-majorisation~\cite{ruch1980generalization}). For a full review of this subject, we refer the reader to Ref.~\cite{lostaglio2017thermodynamic}, while in Appendix~\ref{app:thermo} we summarise the bits of the theory necessary for our purposes. Here, we only note that a thermomajorisation curve (also known as the Lorenz curve) of $\v{p}$ with respect to $\v{\gamma}$ is a piece-wise linear and concave curve on a plane starting at the origin and consisting of segments with the horizontal $x$ length given by $\gamma_i$ and vertical $y$ lengths given by $p_i$ (note that concavity of the curve enforces a particular ordering of these segments). Then, we say that $\v{p}$ thermomajorises $\v{q}$, $\v{p}\succ_{\v{\gamma}} \v{q}$, if and only if the thermomajorisation curve of $\v{p}$ lies above that of $\v{q}$ everywhere. 

Now, our aim is to study a modified setting with two agents, $A$ and $B$, that can exchange the processed system between each other, and with each of the agents being constrained to only performing thermal operation with respect to their (unequal) temperatures\footnote{Note the difference with a typical approach to autonomous thermal machines, where the system is constantly connected to two baths, but the interaction with them is constrained (fixed). Here, the system interacts only with one bath at a time, but its interaction is unconstrained (i.e., any thermal operation on the system can be performed).}. More precisely, we consider the set of free operations $\F_A$ to be given by thermal operations with inverse temperature $\alpha$ and the corresponding free state given by $\v{\gamma}$ with
\begin{equation}
    \gamma_k = \frac{e^{-\alpha E_k}}{Z_\alpha},\qquad Z_\alpha= \sum_{i=1}^d e^{-\alpha E_{i}},
\end{equation}
where $\{ E_i\}$ denotes the energy levels of the system; and the free set of operations $\F_B$ to be given by thermal operations with inverse temperature $\beta<\alpha$ and the corresponding free state given by $\v{\Gamma}$ with
\begin{align}
    \Gamma_k &= \frac{e^{-\beta E_k}}{Z_\beta},\qquad Z_\beta= \sum_{i=1}^d e^{-\beta E_{i}}.
\end{align}

The main question that we will investigate is: what is the resulting set $F_{AB}$ of achievable states if each of the agents is constrained to their own set of free states and operations (so that together they can generate any operation from the set $\F_{AB}$)? In other words, we look for all states $\v{p}^{(N-1)}$ (and their convex combinations via randomised strategies) that can be obtained via a sequence of thermomajorisations either from $\v{\gamma}$ or $\v{\Gamma}$, i.e., that satisfy one of the following: 
\begin{subequations}
\begin{align}
    \v{\gamma}&=:\v{p}^{(0)} \succ_{\v{\Gamma}} \v{p}^{(1)} \succ_{\v{\gamma}} \v{p}^{(2)} \succ_{\v{\Gamma}} \v{p}^{(3)} \succ_{\v{\gamma}}\dots \v{p}^{(N-1)},\\
    \v{\Gamma} &=:\v{p}^{(0)}\succ_{\v{\gamma}} \v{p}^{(1)} \succ_{\v{\Gamma}} \v{p}^{(2)} \succ_{\v{\gamma}} \v{p}^{(3)} \succ_{\v{\Gamma}}\dots \v{p}^{(N-1)},
\end{align}
\end{subequations}
for a given value of $N$ (set of states $F_{AB}^{(N)}$ achievable after $N$ strokes) or for $N\to\infty$ (set $F_{AB}$ of all achievable states). Note that in the investigated setting we assume that the agents exchange the total system that is fixed, so that all ancillary systems (like a battery) need to be explicitly modelled and the system's Hilbert space does not change. More generally, once the free set $F_{AB}$ resulting from fusing two resource theories is known, one can also investigate allowed resource transformations in this new theory. In other words, on can ask what final states can be achieved via $\F_{AB}$ when the initial state is outside of $F_{AB}$.


\subsection{Elementary qubit example}
\label{sec:thermo_qubit}

We start with the simplest case of a two-level system that will serve us as an example to illustrate problems at hand. The two thermal states are then simply given by
\begin{equation}
    \v{\Gamma}=(\Gamma,1-\Gamma),\qquad \v{\gamma}=(\gamma,1-\gamma),
\end{equation}
with $\gamma \geq \Gamma$. In what follows, we will denote a thermal state by $\v{g}=(g,1-g)$, without specifying whether $\v{g}=\v\gamma$ or $\v{g}=\v\Gamma$. The set of all states achievable from a given state $\v{p}$ via thermal operations (i.e., the set of states $\v{q}$ such that $\v{p}\succ_{\v{g}} \v{q}$) is convex and its extremal points have been characterised in Ref.~\cite{lostaglio2018elementary} by studying the thermomajorisation order (see Lemma 12 therein). For $d=2$, there are just two extremal states achievable from~$\v{p}$: the state $\v{p}$ itself and
\begin{equation}
    \v{q} = \Pi_{\v{g}} \v{p},\qquad \Pi_{\v{g}}=\begin{pmatrix}
    1-\frac{1-g}{g} &1\\
    \frac{1-g}{g}& 0
    \end{pmatrix}.
\end{equation}

Importantly, applying two such non-trivial extremal transformations $\Pi_{\v{g}}$ in a row with respect to the same $\v{g}$ does not produce an extremal point. It is thus clear that the extremal states achievable from~$\v{p}$ by two agents after $N=2m+k$ strokes will have one of the following two forms
\begin{equation}
    \label{eq:qubit_extremal}
    \Pi_{\v{\Gamma}}^k(\Pi_{\v{\gamma}} \Pi_{\v{\Gamma}})^m \v{p},\qquad \Pi_{\v{\gamma}}^k(\Pi_{\v{\Gamma}} \Pi_{\v{\gamma}})^m \v{p},
\end{equation}
with $k\in \{0,1\}$ and $m\in\mathbb{N}$. We then note that the fixed points of the composition of extremal maps are given by (see Fig.~\hyperref[fig:qubit:thermo]{\ref{fig:qubit:thermo}a}):
\begin{subequations}
    \begin{align}
        (\Pi_{\v{\gamma}} \Pi_{\v{\Gamma}}) \tilde{\v{\gamma}}&=\tilde{\v{\gamma}}=(\tilde{\gamma},1-\tilde{\gamma}), \qquad \tilde{\gamma} = \frac{(2\gamma-1)\Gamma}{\Gamma+\gamma-1},\label{eq:gamma_tilde}\\
        (\Pi_{\v{\Gamma}} \Pi_{\v{\gamma}}) \tilde{\v{\Gamma}}&=\tilde{\v{\Gamma}}=(\tilde{\Gamma},1-\tilde{\Gamma}), \qquad {\tilde{\Gamma}} = \frac{(2\Gamma-1)\gamma}{\Gamma+\gamma-1}.\label{eq:Gamma_tilde}
    \end{align}
\end{subequations}
Note that
\begin{equation}
    \Pi_{\v{\Gamma}} \tilde{\v{\gamma}} = \tilde{\v{\Gamma}},\qquad \Pi_{\v{\gamma}} \tilde{\v{\Gamma}} = \tilde{\v{\gamma}},
\end{equation}
so that the fixed points can be mapped between each other by setting $k=1$ in Eq.~\eqref{eq:qubit_extremal}. 
Moreover, it is a~straightforward calculation to show that as $m$ grows, these fixed points are approached exponentially in $m$ inside the probability simplex, always from the side of the initial point as illustrated in Figs.~\hyperref[fig:qubit:thermo]{\ref{fig:qubit:thermo}b-\ref{fig:qubit:thermo}c}. More precisely, for every \mbox{$\v{p}=(p,1-p)$} and $m$ we have
\begin{equation}
    (\Pi_{\v{\gamma}} \Pi_{\v{\Gamma}})^m \v{p} = (q,1-q),\quad
    (\Pi_{\v{\Gamma}} \Pi_{\v{\gamma}})^m \v{p} = (r,1-r)
\end{equation}
with
\begin{subequations}
\begin{align}
    q&=\left(\frac{(1-\gamma)(1-\Gamma)}{\gamma\Gamma}\right)^m(p-\tilde{\gamma})+\tilde{\gamma},\\
    r&=\left(\frac{(1-\gamma)(1-\Gamma)}{\gamma\Gamma}\right)^m(p-\tilde{\Gamma})+\tilde{\Gamma},
\end{align}
\end{subequations}
which allows one to completely characterise the set $F_{AB}^{(N)}$ of states achievable after $N$ strokes. 

\begin{figure}[t]
\centering
\includegraphics[width=\columnwidth]{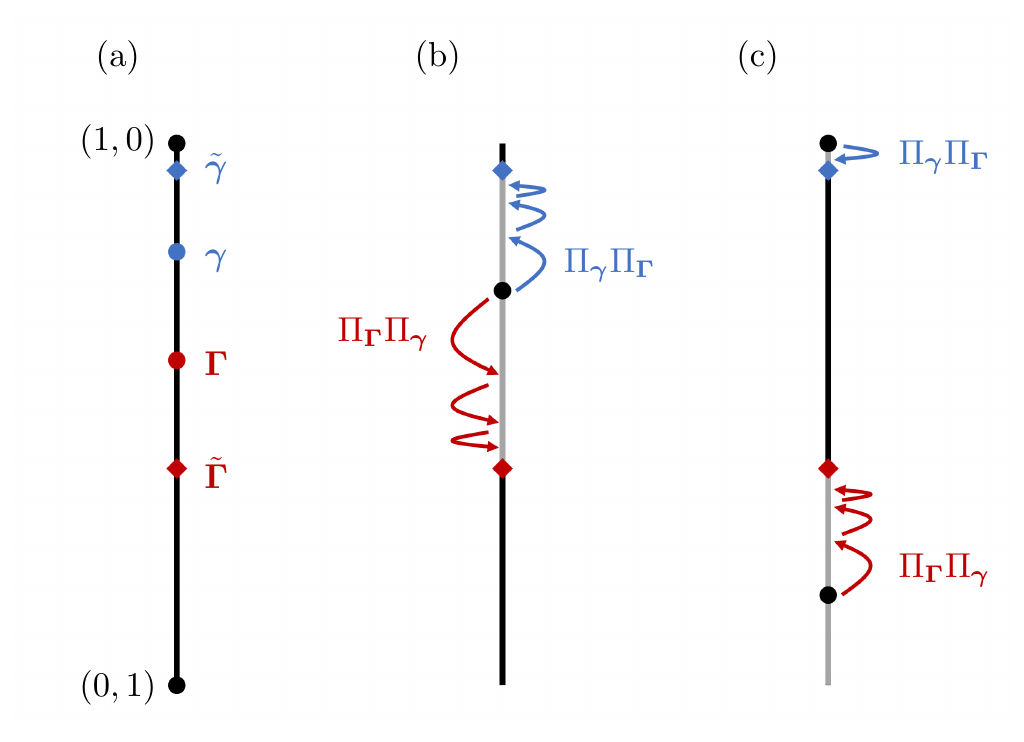}
\caption{\label{fig:qubit:thermo}\textbf{Two-level athermality engine.} (a) The space of incoherent states of a two-level system is given by a one-dimensional simplex with extremal points $(1,0)$ and $(0,1)$, corresponding to the ground and excited states, respectively. Thermal states with respect to the cold and hot temperatures are indicated by $\v{\gamma}$ and $\v{\Gamma}$, whereas fixed points of transformations $\Pi_{\v{\gamma}}\Pi_{\v{\Gamma}}$ and $\Pi_{\v{\Gamma}}\Pi_{\v{\gamma}}$ are denoted by $\tilde{\v{\gamma}}$ and $\tilde{\v{\Gamma}}$. Crucially, $\tilde{\v{\gamma}}$ ($\tilde{\v{\Gamma}}$) is always closer to the ground (excited) state than ${\v{\gamma}}$ (${\v{\Gamma}}$) is. (b) For initial states lying in the simplex between the fixed points $\tilde{\v{\gamma}}$ and $\tilde{\v{\Gamma}}$, the transformations $(\Pi_{\v{\gamma}}\Pi_{\v{\Gamma}})^m$ and $(\Pi_{\v{\Gamma}}\Pi_{\v{\gamma}})^m$ bring the initial state towards these fixed points ``from the inside''. (c). For initial states lying in the simplex outside the segment connecting the fixed points $\tilde{\v{\gamma}}$ and $\tilde{\v{\Gamma}}$, the transformations $(\Pi_{\v{\gamma}}\Pi_{\v{\Gamma}})^m$ and $(\Pi_{\v{\Gamma}}\Pi_{\v{\gamma}})^m$ bring the initial state towards these fixed points ``from the outside''.}
\end{figure}

Given all of the above, it is then clear that when the initial state $\v{p}$ belongs to 
\begin{equation}
    \label{eq:qubit:thermo:free}
    F_{AB}=\{\v{f}:~ \v{f}= (1-\lambda) \tilde{\v{\gamma}} +\lambda\tilde{\v{\Gamma}}~\mathrm{with}~0\leq\lambda\leq 1\},
\end{equation}
then the set of achievable states after arbitrarily large number of strokes  is given by $F_{AB}$. On the other hand, if $\v{p}\notin F_{AB}$ then there are two cases. If $\v{p}$ is closer to $\tilde{\v{\gamma}}$ in total variation distance than it is to $\tilde{\v{\Gamma}}$, then the set of achievable states is given by
\begin{equation}
    \label{eq:qubit:thermo:resource1}
    \!\! R_{\v{\Gamma}}(\v{p})=\{\v{r}:~ \v{r}= (1-\lambda) \v{p} + \lambda\Pi_{\v{\Gamma}}\v{p}~\mathrm{with}~0\leq\lambda\leq 1\},
\end{equation}
Otherwise, it is given by
\begin{equation}
    \label{eq:qubit:thermo:resource2}
    \!\! R_{\v{\gamma}}(\v{p})=\{\v{r}:~ \v{r}= (1-\lambda) \v{p} + \lambda\Pi_{\v{\gamma}}\v{p}~\mathrm{with}~0\leq\lambda\leq 1\}.
\end{equation}

Since both sets of original free states, $F_A=\{\v{\gamma}\}$ and $F_B=\{\v{\Gamma}\}$, are subsets of $F_{AB}$, we see that starting from either $\v{\gamma}$ or $\v{\Gamma}$ one can achieve any state belonging to $F_{AB}$. We thus conclude that $F_{AB}$ can be interpreted as the free set of the new resource theory that arises from fusing two resource theories of quantum thermodynamics with different temperatures. Moreover, any state $\v{p}$ outside $F_{AB}$ can be considered as a resource and its transformations are governed by Eqs.~\eqref{eq:qubit:thermo:resource1}-\eqref{eq:qubit:thermo:resource2}, which can be used to rigorously order states according to their resourcefulness. Finally, we want to emphasise that unless $\Gamma=1/2$ (which corresponds to the hot bath being at infinite temperature), the set $F_{AB}$ is not a full probability simplex. This means that fusing two thermodynamic resource theories results in a new non-trivial resource theory. 


\subsection{Bounding the set of achievable qudit states}
\label{sec:thermo:bounding}

Beyond the simple qubit case, the problem of fully characterising the set of achievable states $F_{AB}$ quickly becomes intractable due to the scaling of the number of conditions one needs to verify. More precisely, during one stroke, a given initial state $\v{p}$ can be generally transformed to $d!-1$ other  extremal states~\cite{lostaglio2018elementary}. Each of these states can then itself be an initial state of a system during the subsequent stroke, and so we see that, as the number $N$ of strokes grows, the number of states one needs to consider explodes as $(d!-1)^N$. Numerically, one can reduce the number of extremal states that need to be tracked by taking their convex hull after each stroke, and keeping only the extremal points thereof. This allows us to numerically consider large $N$ for $d\in\{3,4\}$, as we later discuss. 

However, we would also like to get some analytic insight into the set of states $F_{AB}$ achievable with the considered resource engine. Thus, in this section we first prove that there is an upper bound on $F_{AB}$, i.e., no matter how many strokes $N$ we allow for, there exist states that cannot be produced by a resource engine. Then, we provide an analytically simple construction of the lower bound of $F_{AB}$, i.e., we characterise a polytope of states that forms a subset of $F_{AB}$. Finally, we prove that for one bath with finite temperature and the other in the limit of infinite temperature, $\alpha>0$ and $\beta=0$, the full set of states becomes achievable, i.e., $F_{AB}$ is the whole probability simplex. 


\subsubsection{Upper bound}
\label{sec:thermo_upper}

In this section we will generalise the qubit result, showing that the set of free states $F_{AB}$ for a resource theory arising from fusing two resource theories of thermodynamics with different temperatures is generally a proper subset of the probability simplex. To achieve this, we will prove the following slightly more general theorem that constrains the possible final populations of the highest excited state under transformations $\F_{AB}$. 

\begin{thm}
    \label{thm:thermo_upper}
    Given an initial incoherent state $\v{p}$ and a~set of operations $\F_{AB}$ arising from fusing two resource theories of thermodynamics with different temperatures, all achievable final states $\v{q}$ satisfy
    \begin{equation}
        q_d \leq M(\v{p}):=\max \{p_d, \Gamma_d/\Gamma_{d-1}\},
    \end{equation}
    with $\v{\Gamma}$ denoting the thermal state of the system with respect to the higher temperature. Thus, $M(\v{p})$ is a monotone of the resulting resource theory.
\end{thm}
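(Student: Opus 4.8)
The key observation is that the quantity $q_d$—the population of the highest excited state—behaves monotonically under each individual stroke, except that a single thermal operation can *increase* it, but only up to a fixed ceiling determined by the relevant Gibbs state. So the plan is to first establish a one-stroke lemma: if $\v{q}$ is thermomajorised by $\v{p}$ with respect to a Gibbs vector $\v{g}$ (either $\v{\gamma}$ or $\v{\Gamma}$), then $q_d \leq \max\{p_d, g_d/g_{d-1}\}$. Granting this, the theorem follows by induction on the number of strokes $N$: writing $M_{\v g}(\v p) := \max\{p_d, g_d/g_{d-1}\}$, one checks that $g_d/g_{d-1} = e^{-\alpha(E_d - E_{d-1})}$ is increasing in temperature, so $\gamma_d/\gamma_{d-1} \leq \Gamma_d/\Gamma_{d-1}$; hence each stroke—whether performed by Alice or Bob—keeps $q_d$ bounded by $\max\{p_d, \Gamma_d/\Gamma_{d-1}\}$, because once $q_d$ drops at or below $\Gamma_d/\Gamma_{d-1}$ it can never exceed that value again, and it starts at $p_d$. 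Convex combinations (randomised strategies) preserve the bound since $M(\v p)$ is an upper bound on a coordinate and the coordinate is linear. Monotonicity of $M$ under $\F_{AB}$ is then immediate: $M(\v q) = \max\{q_d, \Gamma_d/\Gamma_{d-1}\} \leq \max\{M(\v p), \Gamma_d/\Gamma_{d-1}\} = M(\v p)$ since $p_d \leq M(\v p)$ already forces $\Gamma_d/\Gamma_{d-1} \leq M(\v p)$ only when... — more carefully, $M(\v q) \leq \max\{\max\{p_d, \Gamma_d/\Gamma_{d-1}\}, \Gamma_d/\Gamma_{d-1}\} = M(\v p)$.

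The heart of the argument—and the step I expect to be the main obstacle—is the one-stroke lemma. The clean way to see it uses the thermomajorisation (Lorenz) curve. The Lorenz curve of $\v q$ with respect to $\v g$ lies below that of $\v p$, both are concave, start at the origin, and end at $(1,1)$. The population $q_d$ of the highest excited state is the $y$-length of one particular segment of $\v q$'s curve, namely the segment of horizontal width $g_d$; crucially, concavity forces the segments to be ordered by decreasing slope, and one must identify *where* the width-$g_d$ segment sits. I would argue: either this segment is the last one (rightmost), in which case its slope is the smallest, so $q_d/g_d \leq$ (average slope over the last portion) and comparing with the curve of $\v p$ near the endpoint $(1,1)$ gives $q_d \leq$ something controlled by $p_d$; or it is not last, in which case the segment(s) to its right have even smaller slope, and since $\v q$'s curve must still reach $(1,1)$ while staying concave and below $\v p$, the slope of the width-$g_d$ segment cannot exceed the slope of the final segment of $\v p$'s curve, which is at most $p_d/\gamma'$ for the corresponding Gibbs weight—this is where the ratio $g_d/g_{d-1}$ enters, since the two smallest Gibbs weights are $g_d$ and $g_{d-1}$ (assuming $E_d > E_{d-1} \geq \dots$, so $\gamma_d$ is the smallest entry of the Gibbs vector).

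An alternative and perhaps more robust route to the one-stroke lemma is algebraic rather than geometric: use the fact (from Ref.~\cite{lostaglio2018elementary}, Lemma 12) that the extremal states reachable from $\v p$ in one stroke are obtained by applying particular stochastic ``$\beta$-swap''-type matrices, and bound the last coordinate of each such extremal image directly; then extend to the convex hull by linearity. I would try the Lorenz-curve argument first for elegance, and fall back on the explicit extremal-map computation if the case analysis on segment ordering becomes unwieldy. The subtlety to watch is the degenerate case $E_{d-1} = E_d$ (where $g_{d-1} = g_d$ and the bound reads $q_d \leq \max\{p_d, 1\} = 1$, trivially true) and, more importantly, ensuring the labelling convention (which energy level is ``highest'', hence which Gibbs weight is smallest) is used consistently so that the width-$g_d$ segment is indeed the candidate for the flattest slope.
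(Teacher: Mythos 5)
Your overall plan—prove a one-stroke bound $q_d\leq\max\{p_d,g_d/g_{d-1}\}$ for $\v{p}\succ_{\v{g}}\v{q}$, note that $\gamma_d/\gamma_{d-1}\leq\Gamma_d/\Gamma_{d-1}$, and conclude $M(\v{p})=\max\{p_d,\Gamma_d/\Gamma_{d-1}\}$ is a per-stroke monotone—is sound and is essentially the paper's argument, modularized slightly more cleanly (the paper bounds $q_d$ after a two-stroke cold-then-hot interaction before iterating, whereas your one-stroke-then-iterate works precisely because of the temperature monotonicity of $g_d/g_{d-1}$).

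However, the Lorenz-curve case analysis you sketch for the one-stroke lemma has errors in both branches. In the ``last segment'' branch, comparing $L_{\v{q}}$ with $L_{\v{p}}$ near $(1,1)$ gives $1-q_d=L_{\v{q}}(1-g_d)\leq L_{\v{p}}(1-g_d)$, which is a \emph{lower} bound on $q_d$, not the upper bound you want; what actually works there is concavity of $L_{\v{q}}$ alone: if $q_d/g_d$ is minimal, then $q_i\geq g_i\, q_d/g_d$ for all $i$, and summing yields $q_d\leq g_d\leq g_d/g_{d-1}$ with no reference to $\v{p}$. In the ``not last'' branch, the claim that the slope of the width-$g_d$ segment ``cannot exceed the slope of the final segment of $\v{p}$'s curve'' is false in general (a non-last segment of $L_{\v{q}}$ has one of the \emph{larger} slopes, while the final segment of $L_{\v{p}}$ has the smallest). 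The dichotomy you actually want is whether $q_d>g_d/g_{d-1}$: if not, you are done trivially; if so, then $q_d/g_d>1/g_{d-1}\geq 1/g_j\geq q_j/g_j$ for every $j\neq d$, so level $d$ is necessarily \emph{first} in $\v{q}$'s $\beta$-order, whence $q_d=L_{\v{q}}(g_d)\leq L_{\v{p}}(g_d)=g_d\max_i(p_i/g_i)$, and a short check (splitting on whether the maximizing $i$ equals $d$) gives $L_{\v{p}}(g_d)\leq\max\{p_d,g_d/g_{d-1}\}$. The paper sidesteps this altogether by first replacing $\v{p}$ with $\bar{\v{p}}=(0,\dots,0,1-p_d,p_d)$, whose Lorenz curve has only three elbow points, which is why its Eqs.~\eqref{eq:thermo_maxmin1}--\eqref{eq:thermo_maxmin2} drop out immediately; your fallback via the extremal maps of Ref.~\cite{lostaglio2018elementary} would also do the job.
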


\begin{proof}

We start by introducing the following notation. For a set of $d$-dimensional incoherent states $X$ and a~thermal state $\v{\gamma}$, we denote by $\T_{\v{\gamma}}(X)$ a convex hull of all states that can be obtained from the elements of $X$ via thermal operations with a thermal state $\v{\gamma}$, i.e., 
\begin{equation}
    \T_{\v{\gamma}}(X):= \conv{\{\v{q}~|~\exists~\v{p}\in X:~\v{p}\succ_{\v{\gamma}}\v{q}\}}.
\end{equation}
We will use analogous notation for the thermal state given by $\v{\Gamma}$. Moreover, for any incoherent state $\v{p}$, we denote by $\bar{\v{p}}$ its version transformed in the following way:
\begin{equation}
    \bar{\v{p}} = (0,\dots,0,1-p_d,p_d)
\end{equation}
The crucial thing about this transformation is that
\begin{equation}
    \label{eq:thermo_app1}
    \v{p}\succ_{\v{\gamma}} \v{q} \Rightarrow \bar{\v{p}}\succ_{\v{\gamma}} \v{q} \quad \mathrm{and}\quad\v{p}\succ_{\v{\Gamma}} \v{q} \Rightarrow \bar{\v{p}}\succ_{\v{\Gamma}} \v{q},
\end{equation}
which is a simple consequence of the thermomajorisation order (see Appendix~\ref{app:thm1} for details).

With the introduced notation, it is then clear that:
\begin{subequations}
    \begin{align}
    \label{eq:cone_simplification1}
    \T_{\v{\gamma}}(\T_{\v{\Gamma}}(\v{p})) \subset \T_{\v{\gamma}}(\T_{\v{\Gamma}}(\bar{\v{p}}))&=\T_{\v{\gamma}}\left(\left\{ \v{q}~|~ \v{q}\in \T_{\v{\Gamma}}(\bar{\v{p}}) \right\}\right)\\
    &\subset \T_{\v{\gamma}}\left(\left\{ \bar{\v{q}}~|~ \v{q}\in \T_{\v{\Gamma}}(\bar{\v{p}}) \right\}\right),\label{eq:cone_simplification2}
\end{align}
\end{subequations}
where, for brevity, single-element sets $\{\v{p}\}$ and $\{\bar{\v{p}}\}$ are denoted by $\v{p}$ and $\bar{\v{p}}$, respectively. Using the definition of thermomajorisation, one can also show for $d\geq 3$ that (see Appendix~\ref{app:thm1} for details):
\begin{subequations}
    \begin{align}
    \!\!\!\!\max_{\v{q}}\{q_d|\v{q}\in \T_{\v{g}}(\bar{\v{r}})\}\!&=\!\left\{\!\begin{array}{cc}
         \!\! r_d& \mathrm{for~} r_d\geq \frac{g_d}{g_{d-1}+g_d},\!  \\
        \!\!\frac{g_d(1-r_d)}{g_{d-1}}& \mathrm{for~} r_d\leq \frac{g_d}{g_{d-1}+g_d},\!
    \end{array}\right.\label{eq:thermo_maxmin1}\\
    \!\!\!\!\min_{\v{q}}\{q_d|\v{q}\in\T_{\v{g}}(\bar{\v{r}})\}\!&=\!0,\label{eq:thermo_maxmin2}
\end{align}
\end{subequations}
for $\v{g}$ given by either $\v{\gamma}$ or $\v{\Gamma}$.

Using Eqs.~\eqref{eq:cone_simplification1}-\eqref{eq:cone_simplification2} and \eqref{eq:thermo_maxmin1}-\eqref{eq:thermo_maxmin2}, we can now upper bound the final population $q_d$ of the highest energy level after arbitrary one sequential interaction with two baths:
\begin{subequations}
    \begin{align}
    &\!\!\!\!\! \max_{\v{q}}\{q_d~|~\v{q}\in\T_{\v{\gamma}}(\T_{\v{\Gamma}}(\v{p})) \}\\
    &\leq    \max_{\v{q}}\{q_d~|~\v{q}\in\T_{\v{\gamma}}(\{\bar{\v{r}}~|~\v{r}\in\T_{\v{\Gamma}}(\bar{\v{p}})\})\}\\
    &=\max_{\v{q}} \max_{\v{r}\in\T_{\v{\Gamma}}(\bar{\v{p}})}\{q_d~|~\v{q}\in\T_{\v{\gamma}}(\bar{\v{r}})\}\\
    &= \max_{\v{r}\in\T_{\v{\Gamma}}(\bar{\v{p}})}\max_{\v{q}} \{ q_d~|~\v{q}\in\T_{\v{\gamma}}(\bar{\v{r}})\}\\
    &=\max_{\v{r}\in\T_{\v{\Gamma}}(\bar{\v{p}})} \left\{\begin{array}{cc}
         r_d& \mathrm{for~} r_d\geq \frac{\gamma_d}{\gamma_{d-1}+\gamma_d}  \\
        \frac{\gamma_d(1-r_d)}{\gamma_{d-1}}& \mathrm{for~} r_d\leq \frac{\gamma_d}{\gamma_{d-1}+\gamma_d}
    \end{array}\right.\\
    &\leq \max \left\{ \max_{\v{r}\in\T_{\v{\Gamma}}(\bar{\v{p}})} r_d, \frac{\gamma_d}{\gamma_{d-1}}(1-\min_{\v{r}\in\T_{\v{\Gamma}}(\bar{\v{p}})} r_d)  \right\}\\
    &\leq \max \left\{p_d, (1-p_d)\frac{\Gamma_d}{\Gamma_{d-1}},\frac{\gamma_d}{\gamma_{d-1}}\right\}\\
    &\leq \max \left\{p_d,\frac{\Gamma_d}{\Gamma_{d-1}},\frac{\gamma_d}{\gamma_{d-1}}\right\}=\max \left\{p_d,\frac{\Gamma_d}{\Gamma_{d-1}}\right\}.
\end{align}
\end{subequations}
Since after one sequential interaction the final population of the highest energy level is bounded by either the initial population of this level or by a constant, we conclude that the same bound holds after arbitrarily many repetitions (strokes).
\end{proof}
Using the above theorem it is then straightforward to prove the following.
\begin{cor}[Upper bound on $F_{AB}$]
    \label{cor:thermo_upper}
    The set $F_{AB}$ of free states arising from fusing two resource theories of thermodynamics with two different temperatures is bounded by
    \begin{equation}
        \forall \v{p}\in F_{AB}:\quad p_d \leq \frac{\Gamma_d}{\Gamma_{d-1}},
    \end{equation}
    where $\v{\Gamma}$ denotes the thermal state corresponding to higher temperature. 
\end{cor}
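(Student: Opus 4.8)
\emph{Proof plan.} The corollary is an essentially immediate consequence of Theorem~\ref{thm:thermo_upper}, so the plan is simply to evaluate the monotone $M$ on the only two states that can seed the engine. By construction, $F_{AB}$ consists of all states reachable from $\v{\gamma}$ or from $\v{\Gamma}$ by alternating thermal operations with respect to the two temperatures, together with their convex combinations obtained via randomised strategies. Applying Theorem~\ref{thm:thermo_upper} with the initial state $\v{g}\in\{\v{\gamma},\v{\Gamma}\}$ therefore already yields, for every deterministically reachable $\v{q}$, the bound $q_d\le M(\v{g})=\max\{g_d,\Gamma_d/\Gamma_{d-1}\}$; it then suffices to check that $M(\v{\gamma})=M(\v{\Gamma})=\Gamma_d/\Gamma_{d-1}$.

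For $\v{g}=\v{\Gamma}$ this is immediate, since $\Gamma_{d-1}\le 1$ forces $\Gamma_d\le\Gamma_d/\Gamma_{d-1}$. For $\v{g}=\v{\gamma}$ I would write the two ratios in Boltzmann form, $\gamma_d/\gamma_{d-1}=e^{-\alpha(E_d-E_{d-1})}$ and $\Gamma_d/\Gamma_{d-1}=e^{-\beta(E_d-E_{d-1})}$, and use that $E_d$ is the top energy level together with $\beta<\alpha$ to conclude $\gamma_d/\gamma_{d-1}\le\Gamma_d/\Gamma_{d-1}$; combining this with $\gamma_{d-1}\le 1$ gives $\gamma_d\le\gamma_d/\gamma_{d-1}\le\Gamma_d/\Gamma_{d-1}$, hence $M(\v{\gamma})=\Gamma_d/\Gamma_{d-1}$ as well.

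Finally, since the inequality $p_d\le\Gamma_d/\Gamma_{d-1}$ holds on all deterministically reachable states and defines a half-space, it is stable under convex combinations and therefore also holds on the randomised-strategy states that complete $F_{AB}$; this is the claimed bound. I do not anticipate a genuine obstacle here: the only point requiring a little care is verifying that $\v{\gamma}$ does not itself violate the bound, i.e., that $M$ evaluated on the free states collapses to the temperature-independent constant $\Gamma_d/\Gamma_{d-1}$ — which is exactly the monotonicity of $x\mapsto e^{-x(E_d-E_{d-1})}$ invoked above and relies on the convention that the $d$-th level is the highest in energy.
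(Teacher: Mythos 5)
Your proof is correct and matches what the paper intends: the paper leaves this corollary as a "straightforward" consequence of Theorem~\ref{thm:thermo_upper}, and you have supplied exactly the missing steps, namely evaluating $M$ on the two seed states $\v{\gamma}$ and $\v{\Gamma}$ and checking that both equal $\Gamma_d/\Gamma_{d-1}$ via $\gamma_d\leq\gamma_d/\gamma_{d-1}\leq\Gamma_d/\Gamma_{d-1}$ and $\Gamma_d\leq\Gamma_d/\Gamma_{d-1}$. The closing convexity remark is a harmless extra safeguard (the theorem's own proof already works at the level of convex hulls through $\T_{\v{g}}$), so nothing further is needed.
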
 
As a final remark note that, as in the two-level case, when $\v{\Gamma}$ corresponds to a thermal state in the infinite temperature limit, the above bound trivialises. As we will show later, in this limit all states belong to $F_{AB}$.


\subsubsection{Lower bound}
\label{sec:thermo_lower}

In this section we first present a simple analytical construction of a subset of the full set of achievable states $F_{AB}$ for a resource theory arising from fusing two resource theories of thermodynamics with different temperatures. It is based on an iterative application of the modified qubit construction we saw in Sec.~\ref{sec:thermo_qubit}. We then also prove that, when the temperatures of the cold and hot bath satisfy certain conditions, $F_{AB}$ contains non-full-rank states, i.e., the athermality engine can reduce the rank of the initial full-rank thermal state. In particular, we will show when this rank reduction can be maximal, by proving when the ground state belongs to~$F_{AB}$.

We start by introducing a $d$-dimensional generalisations of the states $\tilde{\v{\gamma}}$ and $\tilde{\v{\Gamma}}$ from Eqs.~\eqref{eq:gamma_tilde}-\eqref{eq:Gamma_tilde}. We define them by setting one component as follows, 
\begin{subequations}
    \begin{align}
        \label{def:gamma_tilda_1}
        \tilde{\gamma}_1 = \frac{\Gamma_1(\gamma_1-\gamma_d)}{\Gamma_1(1-\gamma_d)-\Gamma_d(1-\gamma_1)},\\
        \tilde{\Gamma}_d = \frac{\Gamma_d(\gamma_1-\gamma_d)}{\Gamma_1(1-\gamma_d)-\Gamma_d(1-\gamma_1)},
    \end{align}
\end{subequations}
and leaving the remaining components proportional to thermal distribution:
\begin{equation}
    \tilde{\gamma}_k=\frac{1-\tilde{\gamma}_1}{1-\gamma_1}\gamma_k,\quad    \tilde{\Gamma}_k=\frac{1-\tilde{\Gamma}_d}{1-\Gamma_d}\Gamma_k.
\end{equation}
We then have the following simple lemma, the proof of which can be found in Appendix~\ref{app:lem3}.

\begin{lem}
    \label{lem:thermo_ext}
    For $d$-dimensional probability distributions $\v{p}$ and $\v{q}$ satisfying 
    \begin{align}
    \Gamma_d\leq p_d\leq \tilde{\Gamma}_d,\qquad
    \gamma_1\leq q_1\leq \tilde{\gamma}_1,
    \end{align}
    there exist $\v{p}',\v{p}'',\v{q}',\v{q}''$ such that
    \begin{align}    \label{eq:thermomajo_chain}
        \v{p}\succ_{\v{\gamma}} \v{p}' \succ_{\v{\Gamma}}\v{p}'',\qquad
        \v{q}\succ_{\v{\Gamma}} \v{q}' \succ_{\v{\gamma}}\v{q}'',
    \end{align}
    and
    \begin{subequations}
    \begin{align}
        \label{eq:pd_double_prime}
        p_d'' &= \frac{(1-\gamma_1)\Gamma_d}{(1-\gamma_d)\Gamma_1}p_d +\frac{(\gamma_1-\gamma_d)\Gamma_d}{(1-\gamma_d)\Gamma_1},\\
        \label{eq:q1_double_prime}
        q_1'' &= \frac{(1-\gamma_1)\Gamma_d}{(1-\gamma_d)\Gamma_1}q_1 +\frac{(\gamma_1-\gamma_d)}{(1-\gamma_d)},
    \end{align}
    \end{subequations}
    so that $p_d\leq p_d''\leq \tilde{\Gamma}_d$ and $q_1\leq q_1''\leq \tilde{\gamma}_1$. 
\end{lem}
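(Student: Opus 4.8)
The plan is to \emph{construct} the four intermediate states explicitly, as images of $\v p$ and $\v q$ under short compositions of elementary thermal operations, so that the chains in Eq.~\eqref{eq:thermomajo_chain} hold automatically and the components $p_d''$, $q_1''$ can be read off directly. Two building blocks suffice: (a) the partial thermalisation of a chosen subset of energy levels, which replaces the populations on those levels by ones proportional to the corresponding Gibbs weights while keeping their total and all other populations fixed; and (b) the ``$\beta$-swap'' of a pair of levels $\{i,j\}$, i.e.\ the two-level extremal map $\Pi$ of Sec.~\ref{sec:thermo_qubit} acting within that pair and as the identity elsewhere. Both are legitimate thermal operations with respect to the relevant Gibbs state — the $\beta$-swap being valid for the pair $\{1,d\}$ because level $1$, being the ground level, is the Gibbs-heavier one — and since a composition of thermal operations with respect to a fixed Gibbs state is again such an operation, any state produced from $\v p$ (resp.\ $\v q$) in this way is automatically thermomajorised by it.

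For the $\v p$-chain I would let $\v p'$ be obtained from $\v p$ by first partially thermalising levels $1,\dots,d-1$ with respect to $\v\gamma$ and then $\beta$-swapping levels $1$ and $d$ with respect to $\v\gamma$. A short computation with the matrix $\Pi$ shows that the ground population of $\v p'$ is then
\begin{equation*}
p_1' \;=\; \frac{(1-p_d)(\gamma_1-\gamma_d)}{1-\gamma_d}+p_d \;=\; \frac{(1-\gamma_1)p_d+\gamma_1-\gamma_d}{1-\gamma_d}\;=:A\in(0,1].
\end{equation*}
Next let $\v p''$ be obtained from $\v p'$ by partially thermalising levels $2,\dots,d$ with respect to $\v\Gamma$ — which does not affect $p_1'=A$ — and then $\beta$-swapping levels $1$ and $d$ with respect to $\v\Gamma$; this sends the weight $A$ on level $1$ to the weight $(\Gamma_d/\Gamma_1)A$ on level $d$, which is exactly Eq.~\eqref{eq:pd_double_prime}. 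The $\v q$-chain is the mirror construction: $\v q'$ is obtained from $\v q$ by partially thermalising levels $2,\dots,d$ with respect to $\v\Gamma$ (leaving $q_1$ fixed) and then $\beta$-swapping levels $1$ and $d$ with respect to $\v\Gamma$, after which $\v q''$ is obtained from $\v q'$ by partially thermalising levels $1,\dots,d-1$ with respect to $\v\gamma$ and then $\beta$-swapping levels $1$ and $d$ with respect to $\v\gamma$; the analogous bookkeeping of the populations on levels $1$ and $d$ yields Eq.~\eqref{eq:q1_double_prime}.

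For the sandwich bounds, observe that the maps $p_d\mapsto p_d''$ of Eq.~\eqref{eq:pd_double_prime} and $q_1\mapsto q_1''$ of Eq.~\eqref{eq:q1_double_prime} are both affine with the common slope $c=(1-\gamma_1)\Gamma_d/\bigl((1-\gamma_d)\Gamma_1\bigr)$, and a direct check shows their fixed points are $\tilde\Gamma_d$ and $\tilde\gamma_1$, respectively. The inequality $0<c<1$ is equivalent to $\Gamma_1(1-\gamma_d)-\Gamma_d(1-\gamma_1)>0$, i.e.\ to positivity of the denominator in the definitions of $\tilde\gamma_1$ and $\tilde\Gamma_d$; this follows from $\Gamma_1>\Gamma_d$ together with $\gamma_1/\Gamma_1>\gamma_d/\Gamma_d$, the latter because $\gamma_k/\Gamma_k=(Z_\beta/Z_\alpha)\,e^{-(\alpha-\beta)E_k}$ is decreasing in $E_k$. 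Since an affine map with slope in $(0,1)$ carries any point at or below its fixed point to a point that is no lower than the original yet still at or below the fixed point, $\Gamma_d\le p_d\le\tilde\Gamma_d$ forces $p_d\le p_d''\le\tilde\Gamma_d$ and $\gamma_1\le q_1\le\tilde\gamma_1$ forces $q_1\le q_1''\le\tilde\gamma_1$.

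What takes care rather than ingenuity is the population bookkeeping of the middle paragraph: one has to verify that thermalising the ``bulk'' levels and then performing a single $\{1,d\}$ $\beta$-swap really does yield the stated values $A$ and $(\Gamma_d/\Gamma_1)A$ (and their $\v q$-counterparts), and to double-check that each of the four elementary moves making up a chain is genuinely a thermal operation for the Gibbs state indicated, so that the composite map respects the relevant thermomajorisation order. Note that the hypotheses $\Gamma_d\le p_d$ and $\gamma_1\le q_1$ are not used in the construction itself, which goes through for any admissible populations; they merely single out the region — the one containing the natural starting points $\v\Gamma$ and $\v\gamma$ — that is left invariant by the iteration for which this lemma is the inductive step.
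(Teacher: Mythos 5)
Your proposal is correct and arrives at exactly the same sequence of intermediate states $\check{\v{p}},\v{p}',\check{\v{p}}',\v{p}''$ (and their $\v{q}$-analogues) as the paper's Appendix~\ref{app:lem3}, but the justification of the second half of each stroke and the verification of the sandwich bounds are handled differently. Where the paper thermalises the bulk levels and then argues via the thermomajorisation curve of $\check{\v{p}}$ — invoking $p_d\ge\Gamma_d\ge\gamma_d$ to pin down the $\beta$-order so that the curve has elbows at $(0,0),(\gamma_d,p_d),(1,1)$, then reading off the value at $x=\gamma_1$ and \emph{asserting} the existence of $\v{p}'$ — you instead produce $\v{p}'$ by an explicit Gibbs-preserving stochastic map, the pair-$\{1,d\}$ $\beta$-swap applied after the thermalisation. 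The bookkeeping checks out (the post-swap state has ground population $A=p_1'$, and the remaining levels come out proportional to $\gamma_k$ with factor $(1-p_d)/(1-\gamma_d)=(1-p_1')/(1-\gamma_1)$, so it really is the paper's $\v{p}'$), the swap matrix is stochastic since $\gamma_d/\gamma_1\le1$, and because it is applied blindly its validity does not depend on the input populations; this makes your closing remark correct, in that the lower bounds $\Gamma_d\le p_d$ and $\gamma_1\le q_1$ are consumed by the paper's $\beta$-ordering step but not by your constructive one. For the sandwich, the paper simply states that it follows from the hypotheses, whereas your fixed-point argument — the affine maps $p_d\mapsto p_d''$ and $q_1\mapsto q_1''$ share the slope $c=(1-\gamma_1)\Gamma_d/\big((1-\gamma_d)\Gamma_1\big)\in(0,1)$, with $c<1$ because $\Gamma_1>\Gamma_d$ and $\gamma_k/\Gamma_k\propto e^{-(\alpha-\beta)E_k}$ is decreasing, and their fixed points are precisely $\tilde\Gamma_d$ and $\tilde\gamma_1$ — gives both inequalities at once and has the bonus of making the exponential convergence in Proposition~\ref{prop:tilde_achievable} manifest. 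Net: same construction, but rendered fully explicit, with a slicker final step and a useful observation about which hypotheses are actually load-bearing.
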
    

We can now employ the above lemma to prove that the set of achievable states $F_{AB}$ contains both $\tilde{\v{\gamma}}$ and $\tilde{\v{\Gamma}}$.

\begin{prop}
    \label{prop:tilde_achievable}
    The set $F_{AB}$ of free states arising from fusing two resource theories of thermodynamics with two different temperatures contains the states $\tilde{\v{\gamma}}$ and $\tilde{\v{\Gamma}}$. Moreover, the convergence to these states with the number of strokes $N$ is exponential.
\end{prop}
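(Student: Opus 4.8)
The plan is to reach $\tilde{\v{\Gamma}}$ (resp.\ $\tilde{\v{\gamma}}$) by iterating Lemma~\ref{lem:thermo_ext} starting from $\v{\Gamma}\in F_B$ (resp.\ $\v{\gamma}\in F_A$), both of which are admissible initial states for a resource engine, so it suffices to produce, for every $\varepsilon>0$, a finite sequence of strokes landing $\varepsilon$-close to the target with $O(\log(1/\varepsilon))$ strokes. First I would set $\v{p}^{(0)}:=\v{\Gamma}$, so that $p^{(0)}_d=\Gamma_d$; here one uses that $\Gamma_d\le\tilde{\Gamma}_d$ and $\gamma_1\le\tilde{\gamma}_1$, which follow from $\beta<\alpha$ and are exactly what makes the hypotheses of Lemma~\ref{lem:thermo_ext} non-vacuous. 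Applying the lemma to $\v{p}=\v{p}^{(2k)}$ yields, after a two-stroke block $\v{p}^{(2k)}\succ_{\v{\gamma}}\v{p}'\succ_{\v{\Gamma}}\v{p}^{(2k+2)}$, a state with $\Gamma_d\le p^{(2k+2)}_d\le\tilde{\Gamma}_d$ (so the hypothesis is restored for the next round) and
\begin{equation}
    p^{(2k+2)}_d=a\,p^{(2k)}_d+b,\qquad a:=\frac{(1-\gamma_1)\Gamma_d}{(1-\gamma_d)\Gamma_1},\quad b:=\frac{(\gamma_1-\gamma_d)\Gamma_d}{(1-\gamma_d)\Gamma_1}.
\end{equation}
Since $1-\gamma_d\ge 1-\gamma_1$ and $\Gamma_1\ge\Gamma_d$ (strictly, away from the trivial degenerate case of a flat spectrum), we get $0<a<1$, and a direct check gives $b/(1-a)=\tilde{\Gamma}_d$; hence after $N=2m$ strokes
\begin{equation}
    p^{(2m)}_d-\tilde{\Gamma}_d=a^{m}\bigl(\Gamma_d-\tilde{\Gamma}_d\bigr),
\end{equation}
i.e.\ the highest-level population converges to $\tilde{\Gamma}_d$ exponentially in $N$. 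The symmetric computation with $\v{p}^{(0)}:=\v{\gamma}$ and the second chain $\v{q}\succ_{\v{\Gamma}}\v{q}'\succ_{\v{\gamma}}\v{q}''$ of the lemma drives $q^{(2m)}_1$ to $\tilde{\gamma}_1$ with the same contraction factor.

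To upgrade this to convergence of the whole vector I would use that $\tilde{\v{\Gamma}}$ is, by its definition, the unique state with $d$-th component $\tilde{\Gamma}_d$ and with levels $1,\dots,d-1$ populated proportionally to $\v{\Gamma}$. One can force the iterates into exactly this shape: the map that fully thermalises levels $1,\dots,d-1$ at inverse temperature $\beta$ while leaving level $d$ untouched is a thermal operation with respect to $\v{\Gamma}$ (a partial thermalisation never increases athermality), so by transitivity of thermomajorisation one may post-compose it onto the final $\succ_{\v{\Gamma}}$ step of each two-stroke block without adding a stroke and without changing the $d$-th population, hence without disturbing the recursion above. With every $\v{p}^{(2k)}$ then equal to $\bigl((1-p^{(2k)}_d)\Gamma_1/(1-\Gamma_d),\dots,(1-p^{(2k)}_d)\Gamma_{d-1}/(1-\Gamma_d),p^{(2k)}_d\bigr)$, one gets $\|\v{p}^{(2m)}-\tilde{\v{\Gamma}}\|_1=2\,|p^{(2m)}_d-\tilde{\Gamma}_d|=2a^m|\Gamma_d-\tilde{\Gamma}_d|$, and the analogous construction on levels $2,\dots,d$ gives $\v{q}^{(2m)}\to\tilde{\v{\gamma}}$ exponentially. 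Since $\tilde{\v{\gamma}}$ and $\tilde{\v{\Gamma}}$ are thus limits of states reachable from $\v{\gamma}$ and $\v{\Gamma}$ by finitely many strokes, they belong to $F_{AB}$, with the stated exponential rate.

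I expect the main obstacle to be precisely this last upgrade: making rigorous that the block thermalisation on levels $1,\dots,d-1$ is a bona fide thermal operation (equivalently, that Lemma~\ref{lem:thermo_ext} may be taken to output states of the special $\tilde{\v{\Gamma}}$-shape) and that nothing prevents absorbing it into Bob's stroke; once that is granted, everything reduces to the elementary affine recursion above. A secondary point worth checking is the behaviour of the contraction factor $a$ in near-degenerate regimes ($\beta\to\alpha$, or nearly flat spectra), where $a\to 1$ so that, although convergence is still exponential for fixed temperatures, the rate deteriorates.
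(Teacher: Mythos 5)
Your proof is correct and follows essentially the same route as the paper's: iterate Lemma~\ref{lem:thermo_ext} starting from $\v{\Gamma}$, observe that the affine recursion $p_d^{(2k+2)}=a\,p_d^{(2k)}+b$ has fixed point $b/(1-a)=\tilde{\Gamma}_d$ and contraction factor $0<a<1$, hence exponential convergence of the top population, and then thermalise the remaining levels with respect to $\v{\Gamma}$ to get full-vector convergence. The only place you flag as a possible obstacle — forcing the iterates to have levels $1,\dots,d-1$ thermal with respect to $\v{\Gamma}$ — is in fact already built into the paper's proof of Lemma~\ref{lem:thermo_ext} (the constructed $\v{p}''$ always has that shape), and your independent resolution (post-composing a partial $\v{\Gamma}$-thermalisation of levels $1,\dots,d-1$, which is a free operation not affecting $p_d$) is equally valid; the paper simply applies this thermalisation once at the end rather than after each block, which is immaterial since the recursion depends only on the top entry. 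The minor bookkeeping difference ($N=2m$ versus the paper's $N=2m+1$, counting the initial preparation stroke) does not affect the exponential rate.
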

\begin{proof}
    We will only prove $\tilde{\v{\Gamma}}\in F_{AB}$, as the proof of \mbox{$\tilde{\v{\gamma}}\in F_{AB}$} proceeds analogously. The proof is simply based on applying Lemma~\ref{lem:thermo_ext} iteratively $m$ times.  Starting with the hot thermal state $\v{\Gamma}$ (preparation of which uses one stroke), we then apply the lemma $m$ times, which uses $2m$ strokes (since each application uses one connection with the cold and one with the hot bath). By a simple application of the recurrence formula, after $N=2m+1$ strokes, one generates a state $\v{q}$ with
    \begin{align}
        q_d&=\left(\frac{(1-\gamma_1)\Gamma_d}{(1-\gamma_d)\Gamma_1}\right)^m (\Gamma_d-\tilde{\Gamma}_d)+\tilde{\Gamma}_d,
    \end{align}
    and the remaining components can be thermalised with the hot bath, so that for $k\neq d$:
    \begin{equation}
        q_k=\frac{1-q_d}{1-\Gamma_d}\Gamma_k.
    \end{equation}
    Clearly, as $N\rightarrow\infty$, we get that $\v{q}\rightarrow \tilde{\v{\Gamma}}$, with the convergence being exponential in $m$, so also with $N$.
\end{proof}

Finally, we can use the above proposition iteratively to produce a whole polytope of states belonging to $F_{AB}$.

\begin{prop}[Lower bound on $F_{AB}$]
\label{prop:thermo_lower}
    Within a simplex of $d$-dimensional probability distributions, consider a polytope $P_d$ with $2^{d-1}$ extreme points $\{\v{f}^{\v{b}}\}$, each characterised by a bit string \mbox{$\v{b}=[b_2,\dots,b_d]$} of length $(d-1)$ via the following: 
    \begin{subequations}
    \begin{align}
      \!\!  f^{\v{b}}_d&=g_d^{(b_d)},\\
      \!\!  f^{\v{b}}_k&=\left(1-\!\!\sum_{l=k+1}^{d}f^{\v{b}}_l\right)g_k^{(b_k)}\quad\! \mathrm{for~} k\in\{d-1,\dots,2\},\\
      \!\!  f^{\v{b}}_1&=\left(1-\sum_{l=2}^{d}f^{\v{b}}_l\right),   
    \end{align}
    \end{subequations}
    where
    \begin{subequations}
    \begin{align}
        g_{k}^{(0)}:= \frac{\gamma_{k}(\Gamma_1-\Gamma_{k})}{\Gamma_1(1-\gamma_{k})-\Gamma_{k}(1-\gamma_1)},\\      
        g_{k}^{(1)}:= \frac{\Gamma_{k}(\gamma_1-\gamma_{k})}{\Gamma_1(1-\gamma_{k})-\Gamma_{k}(1-\gamma_1
        )}.
    \end{align}
\end{subequations}
Then, the set $F_{AB}$ of free states arising from fusing two resource theories of thermodynamics with two different temperatures contains $P_d$.
\end{prop}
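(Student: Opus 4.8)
The plan is to reduce the claim to the $2^{d-1}$ extreme points of $P_d$ and then to reach each of them by an explicit top-down construction. Since $F_{AB}$ is closed under convex combinations (randomised strategies) and, as already used in \cref{prop:tilde_achievable}, contains the limits of achievable states, it suffices to show that every vertex $\v{f}^{\v{b}}$ of $P_d$ lies in $F_{AB}$; convexity then gives $P_d\subseteq F_{AB}$.

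Fix a bit string $\v{b}=[b_2,\dots,b_d]$ and write $M_k:=\prod_{l=k+1}^{d}(1-g_l^{(b_l)})$, so that (by the recurrence defining $P_d$) this is precisely $1-\sum_{l>k}f_l^{\v{b}}$, the mass that should end up on levels $\{1,\dots,k\}$, with $f_k^{\v{b}}=g_k^{(b_k)}M_k$ and $M_1=f_1^{\v{b}}$. I would process the levels in stages $k=d,d-1,\dots,2$, level $1$ merely absorbing whatever mass is left. Stage $k=d$ is exactly \cref{prop:tilde_achievable}: prepare $\v{\gamma}$ if $b_d=0$ or $\v{\Gamma}$ if $b_d=1$ (one stroke) and run the engine to drive the state exponentially to $\tilde{\v{\gamma}}$ or $\tilde{\v{\Gamma}}$; since a direct check from \eqref{def:gamma_tilda_1} gives $\tilde{\gamma}_d=g_d^{(0)}$ and $\tilde{\Gamma}_d=g_d^{(1)}$, this locks $p_d=f_d^{\v{b}}$ and leaves the lower levels in a thermal-proportional configuration. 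In stage $k<d$ -- with levels $k+1,\dots,d$ already held at $f_{k+1}^{\v{b}},\dots,f_d^{\v{b}}$ and mass $M_k$ over the rest -- I would run a sub-sequence of strokes that drives $p_k$ exponentially to $g_k^{(b_k)}M_k$ without touching levels $k+1,\dots,d$ and leaves the levels below $k$ thermal-proportional, so the construction can pass to stage $k-1$. Concatenating the stages produces a thermomajorisation chain starting from $\v{\gamma}$ or $\v{\Gamma}$ and converging to $\v{f}^{\v{b}}$, whence $\v{f}^{\v{b}}\in F_{AB}$.

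The workhorse of each stage is a level-$k$ analogue of \cref{lem:thermo_ext}. I would show that, whenever $p_k$ lies between $g_k^{(0)}M_k$ and $g_k^{(1)}M_k$, there is a cold-then-hot pair of thermomajorisations (for $b_k=1$, pushing $p_k$ up) or a hot-then-cold pair (for $b_k=0$, pushing it down), each of which (i) leaves the populations $f_{k+1}^{\v{b}},\dots,f_d^{\v{b}}$ untouched, (ii) leaves the levels below $k$ thermal-proportional with respect to whichever bath acted last, and (iii) maps $p_k$ by an affine function of exactly the form of Eqs.~\eqref{eq:pd_double_prime}--\eqref{eq:q1_double_prime} with $d$ replaced by $k$, i.e.\ with contraction factor in $(0,1)$ and fixed point $g_k^{(b_k)}M_k$. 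Iterating this pair $m$ times and letting $m\to\infty$ then sends $p_k$ to its target exponentially, just as in \cref{prop:tilde_achievable}; the value $g_k^{(b_k)}$ is simply the solution of that affine fixed-point equation and falls out of the Lorenz-curve geometry in the same way $\tilde{\gamma}_1$ and $\tilde{\Gamma}_d$ do in \eqref{eq:gamma_tilde}--\eqref{eq:Gamma_tilde} and \eqref{def:gamma_tilda_1}.

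I expect the main obstacle to be establishing this generalised lemma cleanly: (a) constructing the stage-$k$ chains so that they provably fix levels $k+1,\dots,d$ -- this is what guarantees that the mass $M_k$ really equals $\prod_{l=k+1}^{d}(1-g_l^{(b_l)})$ and that the stages compose -- and (b) verifying that the state handed over by stage $k+1$ always falls in the window $[g_k^{(0)}M_k,\,g_k^{(1)}M_k]$ where the affine recursion is monotone and well posed. Both amount to bookkeeping of piecewise-linear thermomajorisation curves restricted to the sub-block of levels $\{1,\dots,k\}$, in the spirit of the proof of \cref{lem:thermo_ext} in Appendix~\ref{app:lem3}, but keeping the already-fixed levels genuinely fixed is where I would expect the bulk of the work to lie. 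Finally, I would remark -- as with \cref{cor:thermo_upper} and the qubit picture -- that $P_d$ is in general a strict subset of $F_{AB}$.
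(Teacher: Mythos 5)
Your proposal follows essentially the same recursive construction as the paper: fix the bit string, lock level $d$ by converging to $\tilde{\v\gamma}$ or $\tilde{\v\Gamma}$ via Proposition~\ref{prop:tilde_achievable}, then treat the remaining levels as a smaller system and repeat, with level $1$ fixed by normalisation. The two obstacles you flag are in fact harmless and are why the paper treats the recursion as immediate: a Gibbs-stochastic map supported on levels $\{1,\dots,k\}$ and acting as identity elsewhere is automatically Gibbs-stochastic on the full system, so the already-fixed populations $f^{\v b}_{k+1},\dots,f^{\v b}_d$ are untouched for free; and since each stage hands over a state whose sub-block is exactly proportional to $\v\gamma$ or $\v\Gamma$ restricted (not merely inside a window), the next stage starts from the restricted free state, which is precisely the hypothesis of Proposition~\ref{prop:tilde_achievable}.
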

\begin{proof}
  
    Since the set $F_{AB}$ is convex, we only need to show that $F_{AB}$ contains $\v{f}^{\v{b}}$ for any choice of the bit string $\v{b}$. Let us then fix $\v{b}$ and note that $g_d^{(0)}=\tilde{\gamma}_d$ and $g_d^{(1)}=\tilde{\Gamma}_d$. Thus, depending on $b_d$, we start by generating $\tilde{\v{\Gamma}}$ or $\tilde{\v{\gamma}}$ (by Proposition~\ref{prop:tilde_achievable} we can do this, as both of these states belong to $F_{AB}$). As a result, the occupation of level $d$ is given by ${f}_d^{\v{b}}$. Next, set $k=1$ and treat the first $(d-k)$ levels as an unnormalised $(d-k)$-level state. Depending on $b_{d-k}$, again using Proposition~\ref{prop:tilde_achievable} we generate versions of the states $\tilde{\v{\Gamma}}$ or $\tilde{\v{\gamma}}$ restricted to these $(d-k)$ levels (and unnormalised). Repeating this for $k\in\{2,\dots,d-2\}$, we obtain the state with the occupation of the level $k$ given by $f_k^{\v{b}}$. Finally, the occupation of the ground state is given by $f_1^{\v{b}}$ simply by the normalisation condition. 
\end{proof}

In Fig.~\ref{fig:3-4_thermo}, we present how this lower bound (and the upper bound from the previous section) compares with the full set of achievable states $F_{AB}$ obtained numerically for two examples of three- and four-level athermality engines. As can be seen, in the first case, the numerically obtained $F_{AB}$ contains only full rank states; whereas in the second case, up to numerical precision, $F_{AB}$ seems to contain the rank-1 ground state. Since the pairs of cold and hot temperatures in the two studied examples were different, one may wonder whether the ability of the athermality engine to reduce rank depends on the temperature difference. In the next section we will see that, irrespective of the cold bath, when the hot bath is at infinite temperature, $F_{AB}$ contains all states (so also the ones with the reduced rank). Here, we show that even if the temperature of the hot bath is not infinite, but it is large enough and the cold bath is cold enough, $F_{AB}$ contains a rank-1 ground state (see Appendix~\ref{app:ground} for the proof of the following proposition).

\begin{prop}
    \label{prop:ground}
    Assume $\alpha$ to be large enough and $\beta$ to be small enough, so that
    \begin{align}
        \label{eq:coldenough} 
        &\gamma_1 > \frac{1}{2}\quad\text{and}\quad  \Gamma_1 < \Gamma_d+\Gamma_{d-1}.
    \end{align}
    Then, the set $F_{AB}$ of free states arising from fusing two resource theories of thermodynamics with two different temperatures contains the ground state. Achieving this state requires $N\to\infty$ strokes.
\end{prop}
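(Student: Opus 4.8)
The plan is to show that under the hypotheses \eqref{eq:coldenough}, a suitable iterated qubit-type construction drives the excited populations to zero one level at a time, so that in the limit $N\to\infty$ the state converges to the ground state $(1,0,\dots,0)$. The key observation is that the conditions $\gamma_1>1/2$ and $\Gamma_1<\Gamma_d+\Gamma_{d-1}$ should be exactly what is needed to make the two-level "pump" act contractively toward the ground state on every pair of adjacent levels; in particular, $\gamma_1 > 1/2$ ensures the cold-bath extremal map $\Pi_{\v\gamma}$ restricted to the top two levels has its fixed point strictly below the thermal value (pushing weight down), while $\Gamma_1 < \Gamma_d + \Gamma_{d-1}$ controls how much the hot bath can undo this when it acts on the lower levels. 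I would first record, using the structure of thermomajorisation curves, that for any state $\v r$ one can always thermalise all levels except the topmost (this only uses a single $\succ_{\v g}$ step and does not raise $r_d$), reducing the analysis of the top level to an effective two-level problem with Gibbs parameters $\gamma_d/(\gamma_{d-1}+\gamma_d)$ and $\Gamma_d/(\Gamma_{d-1}+\Gamma_d)$ on the pair $\{d-1,d\}$.

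Next I would run the qubit engine of Sec.~\ref{sec:thermo_qubit} on this effective two-level system. Under the hypotheses, the relevant fixed point $\tilde\gamma$-analogue for the pair $\{d-1,d\}$ has vanishing $d$-component in the appropriate limit — more precisely, alternating $\Pi_{\v\Gamma}$ and $\Pi_{\v\gamma}$ on the top pair contracts $q_d$ geometrically toward a value that is $0$ once we are allowed to iterate indefinitely (here the condition $\gamma_1 > 1/2$, equivalently $\gamma_d < \gamma_{d-1}$, guarantees the fixed point sits at the $x$-axis rather than at a strictly positive height, after also using that the hot bath's counteraction on these two levels is limited because $\Gamma_1<\Gamma_d+\Gamma_{d-1}$ forces $\Gamma_d/(\Gamma_{d-1}+\Gamma_d)$ small enough). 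So after infinitely many strokes we can drive $q_d \to 0$. Having emptied level $d$, I would then treat the remaining $d-1$ levels as an (unnormalised) lower-dimensional problem and repeat: thermalise levels $1,\dots,d-2$, run the qubit engine on the pair $\{d-2,d-1\}$ to send $q_{d-1}\to 0$, and so on, peeling off one excited level at a time. Since the hypotheses in \eqref{eq:coldenough} involve only $\gamma_1,\Gamma_1,\Gamma_{d-1},\Gamma_d$ and the corresponding inequalities are inherited (or are only easier to satisfy) for the top pair of any lower-dimensional sub-problem — because the restricted Gibbs ratios change monotonically — the same argument applies at each stage.

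Finally, I would assemble the estimates: after emptying levels $d, d-1, \dots, 2$ in turn, each to within $\epsilon$ using finitely many strokes, the total variation distance to the ground state is $O((d-1)\epsilon)$, and letting $\epsilon\to 0$ (hence $N\to\infty$) gives $(1,0,\dots,0)\in F_{AB}$, with the closure being taken because $F_{AB}$ is defined via all achievable states and their limits. The claim that $N\to\infty$ is genuinely required follows from Corollary~\ref{cor:thermo_upper} combined with the fact that no finite sequence of the extremal maps can exactly hit a zero component unless one of the intermediate Gibbs states already had one, which it does not; alternatively, the geometric convergence rate shows the ground state is only a limit point, never attained. The main obstacle I anticipate is the bookkeeping in the inductive step: one must verify carefully that thermalising the lower levels and then pumping the top pair does not re-populate the levels already emptied at previous stages beyond a controllable $\epsilon$, i.e., that the errors do not accumulate catastrophically across the $d-1$ rounds — this requires choosing the number of strokes per round adaptively (exponentially more strokes in later, more delicate rounds) and checking that the contraction factor for the top pair stays bounded away from $1$ uniformly, which is precisely where the strict inequalities in \eqref{eq:coldenough} (rather than non-strict ones) are used.
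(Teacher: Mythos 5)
Your proposal has a genuine gap, and it appears at the very first step. You propose to reduce the problem to an effective two-level engine on the pair $\{d-1,d\}$ (with effective Gibbs ratios $\gamma_d/(\gamma_{d-1}+\gamma_d)$ and $\Gamma_d/(\Gamma_{d-1}+\Gamma_d)$) and then drive $q_d\to 0$ by running the qubit engine of Sec.~\ref{sec:thermo_qubit} on that pair. But the qubit engine does \emph{not} converge to the vertex $(1,0)$ of the two-dimensional simplex in general: by Eqs.~\eqref{eq:gamma_tilde}--\eqref{eq:Gamma_tilde}, the fixed point $\tilde{\v\gamma}$ of $\Pi_{\v\gamma}\Pi_{\v\Gamma}$ equals $(1,0)$ only when $\Gamma_{\rm eff}=1/2$, i.e.\ $\Gamma_{d-1}=\Gamma_d$, which generically fails. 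The hypothesis $\Gamma_1<\Gamma_d+\Gamma_{d-1}$ does \emph{not} force $\Gamma_{d-1}=\Gamma_d$, so the two-level engine restricted to the top pair stalls strictly away from $q_d=0$, and the ``peel one level at a time'' induction never gets off the ground. Relatedly, the parenthetical ``$\gamma_1>1/2$, equivalently $\gamma_d<\gamma_{d-1}$'' is false: $\gamma_d<\gamma_{d-1}$ is automatic whenever $E_{d-1}<E_d$, whereas $\gamma_1>1/2$ is a non-trivial low-temperature condition, and it is the latter (not the former) that is actually needed.

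The paper's proof does not reduce to a two-level subsystem. Instead it exploits the full $d$-level thermomajorisation structure: one full cycle (hot bath then cold bath) maps a state whose $\beta$-order with respect to $\v\gamma$ is the identity to another state of the same form, and during that cycle the relevant quantity that contracts is not $p_d$ alone but the sum $p_d+p_{d-1}$. Concretely, Lemma~\ref{lem:groundstate} yields $p_d[n{+}1]+p_{d-1}[n{+}1]\le\frac{\gamma_d+\gamma_{d-1}}{\gamma_1}\bigl(p_d[n]+p_{d-1}[n]\bigr)$, where the inequality $\Gamma_1-\Gamma_d<\Gamma_{d-1}$ (your second hypothesis) is used to bound the hot-bath step and the inequality $\gamma_d+\gamma_{d-1}\le 1-\gamma_1<\gamma_1$ (your first hypothesis) is exactly what makes the prefactor strictly less than~$1$. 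This geometric decay of the \emph{pair} $p_d+p_{d-1}$ forces $p_1[n]\to 1$ via Eq.~\eqref{eq:p1n}. Your observation that thermalising all levels except the topmost is free and does not raise $r_d$ is correct and indeed used in the paper, but it feeds into the $d$-level curve argument, not into a genuine dimensional reduction. If you want to salvage your route, you would have to explain how weight escapes the pair $\{d-1,d\}$ to the ground level without being trapped at a strictly interior fixed point; the mechanism that accomplishes this is precisely the hot-bath stroke that \emph{inverts} the $\beta$-order globally, which cannot be seen in a two-level truncation.
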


\begin{figure}
    \centering
    \includegraphics[width=\columnwidth]{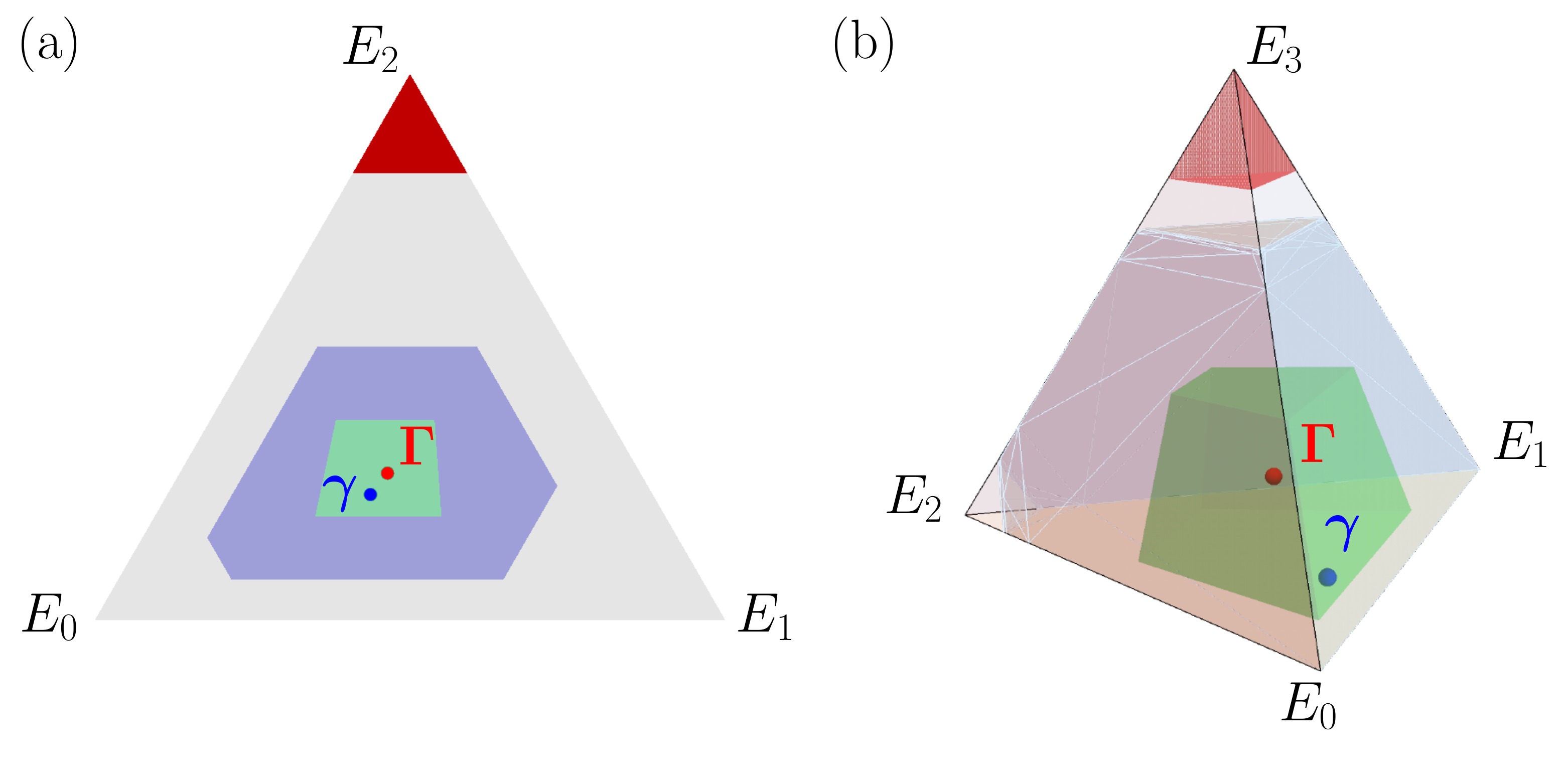}
    \caption{\textbf{Three- and four-level athermality engines.} The space of incoherent states of a $d$-level system is given by a~$(d-1)$-dimensional probability simplex [(a) $d=3$ and (b) $d=4$]. The set $F_{AB}$ of states achievable by an athermality engine after $N\to\infty$ strokes is presented in blue (region obtained numerically for large enough $N$, so that increasing it does not lead to visible changes), the region excluded from $F_{AB}$ by Corollary~\ref{cor:thermo_upper} is depicted in red, and the lower bound obtained in Proposition~\ref{prop:thermo_lower} is indicated in green.  Parameters chosen: (a)~$\alpha=1/3$, $\beta=1/5$, $E_k=k$; (b)~$\alpha=1$, $\beta=1/4$, $E_k=k$. }
    \label{fig:3-4_thermo}
\end{figure}


\subsubsection[Full set of achievable states for infinite temperature]{\texorpdfstring{Full set of achievable states for $\beta=0$}{Full set of achievable states for infinite temperature}}
\label{sec:thermo_infinite}

The upper bound constraining the set $F_{AB}$ from Corollary~\ref{cor:thermo_upper} becomes trivial when $\beta=0$, i.e., when the hot bath is at infinite temperature. One can then wonder, whether this bound is not tight, or rather in this special case there are no constraints and the set $F_{AB}$ coincides with the full probability simplex. The following theorem shows that the latter is the case.
\begin{thm}
    The set $F_{AB}$ of free states arising from fusing two resource theories of thermodynamics, one with finite and the other with infinite temperature, is given by the full probability simplex. Moreover, the convergence to the full simplex with the number of strokes $N$ is exponential.
\end{thm}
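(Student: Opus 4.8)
The plan is to obtain the statement as the infinite‑temperature limit of the lower bound in Proposition~\ref{prop:thermo_lower}, rather than by a fresh construction. I claim that when the hot bath is at infinite temperature the bounding polytope $P_d$ swells up to fill the whole probability simplex, so that together with the trivial inclusion $F_{AB}\subseteq$ (simplex) this immediately forces equality. The only input needed is that $\beta=0$ makes the hot Gibbs state uniform, $\Gamma_1=\dots=\Gamma_d=1/d$, which collapses every temperature‑dependent parameter in Proposition~\ref{prop:thermo_lower}. Concretely, I would first substitute $\Gamma_k\equiv 1/d$ into the definitions of $g_k^{(0)}$ and $g_k^{(1)}$: the factor $1/d$ cancels between numerator and denominator, the latter reducing to $(1/d)(\gamma_1-\gamma_k)$, which is strictly positive since $\gamma_1>\dots>\gamma_d$ for $\alpha>0$, leaving $g_k^{(0)}=0$ and $g_k^{(1)}=1$ for every $k\in\{2,\dots,d\}$. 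Equivalently, the states $\tilde{\v\gamma}$ and $\tilde{\v\Gamma}$ of Sec.~\ref{sec:thermo_lower} degenerate to the pure ground state $\v{e}_1$ and the pure highest‑excited state $\v{e}_d$.

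With $g_k^{(0)}=0$ and $g_k^{(1)}=1$ in hand, I would evaluate the extreme points $\v{f}^{\v b}$ of $P_d$ for an arbitrary bit string $\v b=[b_2,\dots,b_d]$, reading the recursion from level $d$ downwards: at each level $f^{\v b}_k$ equals either $0$ (if $b_k=0$) or the entire remaining weight $1-\sum_{l>k}f^{\v b}_l$ (if $b_k=1$). A one‑line induction then shows that, with $j$ the largest index carrying $b_j=1$, one gets $\v{f}^{\v b}=\v{e}_j$, and $\v{f}^{\v b}=\v{e}_1$ when $\v b=[0,\dots,0]$. Hence the $2^{d-1}$ nominal vertices of $P_d$ coincide with the $d$ vertices $\v{e}_1,\dots,\v{e}_d$ of the simplex, so $P_d=\conv{\{\v{e}_1,\dots,\v{e}_d\}}$ is the full probability simplex. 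Proposition~\ref{prop:thermo_lower} gives $F_{AB}\supseteq P_d$, while $F_{AB}$ is contained in the simplex by definition, so $F_{AB}$ equals the full simplex. (This is also where the finite‑temperature obstruction of Corollary~\ref{cor:thermo_upper} disappears, since $\Gamma_d/\Gamma_{d-1}\to 1$.)

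For the exponential‑convergence claim I would follow the construction behind Proposition~\ref{prop:thermo_lower}, which reaches each $\v{e}_j$ by chaining at most $d-1$ ``phases'', in phase $i$ driving a sub‑block $\{1,\dots,i\}$ towards its (now pure) fixed point; by Proposition~\ref{prop:tilde_achievable} each phase converges geometrically, with a ratio $<1$ fixed by the recursion (for the top block it is $(1-\gamma_1)/(1-\gamma_d)$, and the sub‑block ratios have the same form with renormalised Gibbs weights, the normalisation cancelling). The one step that needs care — and the only genuine, if minor, obstacle — is the propagation of error across the $d-1$ nested phases: phase $i$ does not begin exactly at the ideal (uniform‑on‑the‑sub‑block) state, only $\epsilon$‑close to it, so I would invoke continuity of the thermomajorisation transformations to bound the blow‑up of that perturbation by a constant depending only on $d$ and $\gamma_1$; choosing each phase's target accuracy $\sim\epsilon/(d-1)$ and its length $\sim\log((d-1)/\epsilon)$ then lands within $\epsilon$ of the desired vertex, and hence by convexity of any target state, after $N=O\!\big(d\log(1/\epsilon)\big)$ strokes, i.e.\ exponentially in $N$. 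An equivalent hands‑on route, which is just the $d=2$ instance of the above, is to alternate a permutation by Bob — legitimate because $\v\Gamma$ uniform means Bob can implement every permutation — that lifts the population just above level $1$, with Alice's partial thermalisation towards $\v\gamma$ that pushes it back into the ground state, so that the ground‑state occupation approaches $1$ as a geometric series.
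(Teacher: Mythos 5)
Your argument for identifying $F_{AB}$ with the full simplex is correct, but you take a more elaborate path than the paper. You route through Proposition~\ref{prop:thermo_lower}: substituting $\Gamma_k\equiv 1/d$ you get $g_k^{(0)}=0$, $g_k^{(1)}=1$ (both computations check out), and the $2^{d-1}$ nominal vertices of $P_d$ collapse onto the $d$ simplex vertices $\v e_1,\dots,\v e_d$, so $P_d$ fills the whole simplex. The paper instead stops one step earlier at exactly the observation you mention in passing: by Eq.~\eqref{def:gamma_tilda_1}, $\tilde{\v\gamma}$ degenerates to $\v e_1=(1,0,\dots,0)$, which is in $F_{AB}$ by Proposition~\ref{prop:tilde_achievable}. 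Since $\v\Gamma$ is uniform, thermomajorisation with respect to $\v\Gamma$ is ordinary majorisation, and $\v e_1 \succ_{\v\Gamma} \v p$ for \emph{every} $\v p$; a single extra Bob stroke therefore already reaches the whole simplex. Your polytope argument is a valid alternative, but it proves something strictly stronger than is needed (that every simplex vertex, not just $\v e_1$, is in $P_d$), and because it retraces Proposition~\ref{prop:thermo_lower}'s proof it commits you to a harder analysis of the convergence rate.

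That is where the comparison really matters. Your exponential-convergence argument chains $d-1$ nested phases (one per sub-block) and therefore has to control error propagation across phases, which you correctly flag as the delicate step and handle by a continuity estimate giving $N=O(d\log(1/\epsilon))$. This is workable, but the paper's route avoids the issue entirely: Proposition~\ref{prop:tilde_achievable} is a \emph{single}-phase statement, so it gives exponential convergence to $\v e_1$ directly, with no chaining and no sub-block renormalisation; then one stroke by Bob (who, with uniform $\v\Gamma$, has access to all bistochastic maps) sends a state $\epsilon$-close to $\v e_1$ to within $O(\epsilon)$ of any target (the majorisation cone varies continuously with its apex). So the rate is controlled by the single-phase ratio from Proposition~\ref{prop:tilde_achievable} alone. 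Your ``hands-on'' route at the end (alternate a Bob permutation with an Alice partial thermalisation to drive the ground-state weight up geometrically) is, in spirit, exactly the mechanism underlying Lemma~\ref{lem:thermo_ext} and Proposition~\ref{prop:tilde_achievable}; had you promoted that to the main argument instead of treating it as the $d=2$ special case, you would have landed on the paper's shorter proof.
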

\begin{proof}
    By Proposition~\ref{prop:tilde_achievable}, $\tilde{\v{\gamma}}\in F_{AB}$ and the convergence to this state is exponential in the number of strokes. However, in the limit of infinite temperature $\Gamma_k=1/d$ for all $k\in\{1,\ldots,d\}$, and so, by Eq. \eqref{def:gamma_tilda_1},
    \begin{equation}
        \tilde{\v{\gamma}}=(1,0,\dots,0).
    \end{equation}
    Finally, since
    \begin{equation}
        (1,0,\dots,0)\succ_{\v{\Gamma}} \v{p}
    \end{equation}
    for every $\v{p}$, the set $F_{AB}$ contains all probability distributions. 
\end{proof}

We thus see that the set $F_{AB}$ of achievable states trivialises when the hot temperature is infinite ($\beta=0$), even without cold temperature being zero. One can see this even without analysing the formal proof. Indeed, it suffices to observe the following. Firstly, in the case of $\beta=0$, the state $\tilde{\v{\gamma}}$ given by Eq.~\eqref{def:gamma_tilda_1}, is the ground state $(1,0,0,\ldots,0)$. Secondly, $\tilde{\v{\gamma}}$ belongs to $F_{AB}$ due to Proposition~\ref{prop:tilde_achievable}. Thirdly, for $\beta=0$ the set $\mathcal{F}_B$ contains permutations, and so Bob can transform the ground state to the highest excited state. Finally, since this state, independently of $\alpha$ and $\beta$, thermomajorises all other states, one can transform it to any state within the probability simplex.


\section{Coherence engine}
\label{sec:coherence}


\subsection{Setting the scene}
\label{sec:coherence_setting}

We now depart from athermality engines and switch to investigating engines fueled by the resource of quantum coherence. The theoretical framework established to quantify the amount of coherence present in a state of the system is known under the collective name of resource theories of coherence~\cite{aberg2006superposition,baumgratz2014quantifying}. Their aim is to quantitatively capture the departure from principles of classical physics due to the quantum superposition principle or, in other words, to quantify the ability of a system to undergo quantum interference effects. As with every resource theory, they are defined by identifying the set of free states and free operations. While all such theories agree on the form of the free states (they are simply given by states diagonal in the distinguished basis), there are plenty of choices for the set of free operations (see Ref.~\cite{streltsov2017colloquium} for a review on this subject). To avoid making any specific choices (and also for simplicity), here we will restrict our considerations to a reversible subtheory, where all different resource theories of coherence coincide. Namely, we will only consider unitary transformations, and so the set of free operations will be given by unitaries diagonal in the distinguished basis\footnote{Note that, via the Stinespring dilation, this can serve as a starting point for analysing free operations given by more general quantum channels.}. To make things even simpler, in our toy theory of coherence engines we will restrict ourselves to pure states. Thus, the only free states of the theory will be given by the distinguished basis states.

More formally, we consider two agents, $A$ and $B$, that can prepare the following sets of free states
\begin{subequations}
    \begin{align}
        F_A&=\{|i\rangle\}_{i=1}^d,\\
        F_B&=\{U^{\dagger}|i\rangle\}_{i=1}^d,
    \end{align}
\end{subequations}
where $U\in \mathcal{U}_d(\mathbb{C})$ is a fixed unitary of order $d\geq 2$ over the field $\mathbb{C}$ describing the relative orientation of the two bases. To define the sets of free operations, let  
\begin{equation}
   \!\! \mathcal{DU}_d(\mathbb{C})=\left\{\sum_{i=1}^d u_{ii}\ketbra{i}{i}:\,|u_{ii}|=1\right\}
\end{equation}
denote the subgroup of $\mathcal{U}_d(\mathbb{C})$ consisting of unitary matrices diagonal in the computational basis. 
Now, the sets of free operations are given by
\begin{subequations}
\begin{align}
\label{def:A}
\F_A&=\mathcal{DU}_d(\mathbb{C}),\\
\label{def:B}
\F_B&=\{U^{\dagger}DU:\;\;D\in \mathcal{DU}_d(\mathbb{C})\}.
\end{align}
\end{subequations}

Note that this setup is directly connected to the problem of Hamiltonian control~\cite{ramakrishna}. More precisely, one can imagine having control over a quantum system with an intrinsic Hamiltonian $H_1$, and being able to change it at will to $H_2$ by some external influence. By carefully choosing the timings and lengths of such interventions, one can then perform a sequence of unitaries $e^{iH_1 t_1}$, $e^{iH_2 t_2}$, $\dots$. Under the assumption of incommensurable spectra, this is then equivalent to a sequence of arbitrary unitaries diagonal in two different bases (of eigenstates of $H_1$ and $H_2$), which is the action of a coherence engine. To give a simple example, consider a spin-1/2 system, initially prepared in an up state along the $\hat{z}$ axis, and two agents, $A$ and $B$, who can turn on and off the magnetic field along the $\hat{z}$ and $\hat{n}$ axis, respectively. Such control of the magnetic field means that they can rotate the state of the system on the Bloch sphere around these axes. The aim is to have full control of the system's state, i.e., to have a protocol that maps the initial state to any state $\ket{\psi}$, optimally with as few steps as possible. The introduced coherence engine formalism allows to address exactly this problem and to answer whether such a full control is possible and, if so, what is the minimal number of steps needed.

The concept of coherence engines also bears resemblance to the  problem of generating universal gate sets in quantum computing~\cite{lloyd1995almost,weaver2000universality}. However, there are crucial differences that distinguish the two concepts. Universal gate sets in quantum computing refer to finite sets of quantum gates that, when combined in various ways, can be used to approximate any unitary operation on a~quantum computer. These gate sets are typically fixed and predetermined, e.g., all single-qubit gates along with a two-qubit entangling gate (like a CNOT gate). The ability to approximate any unitary operation using these gates is a fundamental requirement for the construction of a quantum computer. On the other hand, coherence engines, while sharing some similarities in terms of their ability to implement general unitary transformations, offer a more flexible and continuous approach. Specifically, two parties can implement any diagonal unitary in two different bases, thus enabling the realization of a continuous set of operations. In the case when the bases are appropriately chosen, the two parties can also approximate any unitary, just like in the case with a universal gate set. The continuous spectrum of operations may give more flexibility, but the number of uses/strokes strongly depends on the ``angle'' between Alice's and Bob's bases. Therefore, it is difficult to determine whether the universal gate set or a coherence engine will give a smaller number of usages to obtain a desired approximation of a~general unitary.

Similarly as in the case of athermality engines, fusing two  resource theories described by $(F_A,\F_A)$ and $(F_B,\F_B)$, by means of a coherence engine, leads to a new resource theory with free states \mbox{$F_{AB}\supseteq F_A \cup F_B$} and free operations \mbox{$\F_{AB}\supseteq \F_A \cup \F_B$}. The main questions that we address here are as follows. First, in Sec.~\ref{sec:coherence_operations}, we ask when $\F_{AB}=\mathcal{U}_d(\mathbb{C})$, i.e., under what conditions a coherence engine can generate any unitary operation. Next, in Sec.~\ref{sec:coherence_bounds}, we address this question quantitatively by asking about the number of strokes needed to achieve any operation from $\mathcal{U}_d(\mathbb{C})$. We then switch the attention from operations to states and investigate the task of reaching a given pure quantum state via a number of strokes of a resource engine. In particular, in Sec.~\ref{sec:coherence_optimal}, we analyse how quickly one can reach a state that is maximally resourceful (here: mutually coherent) with respect to both Alice's and Bob's constraints. However, before we present all of the above, we start with the simplest qubit example in Sec.~\ref{sec:coherence_qubit} to illustrate problems at hand.


\subsection{Elementary qubit example}
\label{sec:coherence_qubit}

For the qubit case, it will be convenient to use the Bloch sphere representation of a pure state $\ket{\psi}$:
\begin{equation}
    \ketbra{\psi}{\psi}=\frac{\iden+\hat{\v{r}}\cdot\v{\sigma}}{2},
\end{equation}
where $\hat{\v{r}}=(r_x,r_y,r_z)$ is the Bloch vector satisfying \mbox{$r_x^2+r_y^2+r_z^2=1$}, and $\v{\sigma}=(\sigma_x,\sigma_y,\sigma_z)$ is a vector of Pauli matrices. Let us also recall that an arbitrary single qubit unitary operator can be written in the form
\begin{align}
    R_{\hat{\v{n}}}(\theta)=e^{i \theta\hat{\v{n}}\cdot \v{\sigma} },
\end{align}
and its action on the Bloch vector $\hat{\v{r}}$ is to rotate it by an angle $\theta$ around an axis specified by a three-dimensional unit vector $\hat{\v{n}}$.

Without loss of generality, we can then choose Alice's distinguished basis to be given by the eigenstates of $\sigma_z$, and Bob's to be given by the eigenstates of $\hat{\v{n}}\cdot\v{\sigma}$ with \mbox{$\hat{\v{n}}=(\sin(\alpha),0,\cos(\alpha))$}. Thus, the set $F_A$ consists of states corresponding to the north and south pole of the Bloch sphere, whereas $\F_A$ is given by rotations $R_{\hat{z}}(\theta)$ around the $z$ axis by an arbitrary angle $\theta$. Similarly, the set $F_B$ consists of states from $F_A$ rotated in the $xz$ plane of the Bloch sphere by an angle $\alpha$, whereas $\F_B$ is given by rotations $R_{\hat{\v{n}}}(\theta)$ around the axis $\hat{\v{n}}$, i.e., around the axis tilted with respect to the ${z}$ axis by an angle~$\alpha$ (see Fig.~\ref{fig:new_rot}).

Using the Euler angles decomposition, we see that a~rotation around any axis $\hat{\v{m}}$ by any angle $\theta$ can be written as
\begin{align}
\label{eq:Euler_angles}
R_{\hat{\v{m}}}(\theta)=R_{\hat{z}}\left(\beta\right)R_{\hat{z}_{\perp}}\left(\gamma\right)R_{\hat{z}}\left(\delta\right),
\end{align}
where $\hat{z}_{\perp}$ denotes a unit vector perpendicular to $\hat{z}$, and an appropriate choice of $\beta,\delta\in[0,2\pi]$ and $\gamma\in[0,\pi]$ has to be made. Therefore, if $\hat{\v{n}}=\hat{z}_{\perp}$ (i.e., $\alpha=\pi/2$), Alice and Bob only need to perform three operations alternately (three strokes of a resource engine are needed) to generate arbitrary unitary operation. In other words, if the two distinguished bases are mutually unbiased (and so coherences with respect to them form complementary resources), then any unitary transformation can be generated with three strokes of a coherence engine.

Whenever $\hat{\v{n}}\neq\hat{z}_{\perp}$ (i.e., $\alpha<\pi/2$), the number of strokes increases to $\lceil\pi/\alpha\rceil+1$. This fact, although studied in a different context, has been established, e.g., in Refs.~\cite{lowenthal, lowenthal2}; cf. also Ref.~\cite{hamada}. Indeed, the first and the last operation in Eq.~\eqref{eq:Euler_angles} remain the same, but $R_{\hat{z}_{\perp}}\left(\gamma\right)$ has to be replaced by 
a product of alternated rotations around vectors $\hat{z}$ and $\hat{\v{n}}$, respectively. In order to justify that it is possible, as well as to estimate the number of necessary alternated rotations, let us again refer to 
Eq.~\eqref{eq:Euler_angles}, and observe that for any $\tilde{\theta}\in\mathbb{R}$, there exist some $\tilde{\beta},\tilde{\delta}\in[0,2\pi]$ and $\tilde{\gamma}\in[0,\pi]$, depending on both $\tilde{\theta}$ and $\alpha$, such that
\begin{align}\label{eq:Euler_angles_j}
R_{\hat{\v{n}}}\left(\tilde{\theta}\right)=
R_{\hat{z}}\left(\tilde{\beta}\right)R_{\hat{z}_{\perp}}\left(\tilde{\gamma}\right)R_{\hat{z}}\left(\tilde{\delta}\right).
\end{align}
As a consequence, we get
\begin{align}\label{eq:Euler_cor}
R_{\hat{z}_{\perp}}\left(\tilde{\gamma}\right)=
R_{\hat{z}}\left(-\tilde{\beta}\right)R_{\hat{\v{n}}}\left(\tilde{\theta}\right)R_{\hat{z}}\left(-\tilde{\delta}\right).
\end{align}

\begin{figure}[!t]
    \centering
    \includegraphics[width=\columnwidth]{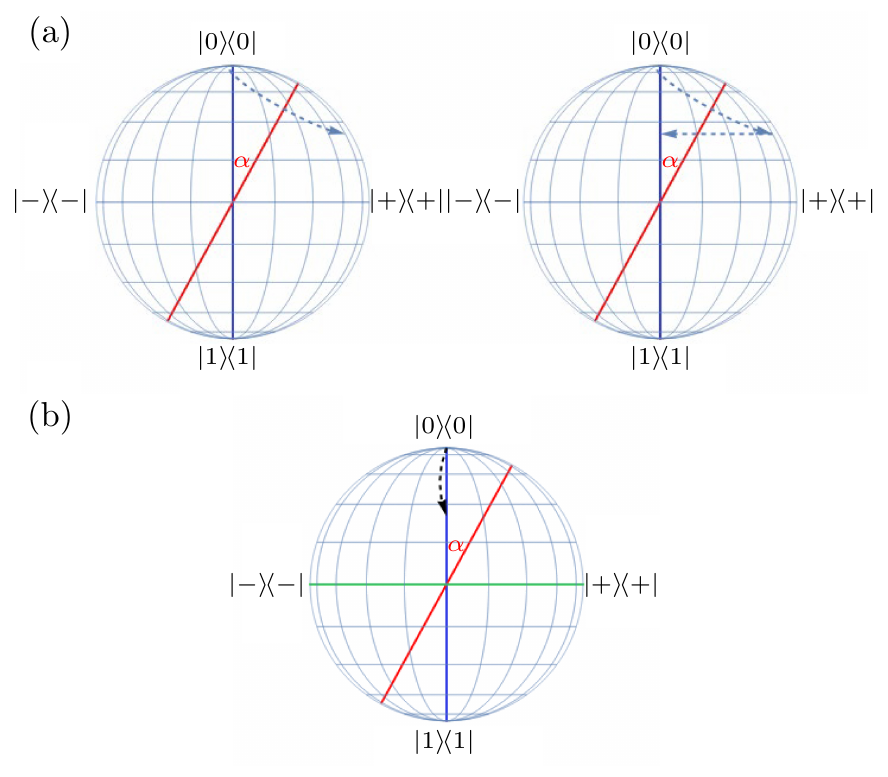};
	\caption{\textbf{Two-level coherence engine.} Rotating alternately in the $\sigma_z$ and the $\hat{\v{n}}\cdot\v{\sigma}$ eigenbases (i.e., around the $\hat{z}$ and $\hat{\v{n}}$ axes, respectively) can generate a single rotation around arbitrary axis, in particular around $\hat{y}$. (a) Illustration of the right hand side of Eq.~\eqref{eq:Euler_cor}. (b) Illustration of the left hand side of Eq.~\eqref{eq:Euler_cor}. 
    }
\label{fig:new_rot}
\end{figure}

However, it is important to observe that Eq.~\eqref{eq:Euler_cor} only holds for these values $\tilde{\gamma}$ which satisfy Eq.~\eqref{eq:Euler_angles_j} with certain $\tilde{\beta},\tilde{\delta},\tilde{\theta}$. Intuitively, the smaller the angle $\alpha$ between the vectors $\hat{z}$ and $\hat{\v{n}}$, the smaller the angle $\tilde{\gamma}$, by which we can rotate around the vector $\hat{z}_{\perp}$ (see Fig.~\ref{fig:new_rot}). To be more precise, $\tilde{\gamma}$ cannot be greater than $2\alpha$. This indicates that to generate $R_{\hat{z}_{\perp}}(\gamma)$ for an arbitrarily big $\gamma$, we need to perform $k$ rotations around $\hat{z}_{\perp}$ by angles not bigger than $2\alpha$ in a row, namely
\begin{align}    R_{\hat{z}_{\perp}}\left(\gamma\right)=R_{\hat{z}_{\perp}}\left(\gamma_1\right)\ldots R_{\hat{z}_{\perp}}\left(\gamma_k\right).
\end{align}
This, in turn, implies that $k=\lceil \gamma/(2\alpha)\rceil$. Since we aim to estimate the number of necessary rotations to generate any unitary matrix, we have to consider the worst case scenario, which is $k=\lceil\pi/(2\alpha)\rceil$. 
Note that, after applying Eq.~\eqref{eq:Euler_cor} $k$ times (with $\gamma_i,\beta_i,\theta_i,\delta_i$ in the places of $\tilde{\gamma},\tilde{\beta},\tilde{\theta},\tilde{\delta}$, respectively), we end up with
\begin{align}
\begin{aligned}
R_{\hat{z}_{\perp}}\left(\gamma\right)=&
R_{\hat{z}}\left(-{\beta}_1\right)R_{\hat{\v{n}}}\left(\theta_1\right)R_{\hat{z}}\left(-{\delta}_1-\beta_2\right)\ldots \\
&\times R_{\hat{z}}\left(-\delta_{k-1}-{\beta}_k\right)R_{\hat{\v{n}}}\left(\theta_k\right)R_{\hat{z}}\left(-{\delta}_k\right),
\end{aligned}
\end{align}
which, due to Eq.~\eqref{eq:Euler_angles}, finally implies
\begin{align}
\begin{aligned}
R_{\hat{\v{m}}}\left(\theta\right)=&
R_{\hat{z}}\left(\beta-{\beta}_1\right)R_{\hat{\v{n}}}\left(\theta_1\right)R_{\hat{z}}\left(-{\delta}_1-\beta_2\right)\ldots \\
&\times R_{\hat{z}}\left(-\delta_{k-1}-{\beta}_k\right)R_{\hat{\v{n}}}\left(\theta_k\right)R_{\hat{z}}\left(-{\delta}_k+\delta\right).
\end{aligned}
\end{align}
Therefore, the total number of alternated operations from $\F_A$ and $\F_B$ needed to generate an arbitrary unitary operation is equal to $2k+1=2\lceil \pi/(2\alpha)\rceil+1\geq \lceil\pi/\alpha\rceil+1$. For details of this reasoning, see, e.g., Ref.~\cite{hamada}. An even more precise analysis, approaching the problem from the perspective of Lie algebras, is presented in Refs.~\cite{lowenthal}~and~\cite{lowenthal2}, and the result obtained there is precisely $\lceil\pi/\alpha\rceil+1$. We conclude that the incompatibility of the two bases measured by $\alpha$ quantifies the power of a coherence engine.

The analysis to estimate the number of strokes of a~resource engine needed to reach any state when starting from a~free state (to focus attention -- an element of~$F_A$) is very similar. The only crucial difference is that we can now choose to start from this pole of the Bloch sphere that is closer to the final state, and we do not need the initial rotation around $\hat{z}$ (since states from $F_A$ are anyway invariant under such transformations). As a~consequence, it is enough to select $\gamma$ in Eq.~\eqref{eq:Euler_angles} from $[0,\pi/2]$ (instead of $[0,\pi]$), which requires $\left\lceil {\pi}/{2\alpha}\right\rceil$ strokes. This allows Alice and Bob to create a state with arbitrary polar angle, and the final rotation around $\hat{z}$ allows for the choice of arbitrary azimuthal angle, thus getting to all states on the Bloch sphere. In conclusion, being restricted to the sets $F_A$, $F_B$ of their free states and $\mathcal{F}_A$, $\mathcal{F}_B$ of their free operations, Alice and Bob can reach any pure quantum state with $\left\lceil {\pi}/{2\alpha}\right\rceil+1$ strokes of a resource engine.

Instead of asking how many strokes it takes to get any state (or operation), we can also try to determine what is achievable with just three strokes, i.e., one round of communication between Alice and Bob. In particular, we will be interested in the conditions under which they can generate an optimal state (for both Alice and Bob) in this way. Of course, we first need to define what we mean by optimal in this case. It is natural to see the optimal state as the one that is maximally resourceful for both parties, and as such we can choose a maximally mutually coherent state with respect to Alice's and Bob's bases (see Appendix~\ref{app:mutually_coh} or Refs.~\cite{korzekwa_jennings_rudolph,Idel_wolf,puchala_rudnicki}). For two qubit bases given by eigenstates of $\hat{\v{m}}\cdot\v{\sigma}$ and $\hat{\v{n}}\cdot\v{\sigma}$, a maximally mutually coherent state is given by a state with a Bloch vector $\hat{\v{r}}\propto\hat{\v{m}}\times\hat{\v{n}}$, where $\times$ denotes the cross product. Thus, in our case with $\hat{\v{m}}=\hat{z}$ and $\hat{\v{n}}=(\sin(\alpha),0,\cos(\alpha))$, it is an eigenstate of $\sigma_y$, which corresponds to a state described by a Bloch vector with the polar angle $\pi/2$ and the azimuthal angle $\pi/2$. Now, note that the first stroke just prepares one of the two states from $F_A$, and the third stroke is an operation from $\F_A$ that cannot change the polar angle of a state. We thus see that in order to create the optimal state in three strokes, an operation from~$\F_B$ must transform the polar angle of the state from $F_A$ (with polar angles 0 or $\pi$) to $\pi/2$. Since a rotation around an axis tilted by $\alpha$ with respect to the initial Bloch vector can at most create an angle $2\alpha$ between the initial and final vectors, we see that this is possible if and only if \mbox{$\alpha\in\left[\frac{\pi}{4},\frac{3\pi}{4}\right]$}. Therefore, as long as the largest absolute value of the overlap between the distinguished basis states is smaller or equal than $\cos(\pi/8)$, a coherence engine can produce a maximally mutually resourceful state in just three strokes. For a~step-by-step analytical proof of the presented argument, see Appendix~\ref{app:mutually_coh}.


\subsection{Condition on getting all operations}
\label{sec:coherence_operations}

For the elementary case of a qubit system, we have seen that, as long as the distinguished bases of Alice and Bob do not coincide, a coherence engine can generate an arbitrary unitary operation. For higher-dimensional systems, however, this is no longer the case. Thus, the aim of this section is to formulate conditions under which Alice and Bob, restricted to performing operations from $\F_A$ and $\F_B$, respectively, are able to generate an arbitrary unitary. In other words, we ask under what conditions $\F_{AB}$ becomes the full set of unitary operations.

For a given $U\in\mathcal{U}_d(\mathbb{C})$, let us introduce a matrix \hbox{$P_U=(p_{ij})_{i,j=1}^d$} over the field $\mathbb{R}_+$, which corresponds to $U$ in the following sense:
\begin{align}\label{def:P_U}
\begin{aligned}
p_{ij}=\left\{\begin{array}{ll}
0&\text{for }u_{ij}=0,\\
1&\text{for }u_{ij}\neq 0.
\end{array}\right.
\end{aligned}
\end{align}
Next, let us formulate the following two conditions on a~matrix \hbox{$U\in \mathcal{U}_d(\mathbb{C})$}:
\begin{itemize}
\item[(H1)] \phantomsection \label{cnd:H1}
There exist a constant $M\in\mathbb{N}$ and matrices $D_1,\ldots,D_{2M}\in\mathcal{DU}_d(\mathbb{C})$ such that 
\begin{align}
D_1U^{\dagger}D_2UD_3U^{\dagger}D_4U\ldots D_{2M-1}U^{\dagger}D_{2M}U
\end{align}
is a matrix with all non-zero entries.
\item[(H2)] \phantomsection \label{cnd:H2}
There exists a constant $M\in\mathbb{N}$ such that 
\begin{align}
\left(P_U^TP_U\right)^M
\end{align}
is a matrix with all non-zero entries.
\end{itemize}
Obviously, hypothesis \hyperref[cnd:H2]{(H2)} is much easier to verify in practise than hypothesis \hyperref[cnd:H1]{(H1)}. However, in Appendix~\ref{app:equivalence}, we prove the following.
\begin{prop}\label{prop:equiv_H1_H2}
    Hypotheses \hyperref[cnd:H1]{(H1)} and \hyperref[cnd:H2]{(H2)} are equivalent.
\end{prop}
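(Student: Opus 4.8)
The plan is to prove the equivalence of \hyperref[cnd:H1]{(H1)} and \hyperref[cnd:H2]{(H2)} by showing each implies the other, with the main work being a careful analysis of how the support (pattern of non-zero entries) of a product of matrices behaves under multiplication by diagonal unitaries and by $U$ or $U^\dagger$.

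\medskip

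\textbf{Direction \hyperref[cnd:H1]{(H1)} $\Rightarrow$ \hyperref[cnd:H2]{(H2)}.} Suppose there exist $D_1,\dots,D_{2M}\in\mathcal{DU}_d(\mathbb{C})$ making $D_1U^\dagger D_2 U\cdots D_{2M-1}U^\dagger D_{2M}U$ have all non-zero entries. The key observation is that multiplying on the left or right by an element of $\mathcal{DU}_d(\mathbb{C})$ does not change which entries are zero (a diagonal unitary only rescales rows or columns by unit-modulus phases), so the zero/non-zero pattern of the product is unaffected by the $D_i$'s. Next I would compare the support of the matrix product with the Boolean-style product of the support patterns: for any two complex matrices $A,B$, the support of $AB$ is contained in the support of $P_A P_B$ (entry $(AB)_{ik}=\sum_j A_{ij}B_{jk}$ can only be non-zero if some term is non-zero, i.e.\ if $(P_AP_B)_{ik}>0$). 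Since $P_{U^\dagger}=P_U^T$ (because $(U^\dagger)_{ij}=\overline{U_{ji}}$ shares the support of $U_{ji}$), applying this bound repeatedly gives that the support of $D_1U^\dagger D_2 U\cdots U$ is contained in the support of $(P_U^T P_U)^M$. If the left-hand matrix has all entries non-zero, then $(P_U^T P_U)^M$ must have all entries strictly positive, which is exactly \hyperref[cnd:H2]{(H2)}.

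\medskip

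\textbf{Direction \hyperref[cnd:H2]{(H2)} $\Rightarrow$ \hyperref[cnd:H1]{(H1)}.} This is the harder direction, since one must undo the support containment: a priori, entries that are non-zero in the Boolean product $(P_U^TP_U)^M$ could correspond to cancelling sums in the actual matrix product. The plan is to choose the $D_i$'s to be \emph{generic} (e.g.\ with independent uniformly random phases, or phases that are rationally independent), so that no accidental cancellation occurs. Concretely, view each entry of $D_1U^\dagger D_2 U\cdots D_{2M-1}U^\dagger D_{2M}U$ as a function of the phases $(\theta^{(1)},\dots,\theta^{(2M)})$ of the diagonal entries of the $D_i$'s; each such entry is a (Laurent) trigonometric polynomial in these phases, and it is \emph{not identically zero} precisely because the corresponding Boolean entry of $(P_U^TP_U)^M$ is positive (one can exhibit a monomial with non-zero coefficient by tracing a single non-cancelling path $i \to j_1 \to j_2 \to \cdots \to k$ through the factors, whose contribution is a distinct product of the $U$, $U^\dagger$ entries times a distinct phase monomial, hence survives). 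A non-zero analytic/trigonometric-polynomial function vanishes only on a set of measure zero, and there are finitely many entries, so a generic choice of phases makes all $d^2$ entries non-zero simultaneously. That produces the required $D_1,\dots,D_{2M}$, establishing \hyperref[cnd:H1]{(H1)}.

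\medskip

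The main obstacle is the \hyperref[cnd:H2]{(H2)} $\Rightarrow$ \hyperref[cnd:H1]{(H1)} step, specifically the claim that each entry of the alternating product, viewed as a function of the $D_i$ phases, is a non-identically-zero trigonometric polynomial whenever the corresponding Boolean entry is positive. The delicate point is that different ``paths'' through the product contribute different phase-monomials, so they cannot cancel each other; one should make this precise by identifying, for a fixed positive Boolean entry, at least one path whose phase-monomial is not matched by any other path (this uses that the phases of distinct $D_i$'s are algebraically independent variables, so any two paths differing in which intermediate index they pass through at some stage give genuinely different monomials). Once that non-degeneracy is in hand, the measure-zero argument (or a Baire-category / Zariski-density argument over the torus) finishes the proof. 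I would relegate the detailed bookkeeping of paths and monomials to the appendix, as indicated, and in the main text only state that the equivalence holds.
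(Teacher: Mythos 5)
Your proposal is correct and establishes the equivalence, but it takes a genuinely different route from the paper for the harder direction \hyperref[cnd:H2]{(H2)}~$\Rightarrow$~\hyperref[cnd:H1]{(H1)}. The paper (Appendix E) constructs the diagonal unitaries \emph{explicitly and iteratively}: Lemma~\ref{lem:zeros_1} builds a single $D$ for the product $UDV$ by fixing the phase angles $\varphi_1,\dots,\varphi_d$ one at a time and at each step excluding a finite set of bad values, then Lemma~\ref{lem:last_step} bootstraps this to the full alternating product by induction on $M$, treating the already-built prefix as a fixed $V$ and invoking Lemma~\ref{lem:zeros_1} once more. You instead treat all $2Md$ phases at once and argue by genericity: each entry of the alternating product is a Laurent trigonometric polynomial in the phases, and it is identically zero precisely when the corresponding entry of $(P_U^T P_U)^M$ vanishes. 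Your key observation---that distinct paths $a = i_0 \to i_1 \to \cdots \to i_{2M} = b$ contribute distinct monomials $\prod_{j=1}^{2M} e^{i\theta^{(j)}_{i_{j-1}}}$, since the monomial records exactly the diagonal entry of each $D_j$ used, hence determines the path---rules out cancellation at the level of monomials, and a non-zero trigonometric polynomial vanishes only on a measure-zero subset of the torus. Finitely many entries then yield a full-measure set of good phase choices. This is conceptually cleaner and shorter, at the cost of the path--monomial bijection argument and a measure-theoretic (or Zariski-density) black box; the paper's inductive construction avoids this machinery and gives, at each step, an explicit finite set of excluded phases, which could in principle be tracked. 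The two approaches are in spirit the same (both exploit that the set of bad phases is negligible), and the easy direction \hyperref[cnd:H1]{(H1)}~$\Rightarrow$~\hyperref[cnd:H2]{(H2)} via the support-containment bound $\mathrm{supp}(AB) \subseteq \mathrm{supp}(P_A P_B)$ together with $P_{U^\dagger} = P_U^T$ matches the paper's. One minor remark: the throwaway alternative of ``phases that are rationally independent'' is not quite the right condition for your argument---what you need is that the phases avoid the zero sets of finitely many non-trivial trigonometric polynomials, which is a full-measure (and Zariski-open dense) condition; random phases do the job.
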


We will also need the following lemma, which follows from~\cite[Lemmas 2 and 3]{borevich}, and has been initially made in the proof of~\cite[Proposition 7]{schmid}.
\begin{lem}\label{lemma}
A subgroup $\mathcal{V}$ in $\mathcal{U}_d(\mathbb{C})$ containing $\mathcal{DU}_d(\mathbb{C})$ and a matrix \hbox{$W=(w_{ij})_{i,j=1}^d$}, such that $w_{ij}\neq 0$ for any $i,j\in\{1,\ldots,d\}$, is a full unitary group.
\end{lem}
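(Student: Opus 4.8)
The plan is to derive the lemma from the structure theory of subgroups of $\mathcal{U}_d(\mathbb{C})$ that contain the diagonal torus $\mathcal{DU}_d(\mathbb{C})$, following Ref.~\cite{borevich}. To a subgroup $\mathcal{V}\supseteq\mathcal{DU}_d(\mathbb{C})$ I would attach its \emph{support pattern} $\Sigma(\mathcal{V})=\{(i,j)\,:\,V_{ij}\neq 0 \text{ for some }V\in\mathcal{V}\}\subseteq\{1,\dots,d\}^2$, and the first step is to show that $\Sigma(\mathcal{V})$ is an equivalence relation on $\{1,\dots,d\}$. Reflexivity and symmetry are immediate (the latter from $V\mapsto V^{\dagger}$). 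For transitivity one uses the \emph{continuity} of the torus: if $V_{ij}\neq 0$ and $V'_{jk}\neq 0$, then for $D=\mathrm{diag}(e^{i\phi_1},\dots,e^{i\phi_d})\in\mathcal{DU}_d(\mathbb{C})$ the $(i,k)$ entry of $VDV'\in\mathcal{V}$ equals $V_{ij}\,e^{i\phi_j}\,V'_{jk}$ plus terms not depending on $\phi_j$, which is a non-constant function of $\phi_j$ and hence nonzero for all but at most one value of $\phi_j$; choosing such a $D$ places $(i,k)$ in $\Sigma(\mathcal{V})$.

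The second and principal step is to show that the evident inclusion $\mathcal{V}\subseteq G_{\Sigma(\mathcal{V})}$ is an \emph{equality}, where $G_{\Sigma(\mathcal{V})}:=\{V\in\mathcal{U}_d(\mathbb{C})\,:\,V_{ij}=0 \text{ whenever }(i,j)\notin\Sigma(\mathcal{V})\}$. This is exactly what Lemmas~2 and~3 of Ref.~\cite{borevich} provide (and what was used in the proof of Proposition~7 of Ref.~\cite{schmid}). Structurally, once $\Sigma(\mathcal{V})$ is known to be an equivalence relation, $G_{\Sigma(\mathcal{V})}$ is block diagonal after a fixed relabelling of the coordinates, so it suffices to reconstruct each block, and within a single block the statement reduces to the $d=2$ fact that $\mathcal{DU}_2(\mathbb{C})$ together with one non-monomial unitary (equivalently, one $2\times 2$ unitary with all entries nonzero) generates $\mathcal{U}_2(\mathbb{C})$. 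That $d=2$ fact can be verified directly: conjugating the given $2\times 2$ unitary by torus elements reduces it, up to a global phase, to a single nontrivial rotation about a fixed axis $\hat{\v{m}}\neq\pm\hat{z}$; conjugates of this rotation by the available rotations $R_{\hat{z}}(\theta)$, together with their mutual products, generate all rotations about all axes orthogonal to $\hat{z}$, and these combined with $\{R_{\hat{z}}(\theta)\}$ and the global phase give all of $\mathcal{U}_2(\mathbb{C})$ (this is the same Bloch-sphere mechanism used in Sec.~\ref{sec:coherence_qubit}).

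Granting the two steps, the lemma follows at once: the hypothesis furnishes $W\in\mathcal{V}$ with $w_{ij}\neq 0$ for all $i,j$, so $\Sigma(\mathcal{V})=\{1,\dots,d\}^2$ is the one-block equivalence relation, hence $G_{\Sigma(\mathcal{V})}=\mathcal{U}_d(\mathbb{C})$ and therefore $\mathcal{V}=\mathcal{U}_d(\mathbb{C})$. I expect the genuine obstacle to be the second step, i.e.\ actually \emph{producing} arbitrary off-diagonal generators inside $\mathcal{V}$ rather than merely locating where nonzero entries can occur; this is the content borrowed from Refs.~\cite{borevich,schmid}, and it relies essentially on having the full continuous torus $\mathcal{DU}_d(\mathbb{C})$ available, both to avoid accidental cancellations (as already in the transitivity argument) and to run the $SU(2)$ sub-argument exactly rather than only to within closure. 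Equivalently, since $\mathcal{V}\supseteq\mathcal{DU}_d(\mathbb{C})$ and $W\in\mathcal{V}$ imply $\mathcal{V}\supseteq\langle\mathcal{DU}_d(\mathbb{C}),\,W\mathcal{DU}_d(\mathbb{C})W^{\dagger}\rangle$, one may read the lemma as the statement that two maximal tori of $\mathcal{U}_d(\mathbb{C})$ in sufficiently general position---here guaranteed by $W$ having no zero entries---generate the whole group; the root-space decomposition of the Lie algebra of $\mathcal{U}_d(\mathbb{C})$ makes this transparent at the infinitesimal level, but the passage from the infinitesimal to the global statement again needs the exactness argument above.
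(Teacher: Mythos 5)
Your proposal is correct and takes essentially the same approach as the paper: the paper does not prove this lemma either, but simply records that it follows from Lemmas 2 and 3 of Borevich (as used in the proof of Proposition 7 of Schmid), and you invoke exactly the same references for the decisive step, namely the equality $\mathcal{V}=G_{\Sigma(\mathcal{V})}$. Your surrounding elaboration (the support pattern being an equivalence relation, the torus-generic choice of $D$ to secure transitivity without cancellations, and the specialisation to the one-block relation when $W$ has no zero entries) accurately reflects the structure underlying the cited result; the only slightly loose spot is the claim that within a block of size $\geq 3$ the matter "reduces to $d=2$'': that still needs a Givens-rotation/Lie-algebraic argument to promote the $2\times 2$ sub-blocks you can reach to the full $\mathcal{U}_k(\mathbb{C})$, but since you explicitly defer that content to Borevich's lemmas rather than claim it as your own derivation, the proof stands on the same footing as the paper's.
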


We can now state and prove the main result of this section.

\begin{thm}
\label{proposition}
If $U\in\mathcal{U}_d(\mathbb{C})$, appearing in the definition of $\F_B$, satisfies \hyperref[cnd:H2]{(H2)}, then any unitary matrix can be written as a~product comprised of unitary matrices from $\F_A$ and from $\F_B$. 
\end{thm}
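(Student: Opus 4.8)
The plan is to show that the subgroup $\mathcal{V}\subseteq\mathcal{U}_d(\mathbb{C})$ generated by $\F_A\cup\F_B$ satisfies the hypotheses of Lemma~\ref{lemma}, and therefore equals the whole unitary group — which is exactly the claim, since every element of a group generated by a family of subgroups is a finite product of elements of those subgroups.

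First I would record two elementary facts. The set $\F_A=\mathcal{DU}_d(\mathbb{C})$ is a group, and $\F_B=\{U^{\dagger}DU:D\in\mathcal{DU}_d(\mathbb{C})\}$ is its conjugate by $U$, hence also a group; consequently $\mathcal{V}$ is precisely the set of all finite products of matrices taken from $\F_A$ and $\F_B$. In particular $\mathcal{V}\supseteq\F_A=\mathcal{DU}_d(\mathbb{C})$, which is the first requirement in Lemma~\ref{lemma}.

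Next I would use Proposition~\ref{prop:equiv_H1_H2}: since $U$ satisfies \hyperref[cnd:H2]{(H2)}, it also satisfies \hyperref[cnd:H1]{(H1)}, so there are $M\in\mathbb{N}$ and $D_1,\dots,D_{2M}\in\mathcal{DU}_d(\mathbb{C})$ for which
\begin{equation*}
W:=D_1U^{\dagger}D_2UD_3U^{\dagger}D_4U\cdots D_{2M-1}U^{\dagger}D_{2M}U
\end{equation*}
has all entries nonzero. The point is then only to parenthesise,
\begin{equation*}
W=D_1\cdot\bigl(U^{\dagger}D_2U\bigr)\cdot D_3\cdot\bigl(U^{\dagger}D_4U\bigr)\cdots D_{2M-1}\cdot\bigl(U^{\dagger}D_{2M}U\bigr),
\end{equation*}
so that the odd factors lie in $\F_A$ and the even factors lie in $\F_B$; hence $W\in\mathcal{V}$ and $W$ has full support.

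Finally I would invoke Lemma~\ref{lemma} for $\mathcal{V}$: it is a subgroup of $\mathcal{U}_d(\mathbb{C})$ containing $\mathcal{DU}_d(\mathbb{C})$ together with the everywhere-nonzero matrix $W$, so $\mathcal{V}=\mathcal{U}_d(\mathbb{C})$. Combined with the first step, this says that an arbitrary unitary is a finite product of matrices from $\F_A$ and $\F_B$, which is the assertion. I do not expect a genuine obstacle here: the mathematical weight is carried entirely by Proposition~\ref{prop:equiv_H1_H2} and by Lemma~\ref{lemma} (itself descending from \cite{borevich,schmid}), and the only thing requiring care is the bookkeeping — verifying that $\F_A$ and $\F_B$ are genuine subgroups (so that ``group generated by'' coincides with ``set of finite products'') and that the alternating word in \hyperref[cnd:H1]{(H1)} splits correctly into $\F_A$- and $\F_B$-blocks. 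The quantitative question of how many such factors are needed is deferred to Sec.~\ref{sec:coherence_bounds}.
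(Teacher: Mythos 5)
Your proof is correct and follows essentially the same route as the paper's: use Proposition~\ref{prop:equiv_H1_H2} to translate \hyperref[cnd:H2]{(H2)} into an alternating word $W$ with full support, observe it lies in the group generated by $\F_A\cup\F_B$, and invoke Lemma~\ref{lemma}. Your write-up is marginally more careful about the group-theoretic bookkeeping (noting that $\F_A$ and $\F_B$ are subgroups so inverses come for free), but this is the same argument the paper presents.
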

\begin{proof}
Let $M\in\mathbb{N}$ and {$D_1,\ldots,D_{2M}\in \mathcal{DU}_d(\mathbb{C})$} be such that  \hyperref[cnd:H2]{(H2)} holds. Then, using Proposition~\ref{prop:equiv_H1_H2}, the matrix
\begin{align}\label{non-zero}
D_1U^{\dagger}D_2UD_3U^{\dagger}D_4U\ldots D_{2M-1}U^{\dagger}D_{2M}U
\end{align}
has all non-zero entries. Note that this matrix is a product comprised of $M$ matrices from $\F_A$ and $M$ matrices from $\F_B$. Next, using Lemma~\ref{lemma}, we obtain that any unitary matrix can be written as a product of a number of matrices from Eq.~\eqref{non-zero} and a number of appropriately chosen diagonal unitary matrices. \end{proof}

\begin{rem}
According to~\cite[Lemmas 2 and 3]{borevich}, we do not even need to demand in Lemma~\ref{lemma} that $\mathcal{V}$ contains a matrix with all non-zero entries. Instead we can only assume that $\mathcal{V}$ contains a matrix $Q=(q_{ij})_{i,j=1}^d$ such that
\begin{align}
\forall_{k,l\in\{1,\ldots,d\}}\;\exists_{m\in\{1,\ldots,d\}}\;\;\;q_{km}\neq0\;\;\;\text{and}\;\;\;q_{lm}\neq 0.
\end{align}
This would, however, yield (due to~\cite[Lemmas 2 and 3]{borevich}) that some matrix $W$ with all non-zero entries also belongs to $\mathcal{V}$.  
\end{rem}

To see that there exist both unitary matrices for which condition~\hyperref[cnd:H2]{(H2)} is satisfied and unitary matrices for which it is violated, let us refer to the theory of finite Markov chains (see Appendix~\ref{app:markov} for a brief review of the subject). For a chosen finite Markov chain, let $T$ denote its transition matrix. It is well known that in the case of $T$ being irreducible and aperiodic~\cite{haggstrom_2002}, e.g.,
\begin{align}\label{def:P}
{T}=0.1\left[\begin{array}{cccccc}
5&2&2&1&0&0\\
2&5&2&1&0&0\\
1&1&1&1&2&4\\
2&2&1&1&1&3\\
0&0&3&4&1&2\\
0&0&1&2&6&1
\end{array}\right],
\end{align}
there exists a finite constant $M$ such that for every \mbox{$m\geq M$} the $m$-th step transition matrix $T^m$ consists only of non-zero elements (cf. Proposition~\ref{prop:_irreducible_aperiodic} in Appendix~\ref{app:markov}).

Keeping this in mind, one can try to find a unitary matrix $U\in\mathcal{U}_d(\mathbb{C})$ such that $P_U^TP_U$ (with $P_U$ induced by $U$ via Eq.~\eqref{def:P_U}) has the same pattern of zero and non-zero elements as some irreducible and aperiodic transition matrix $T$, e.g., in the case of $T$ given by Eq.~\eqref{def:P} one can think of
\begin{align}\label{def:U}
\begin{aligned}
U=\frac{1}{2} \left[\begin{array}{cccccc}
0&0&-1&1&1&1\\
1&1&-1&-1&0&0\\
-\sqrt{2}&\sqrt{2}&0&0&0&0\\
-1&-1&-1&-1&0&0\\
0&0&0&0&\sqrt{2}&-\sqrt{2}\\
0&0&1&-1&1&1
\end{array}\right].
\end{aligned}
\end{align}
We then immediately get that $U$ enjoys \hyperref[cnd:H2]{(H2)} (for $U$ given by Eq.~\eqref{def:U}, condition \hyperref[cnd:H2]{(H2)} holds with $M=2$). Another example, illustrating how to find in practice the number $M$ appearing in \hyperref[cnd:H2]{(H2)} (and not only claim that such a~number exists and is finite), is provided in Appendix~\ref{app:markov}.

Proceeding now to ``negative'' examples, one shall take into account that both assumptions in Proposition~\ref{prop:_irreducible_aperiodic} in Appendix~\ref{app:markov} concerning finite Markov chains (i.e. irreducibility and aperiodicity of a transition matrix $T$) are essential. Hence, we may conclude the following:
\begin{enumerate}
    \item\label{1} Any unitary matrix $U$ such that $P_U^TP_U$ is reducible violates condition~\hyperref[cnd:H2]{(H2)}.
    
    \item\label{2} Any unitary matrix $U$ such that $P_U^TP_U$ is periodic violates condition~\hyperref[cnd:H2]{(H2)}. Note, however, that for $U\in\mathcal{U}_d(\mathbb{C})$ every diagonal element of $P_U^TP_U$ is positive, whence $P_U^TP_U$ cannot be periodic unless it is reducible -- cf. Propositions~\ref{prop:loop}~and~\ref{prop:graph} in Appendix~\ref{app:markov} -- and thus we end up in case~\ref{1}.
\end{enumerate}

Now, it suffices to observe that all matrices $U\in\mathcal{U}_d(\mathbb{C})$, which become block-diagonal after being multiplied by an appropriate permutation matrix (possibly also the identity matrix), violate condition~\hyperref[cnd:H2]{(H2)} (we conjecture that these are the only matrices violating condition~\hyperref[cnd:H2]{(H2)}). As precise examples we can either provide block-diagonal unitary matrices, such as 
\begin{align}
\left[
\begin{array}{cccc}
1&0&0&0\\
0&1&0&0\\
0&0&\frac{1}{\sqrt{2}}&-\frac{1}{\sqrt{2}}\\
0&0&\frac{1}{\sqrt{2}}&\frac{1}{\sqrt{2}}
\end{array}
\right],\quad
\left[
\begin{array}{cccc}
-\frac{1}{\sqrt{2}}&0&0&\frac{1}{\sqrt{2}}\\
0&-\frac{1}{\sqrt{2}}&\frac{1}{\sqrt{2}}&0\\
0&\frac{1}{\sqrt{2}}&\frac{1}{\sqrt{2}}&0\\
\frac{1}{\sqrt{2}}&0&0&\frac{1}{\sqrt{2}}
\end{array}
\right]
\end{align}
or non-block-diagonal unitary matrices, such as
\begin{align}
\left[
\begin{array}{cccc}
0&1&0&0\\
0&0&1&0\\
0&0&0&1\\
1&0&0&0
\end{array}
\right],\quad
\left[
\begin{array}{cccc}
0&-\frac{1}{\sqrt{2}}&\frac{1}{\sqrt{2}}&0\\
-\frac{1}{\sqrt{2}}&0&0&\frac{1}{\sqrt{2}}\\
\frac{1}{\sqrt{2}}&0&0&\frac{1}{\sqrt{2}}\\
0&\frac{1}{\sqrt{2}}&\frac{1}{\sqrt{2}}&0
\end{array}
\right],
\end{align}
which can be transformed to block-diagonal matrices by multiplying them by appropriate permutation matrices.


\subsection{Bounds on the number of strokes to get all operations}
\label{sec:coherence_bounds}

We start with a statement on the lower bound on the number of resource engine's strokes needed to produce any unitary operation, the proof of which can be found in Appendix~\ref{app:proofs-lower-bound}. 

\begin{prop}
\label{prop:lower_bound}
Let $d\geq 3$ and suppose that the sets $\F_A$ and $\F_B$ are given by Eqs.~\eqref{def:A}-\eqref{def:B} with an arbitrarily fixed matrix $U\in\mathcal{U}_d(\mathbb{C})$. The number $N$ of resource engine's strokes (i.e., the number $2M$ of alternated Alice's and Bob's operations) needed to generate an arbitrary operation $V\in \mathcal{U}_d(\mathbb{C})$   
is bounded from below by 
\begin{align}
    N\geq \frac{2\log\left(d-1\right)}{\log\left((d-2)c_U+1\right)}, 
\end{align}
where
\begin{align}\label{def:c_U}
\begin{aligned}
c_U:&=\max_{a,b\in\{1,\ldots,d\}:\;a\neq b}\sum_{j=1}^d\left|\bar{u}_{j,a}\right|\left|u_{j,b}\right|\\
&=\max_{a,b\in\{1,\ldots,d\}:\;a\neq b}\left\langle a\left|\left(X_U^TX_U\right)\right|b\right\rangle
\end{aligned}
\end{align}
and $[X_U]_{i,j}=|u_{i,j}|$ for all \mbox{$i,j\in\{1,\ldots,d\}$}. 
\end{prop}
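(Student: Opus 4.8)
The plan is to track how ``mixed'' a unitary can become after a bounded number of strokes, using the entrywise-modulus matrix $X_U$ as a proxy for mixing. First I would set up the key quantity: for any unitary $W$, define $Y_W$ by $[Y_W]_{ij}=|w_{ij}|$, and introduce the off-diagonal coherence parameter $c(W):=\max_{a\neq b}\langle a|(Y_W^T Y_W)|b\rangle$. The goal is to show that if $W$ is a product of $2M$ alternating operations from $\F_A$ and $\F_B$, then $c(W)$ cannot be too large — more precisely, I would aim to bound the relevant entries of $Y_W^T Y_W$ in terms of $M$ and $c_U$. The point is that multiplying $W$ by a diagonal unitary (Alice's operation) leaves $Y_W$ unchanged, so only Bob's $M$ insertions of $U^\dagger D U$ can spread the support, and each such insertion mixes columns only as much as the overlap matrix $X_U^T X_U$ allows, with $c_U$ controlling the worst off-diagonal overlap.

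The second ingredient is a lower bound on how mixed $W$ \emph{must} be in order to equal a fully-spread target $V$. I would choose $V$ to be a unitary all of whose entries have modulus $1/\sqrt d$ (a complex Hadamard matrix, which exists in every dimension, e.g.\ the Fourier matrix), so that $Y_V^T Y_V$ is the all-ones matrix and every off-diagonal entry equals $1$. Thus to realise such a $V$ the coherence parameter $c(W)$ must reach (essentially) its maximal value. Then I would combine the upper estimate from the first paragraph — something of the form ``after $2M$ strokes the off-diagonal entries of $Y_W^T Y_W$ are at most $((d-2)c_U+1)^{M}$-ish divided by $(d-1)$'' — with the requirement that this quantity be $\geq 1$, and solve for $M$. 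Rearranging $((d-2)c_U+1)^{M}\geq d-1$ gives $M\geq \log(d-1)/\log((d-2)c_U+1)$, and since $N=2M$ this is exactly the claimed bound $N\geq 2\log(d-1)/\log((d-2)c_U+1)$.

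To make the recursion precise I expect to argue by induction on the number of Bob-strokes: writing $W_{k+1}=U^\dagger D_{k+1} U\, W_k$ (diagonal $A$-operations absorbed harmlessly), I would show $[Y_{W_{k+1}}^T Y_{W_{k+1}}]_{ab}$ for $a\neq b$ is a convex-type combination of entries of $Y_{W_k}^T Y_{W_k}$ weighted by entries of $X_U^T X_U$, so that a single Bob-step multiplies the maximal off-diagonal mass by at most $(d-2)c_U+1$ while the diagonal (which is $\le 1$ throughout, since columns of a unitary are normalised) acts as the $+1$. Iterating $M$ times from $W_0=\iden$ (where $Y_{W_0}^T Y_{W_0}=\iden$, all off-diagonals zero — this is why one needs the ``$+1$'' from the diagonal to seed the growth) yields the off-diagonal bound, and comparing with the Hadamard target finishes it.

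The main obstacle I anticipate is making the one-step mixing inequality genuinely tight enough: one has to be careful that $|w_{ij}|$ and the triangle inequality interact correctly through $W_{k+1}=U^\dagger D_{k+1} U W_k$, since $|[U^\dagger D U W_k]_{ij}|\le \sum_{l} |[U^\dagger D U]_{il}|\,|[W_k]_{lj}| = [Y_U^T Y_U Y_{W_k}]_{ij}$ holds, but then forming $Y_{W_{k+1}}^T Y_{W_{k+1}}$ and extracting an off-diagonal entry requires bounding $\sum_j [Y_U^T Y_U Y_{W_k}]_{ja}[Y_U^T Y_U Y_{W_k}]_{jb}$, and squeezing a factor exactly $(d-2)c_U+1$ (rather than something weaker like $d\,c_U$ or $(d-1)c_U+1$) out of this double sum — separating the $l=l'$ ``diagonal'' contribution from the $(d-2)$ genuinely off-diagonal cross terms each bounded by $c_U$ — is where the bookkeeping is delicate and where I would spend the most care.
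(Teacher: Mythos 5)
Your proposal diverges from the paper's proof and has a gap in the key step. The paper tracks the scalar $v_M:=\max_{a\neq b}|w_{ab}|$, the largest off-diagonal modulus of the running unitary, and uses a non-identity permutation matrix as the target (which forces $v_M=1$). The one-step recursion there is immediate: if $V,W$ are unitary with off-diagonal moduli bounded by $v,w$ respectively, then for $a\neq b$,
\begin{align}
|\langle a|VDW|b\rangle|\le \sum_{j\neq a,b}|v_{aj}||w_{jb}|+|v_{aa}||w_{ab}|+|v_{ab}||w_{bb}|\le(d-2)vw+v+w,
\end{align}
using $|v_{aa}|,|w_{bb}|\le 1$ from unitarity; together with $v_1\le c_U$ this gives $v_M\le v_{M-1}\bigl((d-2)c_U+1\bigr)+c_U$, and solving the recursion and requiring $v_M\ge 1$ yields the stated bound.

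You instead want to track $c(W)=\max_{a\neq b}\langle a|Y_W^T Y_W|b\rangle$ and target a complex Hadamard matrix. That choice of witness is legitimate, but the recursion you hope for does not close. The entrywise triangle inequality gives $Y_{W_{k+1}}\le (X_U^TX_U)\,Y_{W_k}$, hence $Y_{W_{k+1}}^TY_{W_{k+1}}\le Y_{W_k}^T(X_U^TX_U)^2\,Y_{W_k}$. The $(a,b)$ entry of the right-hand side is $\sum_{l,l'}[Y_{W_k}]_{la}\,[(X_U^TX_U)^2]_{ll'}\,[Y_{W_k}]_{l'b}$, a bilinear expression in the raw entries of $Y_{W_k}$ that cannot be re-expressed as a combination of entries of $Y_{W_k}^TY_{W_k}$ alone. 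Worse, the diagonal of the sandwiched matrix satisfies $[(X_U^TX_U)^2]_{ll}=1+\sum_{i\neq l}[X_U^TX_U]_{il}^2\ge 1$ and can be as large as $1+(d-1)c_U^2$, so the ``$+1$ from the diagonal'' that you rely on to seed the growth is \emph{not} bounded by $1$ in your setup -- it is bounded by $1$ in the paper's argument only because there the diagonal entries of the unitary $W$ itself are used, and those genuinely have modulus at most $1$. I do not see how to rescue this without a substantively new idea. Tracking the maximal off-diagonal modulus of $W$ directly, as the paper does, sidesteps the Gram-matrix bookkeeping entirely and already delivers the bound.
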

While the above lower bound is not tight (which can be seen by noting that for two mutually unbiased bases with $c_U=1$, the bound states $N\geq 2$, and we know that already for $d=2$ one needs 3 strokes), it nevertheless provides useful information when Alice's and Bob's bases do not differ much and so more strokes are required. This is illustrated in Fig.~\ref{fig:lower}, where we plot this bound for a family of unitaries $U$ defining $\F_B$ continuously connecting the identity matrix with the Fourier matrix.

\begin{figure}[t]
	\centering	\includegraphics[width=0.99\columnwidth]{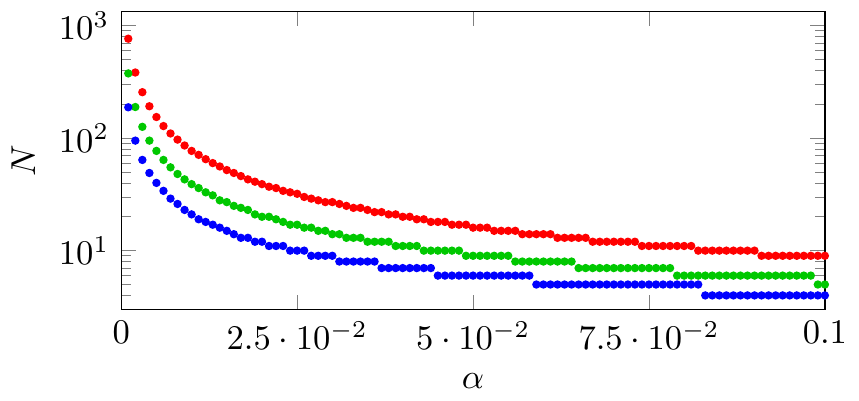};
	\caption{\textbf{Lower bound.} The lower bound from Proposition~\ref{prop:lower_bound} for the number of coherence engine's strokes needed to generate every unitary transformation, when the unitary $U$ defining $\F_B$ is given by $F_d^\alpha$, with $F_d$ denoting the $d$-dimensional Fourier matrix, and the three plots correspond to $d=3$ (top red), $d=5$ (middle green) and $d=10$ (bottom blue).
	}
	\label{fig:lower}
\end{figure}

The upper bound on the number of strokes needed to produce any operation $V\in\mathcal{U}_d(\mathbb{C})$ (with an arbitrary positive even number $d$) will be implicitly expressed by means of the number $N_F$ of strokes needed to generate the Fourier matrix.
\begin{prop}
    Let $d\geq 4$ be an arbitrary even number and suppose that the sets $\F_A$ and $\F_B$ are given by Eqs.~\eqref{def:A}-\eqref{def:B} with an arbitrarily fixed matrix \mbox{$U\in\mathcal{U}_d(\mathbb{C})$}. Moreover, let $N_F$ denote the number of resource engine's strokes needed to generate the Fourier matrix~$F_d$ of order $d$. Then, the number $N$ of resource engine's strokes needed to generate an arbitrary operation $V\in \mathcal{U}_d(\mathbb{C})$ is bounded from above by    
    \begin{align}
        N\leq 6d\left(N_F+1\right)+1.
    \end{align}
\end{prop}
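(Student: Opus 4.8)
The plan is to reduce the problem of generating an arbitrary $V\in\mathcal{U}_d(\mathbb{C})$ to generating a controlled number of copies of the Fourier matrix $F_d$, using the fact that $F_d$ mixes all computational basis states and hence $\F_A\cup\{F_d\}$ generates the full unitary group with some fixed overhead. First I would invoke the standard decomposition of an arbitrary unitary into a product of two-level (Givens-type) rotations: any $V\in\mathcal{U}_d(\mathbb{C})$ can be written as a product of at most $d(d-1)/2$ unitaries, each acting nontrivially only on a two-dimensional subspace spanned by two computational basis vectors $\ket{i},\ket{j}$. Since $d$ is even, I would then group these two-level subspaces cleverly so that the two-level rotations can be batched into $O(d)$ layers, each layer being a block-diagonal unitary consisting of $d/2$ disjoint $2\times 2$ blocks — this is the key combinatorial step and uses an edge-colouring of the complete graph $K_d$ into $d-1$ perfect matchings.

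Next I would show that each such layer of $d/2$ disjoint $2\times 2$ blocks can be realised using a bounded number of Fourier matrices and diagonal unitaries. The idea here is that conjugating $F_d$ (or a tensor-like arrangement built from smaller Fourier matrices) by diagonal unitaries and permutations — which themselves lie in the group generated by $\F_A$ and one copy of $F_d$, since $F_d$ has all nonzero entries and Lemma~\ref{lemma} applies — lets one synthesise any block-diagonal unitary with $2\times 2$ blocks. Concretely, a $2\times 2$ Fourier matrix is just a Hadamard, and an arbitrary $2\times 2$ unitary is a product of three diagonal-in-computational-basis factors sandwiching two Hadamards (Euler-angle style); doing this simultaneously in all $d/2$ blocks costs a constant number of ``global'' Fourier-type operations plus diagonals. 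Bookkeeping the constants: each two-level layer needs on the order of a constant times one Fourier generation, there are $O(d)$ layers (precisely, the $d(d-1)/2$ rotations fit into $d-1$ matchings, and handling each matching costs a bounded multiple of $(N_F+1)$ strokes), and one also has to account for the permutations that move subspaces into the canonical positions the Fourier construction expects. Carefully chasing these multiplicities should land at $N \leq 6d(N_F+1)+1$, where the ``$+1$'' accounts for the initial state-preparation stroke and the factor $6d$ collects the $d-1$ matchings, the three Euler factors per block, and a factor of $2$ from the alternating Alice/Bob structure.

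The main obstacle I expect is the precise accounting that keeps the prefactor at exactly $6d$ rather than something larger: one must verify that a whole matching-layer of $d/2$ two-level rotations can be implemented with at most $6(N_F+1)$ strokes (independent of $d$), which requires that the permutations rearranging the $d/2$ pairs into adjacent blocks, the Hadamard-type mixing within each block, and the Euler-angle diagonals can all be absorbed into a constant number of $F_d$-generations and free diagonal unitaries — crucially, a single permutation or a single block-diagonal Hadamard layer must be expressible with $O(N_F)$ strokes, not $O(d\cdot N_F)$. Establishing this likely hinges on the observation that once $F_d$ is available, any permutation matrix and any ``layered'' block structure is available at the cost of a bounded number of $F_d$'s and diagonals (again via Lemma~\ref{lemma} applied to the relevant subgroup), so that only the number of matchings, $d-1<d$, scales with $d$. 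I would also need to double-check the parity/evenness hypothesis is genuinely used — it guarantees that perfect matchings of $K_d$ exist and that the block-diagonal $2\times2$ decomposition tiles cleanly with no leftover odd block — and confirm that the bound degrades gracefully (still holds) when $U$ itself already generates everything with few strokes, since the statement is phrased uniformly in $U$ via $N_F$.
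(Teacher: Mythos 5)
Your approach is genuinely different from the paper's, but it has a real gap that you yourself flag and never close. The paper's proof is a one-liner: it cites L\'opez Pastor et al.\ (Ref.~\cite{LopezPastor}), who give an explicit deterministic decomposition of any $V\in\mathcal{U}_d(\mathbb{C})$ (for even $d$) as a product of exactly $6d$ copies of $F_d$ interleaved with $6d+1$ diagonal unitaries; since each $F_d$ costs $N_F$ strokes and each diagonal belongs to $\F_A$ and so costs one stroke, the count $6d\,N_F + (6d+1) = 6d(N_F+1)+1$ is immediate. You instead attempt to rebuild such a decomposition from scratch via Givens rotations and an edge-colouring of $K_d$ into $d-1$ perfect matchings.

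The gap is precisely the step you mark as ``the main obstacle'': you need that one matching layer (a block-diagonal unitary of $d/2$ disjoint $2\times 2$ blocks, together with the permutations that align the pairs) can be realised with a number of $F_d$'s and diagonals that is \emph{bounded independently of $d$}, so that the total cost scales only with the $d-1$ matchings. You justify this by appealing to Lemma~\ref{lemma}, but that lemma is purely qualitative: it asserts that the generated subgroup equals $\mathcal{U}_d(\mathbb{C})$, and gives no bound whatsoever on the word length needed to express a given element. So ``available via Lemma~\ref{lemma}'' does not translate into ``available in $O(N_F)$ strokes.'' In fact, proving that a layered block-$2\times 2$ structure (and the permutations feeding it) can be synthesised from a constant number of global Fourier transforms and phase masks is essentially the nontrivial content of the L\'opez Pastor result; without either re-deriving that or citing it, your bookkeeping to the exact prefactor $6d$ is not supported. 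If you want to keep your combinatorial framing, the honest fix is to import the quantitative decomposition from Ref.~\cite{LopezPastor} at the point where you currently invoke Lemma~\ref{lemma} --- at which point the matching machinery becomes unnecessary and the argument collapses to the paper's.
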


\begin{proof}
    In Ref.~\cite{LopezPastor}, an analytical and deterministic procedure to design an implementation of an arbitrary unitary transformation is provided with the use of Fourier transforms and phase masks. According to the results established there, an arbitrary unitary of order $d$ requires, in the case of $d$ being an even number, $6d$ Fourier matrices and $6d+1$ diagonal unitary matrices (implemented by phase masks). The assertion of this proposition is then a~straightforward consequence of this fact.
\end{proof}

In the end let us indicate that, according to Ref.~\cite{schmid},  every $U\in\mathcal{U}_d(\mathbb{C})$ (no matter whether $d$ is even or odd)  can be decomposed as a sequence of unitary circulant and unitary diagonal matrices. What is more, it can further be written as a product comprised of unitary diagonal matrices and the discrete Fourier transforms (since any circulant matrix can be written as a product of the Fourier matrix, some diagonal matrix, and the inverse of a Fourier matrix). However, to the best of our knowledge, there are no other results, apart from Ref.~\cite{LopezPastor}, to which we refer in the proof above, on the number of unitary diagonal matrices and Fourier matrices needed to generate a given unitary matrix. In Ref.~\cite{product_circulant_diagonal}, the authors prove that an arbitrary complex matrix can be written as a product of circulant and diagonal (but not necessarily unitary) matrices with the number of factors being $2d-1$ at most.


\subsection{Condition on getting the optimal state}
\label{sec:coherence_optimal}

In this section, we focus on the problem of using a coherence engine to generate a maximally mutually coherent state $\ket{\psi_*}$ with respect to Alice's and Bob's distinguished bases. Such a state has the following form,
\begin{subequations}    
\begin{align}
\left|\psi_*\right\rangle
&=\frac{1}{\sqrt{d}}
\left(e^{i\alpha_1}|1\rangle+\ldots+e^{i\alpha_d}\left|d\right\rangle\right)\\
&=\frac{1}{\sqrt{d}}
\left(e^{i\beta_1}U^{\dagger}|1\rangle+\ldots+e^{i\beta_d}U^{\dagger}\left|d\right\rangle\right),
\end{align}
\end{subequations}
and we discuss their existence for arbitrary $d$-dimensional systems in Appendix~\ref{app:mutually_coh}. In particular, we ask for which matrices $U$ (which determine the set $\F_B$), the state $\ket{\psi_*}$ can be generated in only three strokes (i.e., Alice prepares a state from $F_A$, sends it to Bob who transforms it via one of the unitaries belonging to $\F_B$, and finally sends it back to Alice, who performs a unitary from $\F_A$). A~set of necessary conditions for this task is given by the following result.

\begin{prop}\label{rem:necessary}
Let $U=(u_{ij})_{i,j=1}^d\in\mathcal{U}_d(\mathbb{C})$. The following conditions are necessary for a coherence engine to produce a maximally mutually coherent state with just three strokes:
\begin{align}
\label{eq:necessary}
\underset{l}{\exists}\;\underset{m}{\forall}\;\max_{i}\left|\bar{u}_{im}u_{il}\right|&\leq\frac{1}{2}\left(\sum_{j=1}^d\left|\bar{u}_{jm}u_{jl}\right|+\frac{1}{\sqrt{d}}\right),
\end{align}
where $l,m,i\in\{1,\dots d\}$. Equivalently,
\begin{align}
\begin{aligned}\label{eq:necessary_equiv}
\!\!\! \max_{l}\min_{m}\bigg(\!\sum_{j=1}^d\frac{\left|\bar{u}_{jm}u_{jl}\right|}{2}
\!-\!\max_{i}\left|\bar{u}_{im}u_{il}\right|\!\bigg)\!\geq\! -\frac{1}{2\sqrt{d}}.
\end{aligned}
\end{align}
\end{prop}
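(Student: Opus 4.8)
The plan is to track, stroke by stroke, the amplitudes of the state and derive a necessary inequality on the overlap $\langle i|\psi\rangle$ that the intermediate state produced by Bob must satisfy if the final state after three strokes is to be maximally mutually coherent. First I would fix the structure of the three-stroke protocol: the first stroke prepares some fixed basis state $|l\rangle\in F_A$ (the freedom is the choice of $l$, which explains the $\exists_l$ in Eq.~\eqref{eq:necessary}); the second stroke is an operation $V\in\F_B$, i.e.\ $V=U^\dagger D U$ for some $D\in\mathcal{DU}_d(\mathbb C)$, so the state after two strokes is $|\phi\rangle=U^\dagger D U|l\rangle$; the third stroke is some $D'\in\mathcal{DU}_d(\mathbb C)$, which only rephases the computational-basis amplitudes and hence cannot change the moduli $|\langle m|\phi\rangle|$. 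Since the target $|\psi_*\rangle$ satisfies $|\langle m|\psi_*\rangle|=1/\sqrt d$ for every $m$, a necessary condition is that there exists $l$ such that for every $m$ one can choose the phases in $D$ so that $|\langle m|U^\dagger D U|l\rangle|=1/\sqrt d$.

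The key computation is to expand $\langle m|U^\dagger D U|l\rangle=\sum_{j=1}^d \bar u_{jm}\, d_{jj}\, u_{jl}$, writing $d_{jj}=e^{i\theta_j}$ with the $\theta_j$ free. This is a sum of $d$ complex numbers with fixed moduli $|\bar u_{jm}u_{jl}|$ and freely adjustable phases; the set of achievable values of $\big|\sum_j e^{i\theta_j}|\bar u_{jm}u_{jl}|\big|$ is exactly the interval $\big[\max(0,\,2\max_i|\bar u_{im}u_{il}|-\sum_j|\bar u_{jm}u_{jl}|),\ \sum_j|\bar u_{jm}u_{jl}|\big]$ — the upper end from aligning all phases, the lower end governed by the ``longest vector vs.\ the rest'' triangle-inequality bound. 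For $1/\sqrt d$ to lie in this interval for the chosen $l$ and every $m$, the lower endpoint must not exceed $1/\sqrt d$, i.e.\ $2\max_i|\bar u_{im}u_{il}|-\sum_j|\bar u_{jm}u_{jl}|\le 1/\sqrt d$, which rearranges precisely to Eq.~\eqref{eq:necessary}; quantifying over $l$ and $m$ and rearranging gives the equivalent form Eq.~\eqref{eq:necessary_equiv}. (We may also need the upper-endpoint condition $\sum_j|\bar u_{jm}u_{jl}|\ge 1/\sqrt d$, but since $\sum_j|\bar u_{jm}u_{jl}|\ge \langle m|\phi\rangle$-normalisation-type bounds and in fact $\sum_j |\bar u_{jm}u_{jl}| \ge |\sum_j \bar u_{jm}u_{jl}| = \delta_{ml}$ is too weak, one checks separately that Cauchy--Schwarz makes this automatic or folds it into the stated bound; I expect the authors treat $1/\sqrt d\le\sum_j|\bar u_{jm}u_{jl}|$ as the easy direction.)

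The main obstacle is the precise characterization of the achievable interval for $\big|\sum_j e^{i\theta_j} a_j\big|$ with prescribed $a_j\ge 0$: the upper bound $\sum_j a_j$ is immediate, but the lower bound requires the observation that the minimum of the modulus, as the phases vary, is $\max(0, 2\max_i a_i - \sum_j a_j)$ — one has to argue that the collection of vectors $\{a_j e^{i\theta_j}\}$ can be arranged so the partial sums ``cancel'' as much as possible, and that the longest single vector sets the obstruction when it dominates the sum of the others (a one-dimensional continuity/intermediate-value argument: sweep the phases continuously and the modulus of the sum varies continuously between these extremes). Once that lemma is in hand, the rest is bookkeeping: identifying which factor in the three-stroke word carries the free phases (only Bob's $D$, since the outer $D'$ and the choice of $|l\rangle$ contribute no new modulus freedom), and rewriting the resulting inequality in the two stated equivalent forms.
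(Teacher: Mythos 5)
Your proposal follows the same route as the paper: reduce the three-stroke task to the requirement that $U^{\dagger}D U$ have a flat $l$-th column, expand $\langle m|U^{\dagger}D U|l\rangle=\sum_j e^{i\theta_j}\bar u_{jm}u_{jl}$, and invoke the generalised polygon inequality for $d+1$ complex numbers of prescribed moduli $|\bar u_{jm}u_{jl}|$ and $1/\sqrt d$ to obtain Eq.~\eqref{eq:necessary}. Your side remark about the companion condition $\sum_j|\bar u_{jm}u_{jl}|\ge 1/\sqrt d$ is also well taken: it is an independent necessary constraint from the polygon inequality (not automatic — e.g.\ it can fail when a column of $U$ is close to a standard basis vector), and the paper simply chooses not to list it, which is consistent since the proposition only claims that the stated conditions are necessary, not exhaustive.
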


Detailed proof of the above proposition can be found in Appendix~\ref{app:mutually_coh} and here we briefly explain the idea behind it. First, one notes that the task of generating a maximally mutually coherent state $\ket{\psi_*}$ (using just three strokes of a~coherence engine) is possible if there exist $l\in\{1,\ldots,d\}$ and $D\in\mathcal{DU}_d(\mathbb{C})$ such that for a~given $U=(u_{ij})_{i,j=1}^d\in\mathcal{U}_d(\mathbb{C})$ the matrix $U^{\dagger}DU$ has a~flat $l$-th column (cf. Eqs. \eqref{eq:flat} and \eqref{eq:phases} in Appendix~\ref{app:mutually_coh}). Note that this requirement can be rewritten as follows: there are $l\in\{1,\ldots,d\}$, $\xi_1,\ldots,\xi_d\in(0,2\pi]$ and $\kappa_1,\ldots,\kappa_d\in(0,2\pi]$ such that 
\begin{align}
    \underset{m\in\{1,\ldots,d\}}{\forall}\quad \sum_{j=1}^de^{i\xi_j}\bar{u}_{jm}u_{jl}=e^{i\kappa_m}\frac{1}{\sqrt{d}}.
\end{align} 
The generalised polygon inequalities then imply the assertion of Proposition~\ref{rem:necessary}.

The obvious consequence of Proposition~\ref{rem:necessary} is the following corollary.
\begin{cor}\label{corollary}
If $U=(u_{ij})_{i,j=1}^d\in\mathcal{U}_d(\mathbb{C})$, appearing in the definition of $\F_B$, is such that
\begin{align}\label{eq:violated_necessary_cnd}
\min_{i}\max_{j}\left|u_{ij}\right|^2>\sqrt{\frac{1+\frac{1}{\sqrt{d}}}{2}},
\end{align}
then one of the necessary conditions from Eq.~\eqref{eq:necessary} (namely the one with $m=l$) is violated, and thus Alice and Bob are not able to generate a~maximally mutually coherent state with just three strokes of a coherence engine.
\end{cor}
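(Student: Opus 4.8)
The plan is to verify directly that, under the hypothesis, the inequality appearing in Eq.~\eqref{eq:necessary} fails for \emph{every} index $l$ once one takes $m=l$, so that no $l$ can witness the existential statement of Proposition~\ref{rem:necessary}; this is exactly the claimed violation. For $m=l$ the right-hand side of Eq.~\eqref{eq:necessary} equals $\tfrac12\big(\sum_{j=1}^d|u_{jl}|^2+\tfrac{1}{\sqrt d}\big)=\tfrac12\big(1+\tfrac{1}{\sqrt d}\big)$, since every column of a unitary is a unit vector, while the left-hand side is $\max_i|u_{il}|^2$. Hence it suffices to prove that the hypothesis implies $\max_i|u_{il}|^2>\tfrac12\big(1+\tfrac{1}{\sqrt d}\big)$ for all $l\in\{1,\dots,d\}$.

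To this end I would set $c:=\big((1+1/\sqrt d)/2\big)^{1/2}$, so that $c^2=\tfrac12(1+1/\sqrt d)$ and, for $d\ge 2$, one has $1/\sqrt2<c<1$ (the lower bound because $1/\sqrt d>0$, the upper bound because $1/\sqrt d<1$). The hypothesis $\min_i\max_j|u_{ij}|^2>c$ means that each row $i$ possesses an entry with $|u_{ij}|^2>c$; since $c>1/2$ and the squared moduli in a row sum to $1$, this dominant entry is unique in its row. This defines a map $f$ sending each row to the column index of its dominant entry.

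Next I would show $f$ is a bijection. Injectivity: if $f(i_1)=f(i_2)=j$ for $i_1\ne i_2$, then $|u_{i_1j}|^2+|u_{i_2j}|^2>2c>1$, contradicting that column $j$ has unit norm. As $f$ maps a $d$-element set injectively into a $d$-element set, it is onto. Therefore every column $l$ contains a dominant entry, i.e.\ $\max_i|u_{il}|^2>c$, and since $0<c<1$ gives $c>c^2$, we conclude $\max_i|u_{il}|^2>c^2=\tfrac12(1+1/\sqrt d)$ for every $l$. By the first paragraph this violates the $m=l$ instance of Eq.~\eqref{eq:necessary} for each $l$, and Proposition~\ref{rem:necessary} then gives the corollary.

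The one point requiring care is that the hypothesis is stated row-wise (a min over rows of a max over columns), whereas the quantity entering the necessary condition after the substitution $m=l$ is column-wise; the pigeonhole/permutation argument above is precisely what transfers ``every row has a large entry'' to ``every column has a large entry'', and it relies essentially on $c>1/2$ so that dominant entries are unique and their columns are forced to be distinct. Apart from this, the proof is just the substitution into Eq.~\eqref{eq:necessary} together with the trivial inequality $c>c^2$ valid for $c\in(0,1)$.
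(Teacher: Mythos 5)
Your proof is correct, and it fills a genuine gap: the paper introduces this corollary as an ``obvious consequence'' of Proposition~\ref{rem:necessary} and supplies no argument at all. Your opening reduction is the intended one: with $m=l$ the right-hand side of Eq.~\eqref{eq:necessary} collapses to $\tfrac12\big(1+\tfrac{1}{\sqrt d}\big)$ because each column of a unitary is a unit vector, so violating the $m=l$ instance for \emph{every} $l$ precisely requires $\max_i|u_{il}|^2>\tfrac12\big(1+\tfrac{1}{\sqrt d}\big)$ for all columns $l$. What the paper glosses over, and what you correctly isolate, is that the stated hypothesis $\min_i\max_j|u_{ij}|^2>c$ is a \emph{row-wise} min-max, whereas the quantity one must control is column-wise. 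Your pigeonhole argument is exactly the missing bridge: uniqueness of the row-dominant entry follows from $c>1/2$ together with row normalisation, injectivity of the resulting row-to-column map follows from column normalisation and $2c>1$, and bijectivity on a finite set then delivers a large entry in every column. The final observation that $c>c^2$ for $c\in(0,1)$ correctly reconciles the threshold $c$ appearing in the hypothesis with the threshold $c^2$ needed in the conclusion. In short, you have supplied a complete, correct proof of a claim the paper asserts without justification, and you have accurately identified (in your closing paragraph) the one step -- the row-to-column transfer -- that makes the statement less ``obvious'' than advertised.
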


It is also interesting to observe that whenever a matrix $U\in\mathcal{U}_d(\mathbb{C})$ is, in a certain sense, too close to the some permutation matrix, then at least one of the necessary conditions given in Eq.~\eqref{eq:necessary} is violated. This is more formally captured by the following.
    
\begin{prop}\label{prop:close_to_permutation}
Let $U=(u_{ij})_{i,j=1}^d\in\mathcal{U}_d(\mathbb{C})$ be a matrix defining $\F_B$ via Eq.~\eqref{def:B}. If for some permutation matrix $\Pi\in\mathcal{U}_d(\mathbb{C})$ and some $D\in\mathcal{DU}_d(\mathbb{C})$, we have
\begin{align}
\label{eq:close_to_permutation}
\left\|U-D\Pi\right\|_{\text{HS}}^2< 2-2\sqrt{\frac{1+\frac{1}{\sqrt{d}}}{2}},
\end{align}
where $\|\cdot\|_{\text{HS}}$ denotes the Hilbert-Schmidt norm, 
then Alice and Bob are not be able to generate a maximally mutually coherent state with just two resource engine's strokes.
\end{prop}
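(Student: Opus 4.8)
The plan is to show that the closeness hypothesis forces $U$ to have no ``flat'' row and no ``flat'' column (by flat I mean a row or column all of whose entries have modulus $1/\sqrt{d}$), and then to note that the existence of such a flat column (or row) is a necessary condition for a two-stroke coherence engine to output a maximally mutually coherent state.

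First I would rewrite the Hilbert--Schmidt distance entrywise. Let $\pi$ be the permutation realised by $\Pi$ and $D=\mathrm{diag}(d_1,\dots,d_d)$ with $|d_i|=1$, so that $(D\Pi)_{ij}=d_i\,\delta_{j,\pi(i)}$. Using $\sum_j|u_{ij}|^2=1$ in each row,
\begin{align}
\|U-D\Pi\|_{\text{HS}}^2
&=\sum_{i=1}^d\Big(\sum_{j\neq\pi(i)}|u_{ij}|^2+|u_{i\pi(i)}-d_i|^2\Big)\\
&=\sum_{i=1}^d\big(2-2\,\mathrm{Re}(\bar{d}_i u_{i\pi(i)})\big).
\end{align}
Every summand is nonnegative and at least $2-2|u_{i\pi(i)}|$, so the hypothesis $\|U-D\Pi\|_{\text{HS}}^2<2-2\sqrt{\frac{1+\frac{1}{\sqrt{d}}}{2}}$ yields, term by term, $|u_{i\pi(i)}|>\tau$ for every $i$, where $\tau:=\sqrt{\frac{1+\frac{1}{\sqrt{d}}}{2}}$. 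A one-line check ($d+\sqrt{d}>2$ for $d\ge2$, hence $\tau^2>1/d$) gives $\tau>1/\sqrt{d}$; since $\pi$ is a bijection, the entries $\{u_{i\pi(i)}\}_i$ meet every row and every column of $U$, so no row and no column of $U$ can be flat.

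Next I would recall what two strokes can produce. With the convention of Sec.~\ref{sec:coherence_optimal} (Alice prepares $|l\rangle\in F_A$, then Bob applies $U^\dagger D'U\in\F_B$), the output is $|\psi\rangle=U^\dagger D'U|l\rangle$, whose coefficients in Bob's basis $\{U^\dagger|k\rangle\}_k$ are $\langle k|U|\psi\rangle=d'_k u_{kl}$, of modulus $|u_{kl}|$; for $|\psi\rangle$ to be maximally mutually coherent these must all equal $1/\sqrt{d}$, i.e.\ column $l$ of $U$ must be flat. (In the mirror-image reading where Bob prepares $U^\dagger|j\rangle$ and Alice applies a diagonal unitary, one instead gets $|\langle i|\psi\rangle|=|u_{ji}|$, forcing row $j$ of $U$ to be flat.) Hence a flat column --- or, in the mirror reading, a flat row --- of $U$ is necessary for a two-stroke engine to deliver the optimal state; by the previous step this cannot occur under the hypothesis, which is the claim.

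I expect the one genuinely delicate point to be the short lemma isolating what ``two strokes'' can output: that the state is necessarily of the form $U^\dagger D'U|l\rangle$ (or its mirror image), so that its marginal in one of the two bases is controlled by a single column (or row) of $U$; once that is fixed, the remainder is the elementary norm estimate above. Two minor remarks. The constant $\tau$ is in fact stronger than strictly needed here --- $2-2/\sqrt{d}$ would already rule out flat rows and columns --- and is kept only for consistency with Corollary~\ref{corollary}; alternatively, one may compare the closeness hypothesis directly against the refined polygon-type necessary conditions of Proposition~\ref{rem:necessary}, again without changing the conclusion.
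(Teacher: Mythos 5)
Your proof is correct and takes a somewhat different route from the paper's, so let me compare them.

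You and the paper share the same first step: expand $\|U-D\Pi\|_{\text{HS}}^2$. The paper then \emph{sums}, obtaining $\sum_k\max_j|u_{kj}|>(d-1)+\tau$ with $\tau:=\sqrt{(1+1/\sqrt{d})/2}$, invokes a small auxiliary lemma (if a sum of $d$ numbers in $[0,1]$ exceeds $(d-1)+\delta$ then each exceeds $\delta$) to get $\min_k\max_j|u_{kj}|>\tau$, and then appeals to Corollary~\ref{corollary}. Your \emph{term-by-term} bound — each summand $2-2\,\mathrm{Re}(\bar d_iu_{i\pi(i)})$ is nonnegative, hence individually less than $2-2\tau$, hence $|u_{i\pi(i)}|>\tau$ for every $i$ — is slicker and in fact strictly subsumes the paper's intermediate step, since $\max_j|u_{ij}|\ge|u_{i\pi(i)}|>\tau$ for all $i$ already gives $\min_i\max_j|u_{ij}|>\tau$ without the auxiliary lemma.

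Where you genuinely diverge is the reduction. You derive the two-stroke necessary condition from scratch: a state of the form $U^\dagger D'U|l\rangle$ (or the mirror $D'U^\dagger|j\rangle$) is unbiased in Bob's (resp.\ Alice's) basis only if column $l$ (resp.\ row $j$) of $U$ is flat; your norm estimate rules out flat rows and columns since $\tau>1/\sqrt d$. This is elementary, self-contained, and needs none of the polygon machinery of Proposition~\ref{rem:necessary}. The paper instead routes through Corollary~\ref{corollary}, which encodes the three-stroke polygon-type necessary condition (with $m=l$). The catch: the paper's proof thereby actually proves the stronger \emph{three-stroke} impossibility, and indeed the final sentence of their proof cites Corollary~\ref{corollary}, whose conclusion reads ``three strokes''. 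This strongly suggests the ``two'' in the proposition statement is a typo for ``three''; the entire surrounding discussion (Eqs.~\eqref{eq:flat}--\eqref{eq:phases}, Proposition~\ref{rem:necessary}) is framed around the three-stroke protocol. If three strokes is the intended claim, your elementary flat-column argument does not suffice, because unbiasedness of $D_2U^\dagger D_1U|l\rangle$ in Alice's basis is controlled by column $l$ of $U^\dagger D_1U$, not of $U$ — one really needs the refined condition of Corollary~\ref{corollary}. Your own term-by-term bound still delivers the required $\min_i\max_j|u_{ij}|>\tau$, so the fix is simply to conclude via Corollary~\ref{corollary} as the paper does rather than via your flat-column lemma. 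Your observation that $2-2/\sqrt d$ already suffices for the literal two-stroke statement is a nice remark, but it does not extend to the three-stroke reading, where the constant $2-2\tau$ is matched to the polygon condition.

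Finally, your flagged ``delicate point'' — characterising what two strokes can produce — is indeed the right thing to isolate, and your resolution is correct given the paper's convention that preparation counts as a stroke: a two-stroke output is exactly $U^\dagger D'U|l\rangle$ or $D'U^\dagger|j\rangle$, nothing more.
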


\begin{proof}
First of all, note that
\begin{align}
\begin{aligned}
\left\|U-D\Pi\right\|_{\text{HS}}^2
&=\text{Tr}\left((U-D\Pi)(U-D\Pi)^{\dagger}\right)\\
&=\text{Tr}\left(UU^{\dagger}+D\Pi(D\Pi)^{\dagger}-2\text{Re}\left(U\Pi^{\dagger}D\right)\right)\\
&=2d-2\text{Re}\left(\sum_{k=1}^d\langle k\left|U\Pi^{\dagger}D\right|k\rangle\right)\\
&=2d-2\sum_{k=1}^d\text{Re}\left(u_{k\Pi^{-1}(k)}\right)e^{i\theta_k},
\end{aligned}
\end{align}
which, for appropriately chosen diagonal matrix $D$ and appropriately chosen permutation matrix $\Pi$, is equal to
\begin{align}
2d-2\sum_{k=1}^d\left|u_{k\Pi^{-1}(k)}\right|
=2d-2\sum_{k=1}^d\max_{j\in\{1,\ldots,d\}}\left|u_{kj}\right|.
\end{align}
Assumption from Eq.~\eqref{eq:close_to_permutation} therefore implies
\begin{align}
d-\sum_{k=1}^d\max_{j\in\{1,\ldots,d\}}\left|u_{kj}\right|< 1-\sqrt{\frac{1+\frac{1}{\sqrt{d}}}{2}},
\end{align}
whence
\begin{align}
\sum_{k=1}^d\max_{j\in\{1,\ldots,d\}}\left|u_{kj}\right|> (d-1)+\sqrt{\frac{1+\frac{1}{\sqrt{d}}}{2}}.
\end{align}

Now, it suffices to observe the following general fact: for any $|x_1|,\ldots,|x_d|\in[0,1]$ such that 
\begin{align}\label{eq:technical}
\sum_{i=1}^d\left|x_i\right|>(d-1)+\delta\;\;\;\text{for some}\;\;\;\delta>0,
\end{align}
we have 
\begin{align}\label{eq:technical_}
    \left|x_i\right|>\delta\;\;\;\text{for all}\;\;\;i\in\{1,\ldots,d\}.
\end{align}
Indeed, let $|x_1|,\ldots,|x_d|\in[0,1]$ be such that they satisfy Eq.~\eqref{eq:technical}, and suppose (contrary to the above claim) that there exists $i_0\in\{1,\ldots,d\}$ such that $|x_{i_0}|\leq\delta$. Then
\begin{align}
\sum_{i=1}^d\left|x_i\right|\leq \delta+\sum_{i\in\{1,\ldots,d\}\backslash\{i_0\}}\left|x_i\right|\leq (d-1)+\delta,
\end{align}
which contradicts the undertaken assumption in Eq.~\eqref{eq:technical}.

Applying the above to $\{|u_{kj}|\}_{j=1}^d$, which enjoys Eq.~\eqref{eq:technical} with $\delta=\sqrt{(1+1/\sqrt{d})/2}$, we obtain, due to Eq.~\eqref{eq:technical_}, the following:
\begin{align}
\max_{j\in \{1,\ldots,d\}}\left|u_{kj}\right|>\sqrt{\frac{1+\frac{1}{\sqrt{d}}}{2}}\;\;\;\text{for any}\;\;\;k\in\{1,\ldots,d\}.
\end{align}
Equivalently,
\begin{align}
\min_{k\in\{1,\ldots,d\}}\max_{j\in\{1,\ldots,d\}}\left|u_{kj}\right|>\sqrt{\frac{1+\frac{1}{\sqrt{d}}}{2}}.
\end{align}
According to Corollary~\ref{corollary}, we therefore obtain the assertion of this proposition.

\end{proof}


\section{Outlook}
\label{sec:outlook}

In this work we have introduced the notion of a resource engine and used it to analyse the possibility and complexity of generating quantum states and channels from two different sets of free states and operations. The main motivation behind introducing and investigating the resource engine was due to the fact that it provides a natural way of fusing two (or more) resource theories, in the spirit of recent works on multi-resource theories~\cite{sparaciari2020first}. In a sense, it allows one to study how compatible various constraints on allowed transformations are. What we have done here is just to start this kind of research by introducing ideas and analysing simple toy examples, but we hope that a formal mathematical framework allowing for fusing arbitrary resource theories can be developed. Two potential ways to achieve this could be to extend to multiple resources the framework of general convex resource theories~\cite{takagi2019general}, or the very recent framework for quantifying resource-dependent complexity of quantum channels~\cite{araiza2023resource}.

We also want to point out that the resource engine perspective may provide a unified framework to study seemingly unrelated problems within the field of quantum information. First, as explained in Sec.~\ref{sec:coherence_setting}, our toy example, given by a setting with two agents being constrained to performing unitaries diagonal in two different bases, can be directly related to the problem of compiling universal quantum circuits via Hamiltonian control~\cite{lloyd1995almost,weaver2000universality,ramakrishna}. Thus, the results on resource engines could be applied to optimise quantum control and circuit compilation. Moreover, as discussed in the paper, a similar connection can be made with the problem of performing arbitrary optical linear transformations~\cite{LopezPastor,pereira2020minimum} (e.g., Alice can be restricted to performing only phase masks, while Bob can only perform Fourier transforms or beam splitters).

Another potential use of resource engines is in the area of quantum error correction. Here, transversal gates are crucial for fault-tolerant quantum computation because of their robustness to noise, as well as their simplicity. The problem is that no quantum error correcting code can transversely implement a universal gate set~\cite{eastin2009restrictions}. However, by combining two error correcting codes with complementary transversal gate sets (so that together they are universal) via a method known as code switching~\cite{anderson2014fault,jochym2014using}, one can achieve fault-tolerance. Therefore, studies of resource engines with two restricted sets corresponding to two transversal gate sets, can allow for optimising fault-tolerant quantum computation based on code switching.

Finally, one can also find applications for resource engines in analysing the generation and distribution of entanglement in quantum networks~\cite{acin2007entanglement,kimble2008quantum,perseguers2013distribution}. Such a network is represented by a graph with vertices corresponding to communicating parties and edges denoting connections between parties via quantum channels. Thus, generating a specific multipartite entangled state in a network can be decomposed into steps, where at each step two parties connected via a channel can process their part of the state. The simplest scenario of a linear network with three nodes (so $A$ connected with $B$ and $B$ connected with $C$), can then be investigated using our resource engine with two constraints given by local operations on $AB$ and local operations on $BC$. Extending the resource engine to more constraints (equivalent of a heat engine having access to baths at various temperatures) would allow onto deal with more complex quantum networks and could help in analysing optimal protocols for generating particular entangled states (e.g., minimising the number of communication rounds).


\subsection*{Acknowledgements} 

HW-S and ZP acknowledge financial support from SONATA BIS programme of National Science Centre Poland (grant no. 2016/22/E/ST6/00062), while KK was supported by the Foundation for Polish Science through TEAM-NET project (contract no. POIR.04.04.00-00-17C1/18-00). 

\bibliographystyle{quantum}
\bibliography{Bib_channels}


\onecolumngrid
\appendix


\section{Thermomajorisation}
\label{app:thermo}

Consider a $d$-dimensional quantum system described by a Hamiltonian $H$ with eigenvalues \mbox{$E_1\leq\dots\leq E_d$}. Denote a fixed inverse temperature of the environment by~$\beta$. Then, the thermal Gibbs distribution of the system is given by
\begin{equation}
    \v{\gamma}=\frac{1}{Z}\left(e^{-\beta E_1},\dots,e^{-\beta E_d}\right),\qquad Z=\sum_{i=1}^d e^{-\beta E_i}.
\end{equation}
In order to construct a thermomajorisation curve of a probability vector $\v{p}$ (which describes the occupations of different energy levels in a given energy-incoherent state), one first needs to reorder the entries of $\v{p}$ according to the so-called $\beta$-order. To do so, denote by $\v{\pi}(\v{p})$ the reordering of $\{1,\dots, d\}$ that sorts $p_i/\gamma_i$ in a non-increasing order,
\begin{equation}
	\label{eq:betaorder}
	\frac{p_{\pi_i(\v{p})}}{\gamma_{\pi_i(\v{p})}} \geq \frac{p_{\pi_{i+1}(\v{p})}}{\gamma_{\pi_{i+1}(\v{p})}}.
\end{equation}
The $\beta$-ordered version of $\v{p}$ is then given by $\v{p}^\beta$ with $p^\beta_i=p_{\pi_i(\v{p})}$. Next, the thermomajorisation curve of $\v{p}$ relative to $\v{\gamma}$ is given by a piecewise linear concave curve on a plane that connects the points $\v{l}^{(j)}$ given by
\begin{align}
	\v{l}^{(j)} = \left(\sum_{i=1}^j \gamma_{\pi_i(\v{p})}, \sum_{i=1}^j p_{\pi_i(\v{p})}\right)
\end{align}
for $j\in\{0,\dots,d\}$, where $\v{l}^{(0)}:= (0,0)$. Finally, $\v{p}$ is said to thermomajorise $\v{q}$ (relative to $\v{\gamma}$), denoted \mbox{$\v{p} \succ_{\v{\gamma}} \v{q}$}, when the thermomajorisation curve of $\v{p}$ is never below that of $\v{q}$.


\section{Missing details in the proof of Theorem~\ref{thm:thermo_upper}}
\label{app:thm1}

We first need to show that Eq.~\eqref{eq:thermo_app1} holds, i.e.,
\begin{equation}
    \v{p}\succ_{\v{\gamma}} \v{q} \Rightarrow \bar{\v{p}}\succ_{\v{\gamma}} \v{q} \quad \mathrm{and}\quad\v{p}\succ_{\v{\Gamma}} \v{q} \Rightarrow \bar{\v{p}}\succ_{\v{\Gamma}} \v{q}.
\end{equation}
Without loss of generality, we can just prove the first of the above keeping $\v{\gamma}$ general. The way to prove it is to show that $\bar{\v{p}}\succ_{\v{\gamma}} \v{p}$ which, due to transitivity of thermomajorisation order, leads to $\bar{\v{p}}\succ_{\v{\gamma}} \v{q}$. For $d=2$ this is trivial, since $\bar{\v{p}}=\v{p}$. For $d\geq 3$, first consider two energy levels $E_i\leq E_j$, and a state $\v{r}$ with $r_i=0$ and $r_j> 0$. Then, there exists a~Gibbs-preserving operation that, on the two-dimensional subspace spanned by energy levels $E_i$ and $E_j$, achieves an arbitrary occupation $(a,r_j-a)$ for $a\in[0,r_j]$. It is explicitly given by a matrix 
\begin{equation}
\label{eq:Tij}
\renewcommand{\arraystretch}{2}
    T=\begin{bmatrix}
        1-\dfrac{\gamma_j a}{\gamma_i r_j}  &\dfrac{a}{r_j} \\ 
        \dfrac{\gamma_j a}{\gamma_i r_j} &1-\dfrac{a}{r_j} 
        \end{bmatrix}\oplus \mathbbm{1}_{{\backslash}(ij)},
\end{equation}
where $\mathbbm{1}_{{\backslash}(ij)}$ denotes the $(d -2) \times (d-2)$ identity matrix on the subspace of all energy levels except $E_i$ and $E_j$. In other words, it is always possible to move probability down the energy ladder to unoccupied levels. Since the first $(d-2)$ levels of $\bar{\v{p}}$ are unoccupied, a sequence of such operations can move the occupations from $\bar{p}_{d-1}$ down to all the energy levels, leading to a final distribution $\v{p}$.

Next, we proceed to demonstrating that Eqs.~\eqref{eq:thermo_maxmin1}-\eqref{eq:thermo_maxmin2} hold, i.e., that for $d\geq 3$ we have
\begin{subequations}
    \begin{align}\label{eq:qmax}
    \max_{\v{q}}\{q_d|\v{q}\in \T_{\v{g}}(\bar{\v{r}})\}&=\left\{\begin{array}{cc}
         r_d& \mathrm{for~} r_d\geq \frac{g_d}{g_{d-1}+g_d},  \\
        \frac{g_d(1-r_d)}{g_{d-1}}& \mathrm{for~} r_d\leq \frac{g_d}{g_{d-1}+g_d},
    \end{array}\right.\\
    \min_{\v{q}}\{q_d|\v{q}\in\T_{\v{g}}(\bar{\v{r}})\}&=0.
\end{align}
\end{subequations}
We start with the second statement. It can be simply proved by observing that for $d \geq 3$ we have (by definition) $\bar{r}_1=0$, and so using $T$ from Eq.~\eqref{eq:Tij} with $i=1$, $j=d$ and $a=r_d$, all the occupation of level $d$ can be moved to level~$1$. Since probabilities must be non-negative, the achieved value of $q_d=0$ is minimal.

To prove Eq.~\eqref{eq:qmax}, we need to resort to thermomajorisation curves. Since $\bar{\v{r}}$ has only two non-zero entries, its thermomajorisation curve is given by three points, the first one being $(0,0)$ and the third one being $(1,1)$. The middle point is given by $(g_d,r_d)$ if $r_d\geq g_d/(g_{d-1}+g_d)$ and by $(g_{d-1},1-r_d)$ otherwise. In the first case, for the thermomajorisation curve of $\v{q}$ to lie below that of $\v{r}$, its $y$ component must be no higher than $r_d$ at the $x$ component~$g_d$. Given that the thermomajorisation curve is concave, this means that $q_d\leq r_d$. In the second case, the $y$ component of the thermomajorisation curve of $\v{r}$ at $x$ component $g_d$ is given by $(1-r_d)g_d/g_{d-1}$. Therefore, for the thermomajorisation curve of $\v{q}$ to lie below that of $\v{r}$, the maximum value of $q_d$ must be upper bounded by $(1-r_d)g_d/g_{d-1}$. Moreover, this upper bound can be achieved by simply choosing $\v{q}$ with that value of $q_d$ and all others proportional to the entries of the Gibbs state, i.e., $q_i=g_i(1-q_d)/(1-g_d)$ for $i<d$.


\section{Proof of Lemma~\ref{lem:thermo_ext}}
\label{app:lem3}

\begin{proof}

We start from proving the first half of the lemma (related to distributions $\v{p}$, $\v{p}'$ and $\v{p}''$). Consider a state $\check{\v{p}}$ obtained from $\v{p}$ by thermalising the first $(d-1)$ levels of the system with respect to the cold bath, i.e.,
\begin{equation}
    \check{\v{p}}=\left(\frac{1-p_d}{1-\gamma_d}\gamma_1,\frac{1-p_d}{1-\gamma_d}\gamma_2,\dots, \frac{1-p_d}{1-\gamma_d}\gamma_{d-1},p_d\right).
\end{equation}
Clearly, $\v{p}\succ_{\v{\gamma}}\check{\v{p}}$, as thermalising a subset of states is a thermal operation. Since 
\begin{equation}
    \check{p}_d=p_d\geq \Gamma_d \geq \gamma_d,    
\end{equation}
and the remaining entries $\check{p}_i$ are proportional to the entries of the thermal distribution $\v{\gamma}$, the thermomajorisation curve of $\check{\v{p}}$ with respect to $\v{\gamma}$ is defined by three points: $(0,0)$, $(\gamma_d,p_d)$ and $(1,1)$. Thus, for the $x$ component $\gamma_1$, its $y$ component is given by
\begin{equation}
    p_1'=p_d+\frac{\gamma_1-\gamma_d}{1-\gamma_d}(1-p_d)=\frac{1-\gamma_1}{1-\gamma_d} p_d + \frac{\gamma_1-\gamma_d}{1-\gamma_d}.
\end{equation}
This means that $\check{\v{p}}$ can be transformed to $\v{p}'$ with the above occupation of the ground state $p_1'$ and the remaining occupations $p_i'$ simply chosen to be proportional to the entries of the thermal distribution $\v{\gamma}$. Finally, since thermomajorisation is transitive, we have \mbox{$\v{p}\succ_{\v{\gamma}} \v{p}'$}.

The next step is to start with $\v{p}'$ and thermalise levels $\{2,\dots,d\}$ with respect to the hot bath, resulting in
\begin{equation}
    \check{\v{p}}'=\left(p_1', \frac{1-p_1'}{1-\Gamma_1}\Gamma_2,\dots, \frac{1-p_1'}{1-\Gamma_1}\Gamma_d\right).
\end{equation}
Again, it is clear that $\v{p}'\succ_{\v{\Gamma}}\check{\v{p}}'$. Since 
\begin{equation}
    \check{p}_1'=p_1'\geq \gamma_1 \geq \Gamma_1,    
\end{equation}
and the remaining entries $\check{p}_i'$ are proportional to entries of the thermal distribution $\v{\Gamma}$, the thermomajorisation curve of $\check{\v{p}}'$ with respect to $\v{\Gamma}$ is defined by three points: $(0,0)$, $(\Gamma_1,p_1')$ and $(1,1)$. Thus, for the $x$ component $\Gamma_d$, its $y$ component is given by
\begin{equation}
    p_d''=\frac{\Gamma_d}{\Gamma_1}p_1'=\frac{(1-\gamma_1)\Gamma_d}{(1-\gamma_d)\Gamma_1}p_d +\frac{(\gamma_1-\gamma_d)\Gamma_d}{(1-\gamma_d)\Gamma_1}.
\end{equation}
This means that $\check{\v{p}}'$ can be transformed to $\v{p}''$ with the above occupation of the highest state $p_d''$ and the remaining occupations $p_i''$ simply chosen to be proportional to the entries of the thermal distribution $\v{\Gamma}$. Since thermomajorisation is transitive, we have \mbox{$\v{p}'\succ_{\v{\Gamma}} \v{p}''$}. Combining everything together and using again the transitivity property of thermomajorisation, we end up with Eq.~\eqref{eq:thermomajo_chain} (its half concerning $\v{p}$, $\v{p}'$ and $\v{p}''$) with $\v{p}''$ satisfying Eq.~\eqref{eq:pd_double_prime}. Finally, to show that $p_d''\geq p_d$ and $p_d''\leq \tilde{\Gamma}_d$, one simply needs to use the assumption that $\Gamma_d\leq p_d\leq \tilde{\Gamma}_d$.

Now, let us switch to the second part of the lemma. Consider a state $\check{\v{q}}$ obtained from $\v{q}$ by thermalising the levels \mbox{$\{2,\dots,d\}$} of the system with respect to the hot bath, i.e.,
\begin{equation}
    \check{\v{q}}=\left(q_1,\frac{1-q_1}{1-\Gamma_1}\Gamma_2,\dots, \frac{1-q_1}{1-\Gamma_1}\Gamma_d\right).
\end{equation}
Clearly, $\v{q}\succ_{\v{\Gamma}}\check{\v{q}}$, as thermalising a subset of states is a thermal operation. Since 
\begin{equation}
    \check{q}_1=q_1\geq \gamma_1 \geq \Gamma_1,    
\end{equation}
and the remaining entries $\check{q}_i$ are proportional to the entries of the thermal distribution $\v{\Gamma}$, the thermomajorisation curve of $\check{\v{q}}$ with respect to $\v{\Gamma}$ is defined by three points: $(0,0)$, $(\Gamma_1,q_1)$ and $(1,1)$. Thus, for the $x$ component $\Gamma_d$, its $y$ component is given by
\begin{equation}
    q_d'=\frac{\Gamma_d}{\Gamma_1}q_1.
\end{equation}
This means that $\check{\v{q}}$ can be transformed to $\v{q}'$ with the above occupation of the highest excited state $q_d'$ and the remaining occupations $q_i'$ simply chosen to be proportional to the entries of the thermal distribution $\v{\Gamma}$. Finally, since thermomajorisation is transitive, we have \mbox{$\v{q}\succ_{\v{\Gamma}} \v{q}'$}.

The next step is to start with $\v{q}'$ and thermalise levels $\{1,\dots,d-1\}$ with respect to the cold bath, resulting in
\begin{equation}
    \check{\v{q}}'=\left(\frac{1-q_d'}{1-\gamma_d}\gamma_1,\dots,\frac{1-q_d'}{1-\gamma_d}\gamma_{d-1},q_d'\right).
\end{equation}
Again, it is clear that $\v{q}'\succ_{\v{\gamma}}\check{\v{q}}'$. Since 
\begin{equation}
    \check{q}_d'=q_d'\geq \Gamma_d \geq \gamma_d,    
\end{equation}
and the remaining entries $\check{q}_i'$ are proportional to entries of the thermal distribution $\v{\gamma}$, the thermomajorisation curve of $\check{\v{q}}'$ with respect to $\v{\gamma}$ is defined by three points: $(0,0)$, $(\gamma_d,q_d')$ and $(1,1)$. Thus, for the $x$ component $\gamma_1$, its $y$ component is given by
\begin{equation}
    q_1''=q_d'+\frac{\gamma_1-\gamma_d}{1-\gamma_d}(1-q_d')=\frac{(1-\gamma_1)\Gamma_d}{(1-\gamma_d)\Gamma_1}q_1+\frac{\gamma_1-\gamma_d}{1-\gamma_d}.
\end{equation}
This means that $\check{\v{q}}'$ can be transformed to $\v{q}''$ with the above occupation of the ground state $q_1''$ and the remaining occupations $q_i''$ simply chosen to be proportional to the entries of the thermal distribution $\v{\gamma}$. Since thermomajorisation is transitive, we have \mbox{$\v{q}'\succ_{\v{\gamma}} \v{q}''$}. Combining everything together and using again the transitivity property of thermomajorisation, we end up with Eq.~\eqref{eq:thermomajo_chain} (its half concerning $\v{q}$, $\v{q}'$ and $\v{q}''$) with $\v{q}''$ satisfying Eq.~\eqref{eq:q1_double_prime}. Finally, to show that $q_1''\geq q_1$ and $q_1''\leq \tilde{\gamma}_1$, one simply needs to use the assumption that $\gamma_1\leq q_1\leq \tilde{\gamma}_1$.

\end{proof}


\section{Proof of Proposition~\ref{prop:ground}}
\label{app:ground}

The proof is based on the following crucial lemma.
\begin{lem}
    \label{lem:groundstate}
    Assume $\v{\gamma}$ and $\v{\Gamma}$ satisfy the requirements of Proposition~\ref{prop:ground} from Eq.~\eqref{eq:coldenough}, and the initial state $\v{p}$ satisfies
    \begin{equation}
        \label{eq:goodbetaorderp}
        \forall k:\quad  \frac{p_k}{\gamma_k} \geq \frac{p_{k+1}}{\gamma_{k+1}}.
    \end{equation}
    Then,
    \begin{equation}
        \v{p}\succ_{\v{\Gamma}} \v{q} \succ_{\v{\gamma}} \v{r},
    \end{equation}
    with
    \begin{subequations}
    \begin{align}
        \label{eq:q}
       q_1  =p_d+\frac{\Gamma_1-\Gamma_d}{\Gamma_{d-1}}p_{d-1},\quad & \quad
       q_k  =\frac{\Gamma_k}{1-\Gamma_1}(1-q_1),\\
       \label{eq:r}
       r_1  =1-\frac{1-\gamma_1}{\gamma_{1}}q_{1},\quad & \quad
       r_k  =\frac{\gamma_k}{\gamma_1}q_1,
    \end{align}
    \end{subequations}
    for $k>1$. Moreover,
    \begin{equation}
        \label{eq:goodbetaorderr}
        \forall k:\quad     \frac{r_k}{\gamma_k} \geq \frac{r_{k+1}}{\gamma_{k+1}}.
    \end{equation}
\end{lem}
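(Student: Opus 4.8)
\textbf{Proof strategy for Lemma~\ref{lem:groundstate}.}

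The plan is to explicitly construct the thermomajorisation curves of $\v{p}$, $\v{q}$ and $\v{r}$ and verify the two required majorisation relations by comparing these piecewise-linear concave curves point by point, then to verify the $\v{\gamma}$-ordering condition~\eqref{eq:goodbetaorderr} on $\v{r}$ by direct computation. The key preliminary observation is that the hypothesis~\eqref{eq:goodbetaorderp} says $\v{p}$ is already $\gamma$-ordered, i.e.\ $\v{p}=\v{p}^\gamma$; combined with $\beta<\alpha$ (so that $\v{\gamma}$ decays faster than $\v{\Gamma}$ along increasing energy), one should first check that this forces a definite $\Gamma$-order on $\v{p}$ as well, so that the thermomajorisation curve of $\v{p}$ with respect to $\v{\Gamma}$ has a known sequence of slopes. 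This is the point where the conditions~\eqref{eq:coldenough} ($\gamma_1>1/2$ and $\Gamma_1<\Gamma_d+\Gamma_{d-1}$) will be needed, since they control which corner points of the curves dominate.

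First I would establish $\v{p}\succ_{\v{\Gamma}}\v{q}$. The state $\v{q}$ defined in~\eqref{eq:q} has its last $(d-1)$ entries proportional to $\v{\Gamma}$, so its thermomajorisation curve with respect to $\v{\Gamma}$ consists of just two line segments with the single interior corner at $(\Gamma_1,q_1)$. Hence it suffices to check that the thermomajorisation curve of $\v{p}$ evaluated at abscissa $\Gamma_1$ is at least $q_1 = p_d + \tfrac{\Gamma_1-\Gamma_d}{\Gamma_{d-1}}p_{d-1}$. Using the known $\Gamma$-ordering of $\v{p}$, the value of $\v{p}$'s curve at $x=\Gamma_1$ is a sum of the largest few entries $p_{\pi_i}$ weighted appropriately; the cleanest route is to note that $q_1$ is exactly the height one obtains by first thermalising levels $\{1,\dots,d-1\}$ with respect to the hot bath and then reading off the curve, mirroring the construction in the proof of Lemma~\ref{lem:thermo_ext} in Appendix~\ref{app:lem3}. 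One then invokes transitivity: $\v{p}\succ_{\v{\Gamma}}\check{\v{p}}\succ_{\v{\Gamma}}\v{q}$ where $\check{\v{p}}$ thermalises the appropriate block. Next I would establish $\v{q}\succ_{\v{\gamma}}\v{r}$ in the same style: $\v{r}$ in~\eqref{eq:r} has its last $(d-1)$ entries proportional to $\v{\gamma}$, so again its $\v{\gamma}$-curve has one interior corner, at $(\gamma_1, r_1)$ with $r_1 = 1-\tfrac{1-\gamma_1}{\gamma_1}q_1$, and one only needs the $\v{\gamma}$-curve of $\v{q}$ to dominate this single point; this should follow by thermalising the top $(d-1)$ levels of $\v{q}$ with respect to the cold bath and reading off the height at $x=\gamma_1$, exactly as in Appendix~\ref{app:lem3}.

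Finally I would verify~\eqref{eq:goodbetaorderr}. Since $r_k/\gamma_k = q_1/\gamma_1$ is constant for all $k>1$, the only non-trivial inequality is $r_1/\gamma_1 \geq r_2/\gamma_2 = q_1/\gamma_1$, i.e.\ $r_1\geq q_1$, which unwinds to $1 - \tfrac{1-\gamma_1}{\gamma_1}q_1 \geq q_1$, equivalently $q_1 \leq \gamma_1$; this is where $\gamma_1>1/2$ enters, together with an upper bound on $q_1$ coming from the structure of $\v{p}$ and the condition $\Gamma_1<\Gamma_d+\Gamma_{d-1}$. The main obstacle I anticipate is precisely this last bookkeeping: showing that the two curve-domination checks hold for \emph{every} abscissa (not just at the named corner points) really does reduce to the single inequalities above, and that the hypotheses~\eqref{eq:coldenough} are exactly what make $q_1\le\gamma_1$ and the $\Gamma$-order of $\v{p}$ go through. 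I would handle this by exploiting concavity — a concave curve lies above a two-segment concave curve everywhere iff it does so at the unique interior vertex — which collapses each domination check to one inequality, and then carefully tracking the $\beta$-order permutations induced by $\beta<\alpha$.
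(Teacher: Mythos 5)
Your overall strategy — compare thermomajorisation curves, and exploit the fact that a two-segment concave curve is dominated everywhere iff it is dominated at its unique interior corner — is exactly the paper's strategy, and your treatment of the second relation $\v{q}\succ_{\v{\gamma}}\v{r}$ is essentially right. However, there is a genuine error in the first half. You place the interior corner of $\v{q}$'s thermomajorisation curve with respect to $\v{\Gamma}$ at $(\Gamma_1,q_1)$. In fact it sits at $(1-\Gamma_1,\,1-q_1)$: the state $\v{q}$ satisfies $q_1/\Gamma_1\le q_k/\Gamma_k$ for $k>1$ (equivalently $q_1\le\Gamma_1$, which holds automatically because the $\v{\Gamma}$-curve of $\v{p}$ lies above the diagonal, so its value at $x=1-\Gamma_1$, namely $1-q_1$, is at least $1-\Gamma_1$). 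Hence level $1$ comes \emph{last}, not first, in the $\beta$-order of $\v{q}$, and the sloped middle corners collapse into the left segment, leaving the single corner at $(1-\Gamma_1,1-q_1)$. Your proposed check, ``the $\v{p}$-curve at $x=\Gamma_1$ is $\ge q_1$,'' happens to hold (since $p_1\ge\Gamma_1\ge q_1$), but it verifies the domination at the wrong abscissa and therefore does not establish $\v{p}\succ_{\v{\Gamma}}\v{q}$. Relatedly, your claim that $q_1$ arises from ``thermalising levels $\{1,\dots,d-1\}$ with respect to the hot bath'' is not what happens: the formula for $q_1$ is obtained by reading the $\v{\Gamma}$-curve of $\v{p}$ at $x=1-\Gamma_1$, which lands on the segment between $(1-\Gamma_d-\Gamma_{d-1},\,1-p_d-p_{d-1})$ and $(1-\Gamma_d,\,1-p_d)$ precisely because of the assumption $\Gamma_1<\Gamma_d+\Gamma_{d-1}$; the resulting $\v{q}$ has its \emph{last} $d-1$ entries proportional to $\v{\Gamma}$.

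A smaller point of bookkeeping: you attribute the inequality $q_1\le\gamma_1$ (needed for the $\gamma$-ordering of $\v{r}$) to the hypothesis $\gamma_1>1/2$. In fact $q_1\le\Gamma_1\le\gamma_1$ follows automatically — the first from the curve lying above the diagonal as above, the second from $\beta<\alpha$ — without invoking $\gamma_1>1/2$. The hypothesis $\gamma_1>1/2$ is used for the second majorisation step: it guarantees $\gamma_1>1-\gamma_1$, so that evaluating the $\v{\gamma}$-curve of $\v{q}$ at abscissa $\gamma_1$ lands on its final segment (past the corner at $1-\gamma_1$), which is what gives $r_1=1-\tfrac{1-\gamma_1}{\gamma_1}q_1$.
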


For the sake of clarity of the argument, we defer the proof of the above lemma to the end of this appendix. The lemma tells us that a sequential interaction of the initial state $\v{p}$ with a hot and a cold bath can transform it to a final state $\v{r}$. Since the final state $\v{r}$ satisfies Eq.~\eqref{eq:goodbetaorderr}, one can apply Lemma~\ref{lem:groundstate} iteratively, starting from $\v{p}[1]=\v{\gamma}$. Thus, given a vector $\v{p}[n]$ satisfying Eq.~\eqref{eq:goodbetaorderp}, we can transform it to $\v{p}[n+1]$ also satisfying Eq.~\eqref{eq:goodbetaorderp} and with
\begin{align}
    \label{eq:p1n}
    p_1[n+1] &=1-\frac{1-\gamma_1}{\gamma_1}\left(p_d[n]+\frac{\Gamma_1-\Gamma_d}{\Gamma_{d-1}}p_{d-1}[n]\right),\\
    p_k[n+1] &=\frac{\gamma_k}{\gamma_1}\left(p_d[n]+\frac{\Gamma_1-\Gamma_d}{\Gamma_{d-1}}p_{d-1}[n]\right),
\end{align}
for $k>1$. Now, we have 
\begin{equation}
    p_d[n+1]+p_{d-1}[n+1]=\frac{\gamma_d+\gamma_{d-1}}{\gamma_1}\left(p_d[n]+\frac{\Gamma_1-\Gamma_d}{\Gamma_{d-1}}p_{d-1}[n]\right)\leq \frac{\gamma_d+\gamma_{d-1}}{\gamma_1}\left(p_d[n]+p_{d-1}[n]\right),
\end{equation}
where we have used the assumption $\Gamma_1<\Gamma_{d}+\Gamma_{d-1}$ from Eq.~\eqref{eq:coldenough}. We thus have
\begin{equation}
    p_d[n+1]+p_{d-1}[n+1] \leq \left(\frac{\gamma_d+\gamma_{d-1}}{\gamma_1}\right)^n (p_d[1]+p_{d-1}[1]),
\end{equation}
and because the exponentiated prefactor on the right hand side is strictly smaller than 1 (again, due to the assumption from Eq.~\eqref{eq:coldenough}), we get
\begin{equation}
    \lim_{n\to\infty} (p_d[n]+p_{d-1}[n]) = 0.
\end{equation}
Obviously both $p_d[n]$ and $p_{d-1}[n]$ are positive, and so they both independently tend to zero as $n\to\infty$. Through Eq.~\eqref{eq:p1n}, this then means that
\begin{equation}
    \lim_{n\to\infty} p_1[n]=1,
\end{equation}
which proves Proposition~\ref{prop:ground}. Note that because the convergence is exponential, the ground state is only achieved in the limit $n\to\infty$. The last remaining thing to show is to prove Lemma~\ref{lem:groundstate}.

\begin{proof}[Proof of Lemma~\ref{lem:groundstate}]
    For $\v{p}$ satisfying Eq.~\eqref{eq:goodbetaorderp}, it is straightforward to show that it also satisfies
    \begin{equation}
        \forall k:\quad  \frac{p_k}{\Gamma_k} \geq \frac{p_{k+1}}{\Gamma_{k+1}}.
    \end{equation}
    Given the above, the thermomajorisation curve of $\v{p}$ with respect to the hot bath $\v{\Gamma}$ is given by the following elbow points:
    \begin{equation}
        \{(0,0),~(\Gamma_1,p_1),\dots,~(1-\Gamma_d-\Gamma_{d-1},1-p_d-p_{d-1}),~(1-\Gamma_d,1-p_d),~(1,1)\}.
    \end{equation}
    Since, by assumption from Eq.~\eqref{eq:coldenough}, we have
    \begin{equation}
        1-\Gamma_d-\Gamma_{d-1}\leq 1-\Gamma_1\leq 1-\Gamma_d,
    \end{equation}
    the thermomajorisation curve of $\v{p}$ at the $x$ position $(1-\Gamma_1)$ takes the value
    \begin{equation}
        1-q_1 = 1-p_d-p_{d-1}+\frac{\Gamma_d+\Gamma_{d-1}-\Gamma_1}{\Gamma_{d-1}}p_{d-1}=1-p_d-\frac{\Gamma_1-\Gamma_d}{\Gamma_{d-1}}p_{d-1}.
    \end{equation}
    Clearly then, a piecewise linear concave curve defined by the following elbow points,
    \begin{equation}
        \{(0,0),~(1-\Gamma_1,1-q_1),~(1,1)\},
    \end{equation}
    lies below the thermomajorisation curve of $\v{p}$. As can be verified by direct calculation, such a curve is a thermomajorisation curve of a state $\v{q}$
    from Eq.~\eqref{eq:q}, which satisfies
    \begin{equation}
        \label{eq:goodbetaorderq}
        \forall k:\quad     \frac{q_k}{\Gamma_k} \leq \frac{q_{k+1}}{\Gamma_{k+1}}.
    \end{equation}
    This proves the first part of the lemma.

    To prove the second part, we start by noting that Eq.~\eqref{eq:goodbetaorderq} implies 
    \begin{equation}
        \forall k:\quad     \frac{q_k}{\gamma_k} \leq \frac{q_{k+1}}{\gamma_{k+1}}.
    \end{equation}
    Given the above, the thermomajorisation curve of $\v{q}$ with respect to the cold bath $\v{\gamma}$ is given by the following elbow points:
    \begin{equation}
        \{(0,0),~(\gamma_d,q_d),~(\gamma_d+\gamma_{d-1},q_d+q_{d-1}),\dots,~(1-\gamma_1,1-q_1),~(1,1)\}.
    \end{equation}
     Since, by assumption from Eq.~\eqref{eq:coldenough}, we have
    \begin{equation}
        1-\gamma_1\leq \gamma_1\leq 1,
    \end{equation}
    the thermomajorisation curve of $\v{q}$ at the $x$ position $\gamma_1$ takes the value
    \begin{equation}
        r_1 = 1-q_1+\frac{2\gamma_1-1}{\gamma_1}q_1=1-\frac{1-\gamma_1}{\gamma_1}q_1.
    \end{equation}
    Clearly then, a piecewise linear concave curve defined by the following elbow points,
    \begin{equation}
        \{(0,0),~(\gamma_1,r_1),~(1,1)\},
    \end{equation}
    lies below the thermomajorisation curve of $\v{q}$. As can be verified by direct calculation, such a curve is a thermomajorisation curve of a state $\v{r}$
    from Eq.~\eqref{eq:r}, which satisfies
    \begin{equation}
        \forall k:\quad     \frac{r_k}{\gamma_k} \geq \frac{r_{k+1}}{\gamma_{k+1}}.
    \end{equation}
    This completes the proof.
        
\end{proof}


\section{Maximally mutually coherent states}
\label{app:mutually_coh}


\subsection*{Existence and generation in three strokes}

Let us start by recalling a known result about the existence of maximally mutually coherent states exist for any two given bases and in any dimension (see Ref.~\cite{Idel_wolf}, whose results imply the statement; cf. also Refs.~\cite{korzekwa_jennings_rudolph} and~\cite{puchala_rudnicki}).

\begin{thm}[{\cite[Theorem 1]{korzekwa_jennings_rudolph}}]\label{lem:torus}
For any two bases $\{|j\rangle\}_{j=1}^d$ and $\{U^\dagger| j\rangle\}_{j=1}^d$ of a $d$-dimensional Hilbert space there exist at least $2^{d-1}$ states $|\psi_*\rangle$ that are unbiased in both these bases, that is, 
\begin{align}\label{eq:unbiased}
\left|\left\langle j|\psi_*\right\rangle\right|=\left|\left\langle j|U|\psi_*\right\rangle\right|=\frac{1}{\sqrt{d}}\;\;\;\text{for all}\;\;\;j\in\{1,\ldots,d\},
\end{align}
which, in turn, implies that
\begin{align}
\begin{aligned}
\left|\psi_*\right\rangle
=\frac{1}{\sqrt{d}}\sum_{j=1}^de^{i\alpha_j}\left| j\right\rangle
=\frac{1}{\sqrt{d}}\sum_{j=1}^de^{i\beta_j}U^\dagger\left| j\right\rangle.
\end{aligned}
\end{align}
Moreover, any state $|\psi_*\rangle$ satisfying Eq.~\eqref{eq:unbiased} is a~\textit{zero-noise, zero-disturbance} state, also called a \textit{mutually coherent} (or \textit{maximally mutually coherent}) state (see Ref.~\cite{puchala_rudnicki}). 
\end{thm}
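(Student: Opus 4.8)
The plan is to recast the statement as an intersection problem for Lagrangian tori and then either quote the Sinkhorn normal form for unitaries of Idel--Wolf~\cite{Idel_wolf} or reproduce its underlying symplectic-topological proof.

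\emph{Reformulation.} A unit vector $\ket{\psi_*}$ satisfies Eq.~\eqref{eq:unbiased} exactly when, writing $\ket{\psi_*}=\tfrac{1}{\sqrt d}\sum_j e^{i\alpha_j}\ket{j}$, the vector $\mathbf{z}=(e^{i\alpha_1},\dots,e^{i\alpha_d})$ is mapped by $U$ to another vector all of whose entries have modulus $1$; the common modulus is automatically $1$ since $U$ is an isometry. Hence the theorem says that the image $U(\mathbb{T}^d)$ of the flat torus $\mathbb{T}^d=\{\mathbf z\in\mathbb{C}^d:|z_j|=1\ \forall j\}$ meets $\mathbb{T}^d$ in at least $2^{d-1}$ orbits of the diagonal global-phase action of $U(1)$ (which commutes with $U$ and preserves $\mathbb{T}^d$). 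Projectivising $\mathbb{C}^d\setminus\{0\}$, the flat torus descends to the monotone Clifford torus $T\cong\mathbb{T}^{d-1}$ in $(\mathbb{CP}^{d-1},\omega_{\mathrm{FS}})$ and $U$ descends to $\bar U\in\mathrm{PU}(d)$, so that maximally mutually coherent states modulo phase correspond bijectively to the points of $T\cap\bar U(T)$.

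\emph{Main argument.} Since $\mathrm{PU}(d)$ is connected and $H^1(\mathbb{CP}^{d-1};\mathbb{R})=0$, the symplectomorphism $\bar U$ is Hamiltonian isotopic to the identity, so $\bar U(T)$ is a Hamiltonian image of $T$. One then invokes the fact that the monotone Clifford torus is non-displaceable and, more sharply, admits a weak bounding cochain --- one of the $d$ critical points of the Landau--Ginzburg potential $W(x)=x_1+\dots+x_{d-1}+q/(x_1\cdots x_{d-1})$ --- for which the Lagrangian Floer homology $HF(T,T)$ is defined and isomorphic over the Novikov field to $H_*(\mathbb{T}^{d-1})$, of total rank $2^{d-1}$ (the Cho--Oh / Fukaya--Oh--Ohta--Ono computation for toric Fanos). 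Hamiltonian invariance of Floer homology gives $HF(T,\bar U(T))\cong HF(T,T)$ of rank $2^{d-1}$, whence any transverse realisation of $T\cap\bar U(T)$ has at least $2^{d-1}$ points; since the defining equations are trigonometric polynomials in $\alpha$, the solution set for $\bar U$ itself is either infinite or a finite union of isolated points each of which persists under a small transverse Hamiltonian perturbation, so the bound $2^{d-1}$ carries over. Equivalently --- and this is the route I would present in the main text --- one reads the conclusion directly off Idel--Wolf~\cite{Idel_wolf}: every $U\in\mathcal{U}_d(\mathbb{C})$ admits a Sinkhorn normal form $D_L U D_R$ (with $D_L,D_R\in\mathcal{DU}_d(\mathbb{C})$) having constant row and column sums, and, combined with the orbit-count of Ref.~\cite{korzekwa_jennings_rudolph}, there are at least $2^{d-1}$ inequivalent such forms; for each, setting $\mathbf{z}=D_R(1,\dots,1)^T$ yields $U\mathbf{z}=c D_L^{-1}(1,\dots,1)^T$ with $|c|=1$ by unitarity, so $\tfrac{1}{\sqrt d}\mathbf z$ is the desired mutually coherent state. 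The final sentence of the statement, identifying these states with the zero-noise, zero-disturbance states, is a matching of definitions with Ref.~\cite{puchala_rudnicki} and requires no further work.

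\emph{Main obstacle.} The genuine difficulty is the sharp count $2^{d-1}$, not mere existence of one state. Non-displaceability of the Clifford torus --- equivalently $HF(T,T)\neq 0$ for some bounding cochain --- already yields a single solution and is comparatively accessible, but extracting the full rank $2^{d-1}$ requires the Morse--Bott/spectral-sequence collapse of $HF(T,T)$ onto $H_*(\mathbb{T}^{d-1})$, which is where the toric Floer machinery (or, black-boxed, Idel--Wolf's explicit degree-theoretic count) is indispensable. A self-contained elementary proof seems out of reach: a continuous map $\mathbb{T}^{d-1}\to\mathbb{R}^{d-1}$ with vanishing average need not have a zero once $d\geq 3$, so even the existence of one mutually coherent state cannot be obtained by a naive averaging or intermediate-value argument.
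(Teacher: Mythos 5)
The paper does not prove this statement; it is imported verbatim as Theorem~1 of Ref.~\cite{korzekwa_jennings_rudolph}, and the remark immediately following it merely records that the result rests on the non-displaceability of the Clifford torus $T^{d-1}\subset\mathbb{CP}^{d-1}$ under Hamiltonian isotopies (citing~\cite{cho} and pointing to~\cite{puchala_rudnicki} for detail). Your \emph{Reformulation} and the first half of your \emph{Main argument} reproduce exactly this route: pass to $\mathbb{CP}^{d-1}$, note $\bar U\in\mathrm{PU}(d)$ is Hamiltonian isotopic to $\mathrm{id}$, and invoke the Floer-homological rank bound for the monotone Clifford torus. So your primary approach is the same as the paper's, just spelled out in more Floer-theoretic detail than the authors chose to give.

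Two caveats worth flagging. First, your ``Idel--Wolf route'' is not an independent alternative as written: you invoke ``the orbit-count of Ref.~\cite{korzekwa_jennings_rudolph}'' to get the $2^{d-1}$ inequivalent Sinkhorn normal forms, but that is precisely the theorem under proof, so the argument is circular. Idel--Wolf's unitary Sinkhorn theorem by itself gives \emph{existence} of one doubly-flattenable form (hence one $\ket{\psi_*}$), not the multiplicity $2^{d-1}$; the count has to come from the symplectic side in either version of the argument. Second, the passage from ``at least $2^{d-1}$ intersections for a transverse Hamiltonian perturbation'' back to the unperturbed $T\cap\bar U(T)$ is not justified by ``each isolated point persists'' --- the concern is the opposite direction: a single degenerate intersection point of $T$ and $\bar U(T)$ can, after perturbation, split into several transverse points and thereby absorb many of the $2^{d-1}$ generators, so a lower bound on the perturbed count does not automatically descend to the original configuration. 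The cited literature handles this via cuplength-type estimates or Morse--Bott Floer theory on the clean-intersection locus (or, in the infinite case, by observing the conclusion is vacuous); your sketch should either invoke that machinery explicitly or restrict the claim to the generic case, as the original reference effectively does.
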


\begin{rem}
The assertion of Theorem~\ref{lem:torus} follows from the fact that a great torus $T_{d-1}$ embedded in a complex projective space $\mathbb{C}P^{d-1}$ is \emph{non-displaceable} with respect to transformations by a unitary matrix $U\in\mathcal{U}_d(\mathbb{C})$ (cf.~\cite{cho}), meaning that the image of torus resulting from the action of $U$ must intersect the original torus, in at least $2^{d-1}$ points (for a more detailed explanation see, e.g., Ref.~\cite{puchala_rudnicki}). 
\end{rem}

Next, consider the problem of generating a maximally mutually coherent state by performing only three operations (preparation of $\ket{l}$ by Alice, transformation $U^{\dagger}D_1U$ by Bob, and transformation $D_2$ by Alice). From what has been said above, this task can be reduced to guaranteeing the existence of $l\in\{1,\ldots,d\}$ and $D_1,D_2\in\mathcal{DU}_d(\mathbb{C})$ such that $U^{\dagger}D_1U$ has a flat $l$-th column, i.e., 
\begin{align}\label{eq:flat}
\left|\left\langle m\right|U^{\dagger}D_1U\left|l\right\rangle\right|=\frac{1}{\sqrt{d}}\;\;\;\text{for all}\;\;\;m\in\{1,\ldots,d\},
\end{align}
and 
\begin{align}\label{eq:phases}
D_2U^{\dagger}D_1U\left|l\right\rangle=\frac{1}{\sqrt{d}}\sum_{j=1}^de^{i\alpha_j}\left|j\right\rangle=\left|\psi_*\right\rangle.
\end{align}
Note that $D_2$ is only responsible for fitting the phases $\alpha_j$, \hbox{$j\in\{1,\ldots,d\}$}.


\subsection*{Qubit case}

Before we proceed to the proof of Proposition~\ref{rem:necessary}, 
which deals with a general problem of generating a~maximally mutually coherent state using just three strokes of a coherence engine, let us first show how the problem is solved in the case of a qubit (simple heuristics have been already presented in Sec.~\ref{sec:coherence_qubit}, however this analytical proof can be more easily generalised to higher dimensions). 

\begin{prop}
\label{prop:qubit_conditions}
Suppose that Alice and Bob are restricted to the sets $F_A$, $F_B$ of their free states and  $\mathcal{F}_A$, $\mathcal{F}_B$ of their free operations, respectively. Let $U\in\mathcal{U}_2(\mathbb{C})$, appearing in the definition of $\F_B$ in Eq.~\eqref{def:B}, be of the general form
\begin{align}\label{def:U_unitary}
U=e^{i\phi}\left[
\begin{array}{ll}
e^{i\varphi_0}\cos(\varphi)&-e^{-i\varphi_1}\sin(\varphi)\\
e^{i\varphi_1}\sin(\varphi)&e^{-i\varphi_0}\cos(\varphi)
\end{array}
\right].
\end{align}
Then, the parties can generate a maximally mutually coherent state with only three resource engine's strokes if and only if $\varphi\in[\pi/8,3\pi/8]$.
\end{prop}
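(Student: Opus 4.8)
The plan is to invoke the reduction set up in Appendix~\ref{app:mutually_coh} (around Eqs.~\eqref{eq:flat}--\eqref{eq:phases}): the three strokes -- Alice prepares $\ket l$, Bob applies $U^{\dagger}D_1U$, Alice applies $D_2$ -- produce a maximally mutually coherent state if and only if there exist $l\in\{1,2\}$ and $D_1,D_2\in\mathcal{DU}_2(\mathbb{C})$ such that the $l$-th column of $U^{\dagger}D_1U$ is flat, $|\langle m|U^{\dagger}D_1U|l\rangle|=1/\sqrt2$ for $m=1,2$, and $D_2U^{\dagger}D_1U\ket l$ is one of the mutually unbiased states guaranteed by Theorem~\ref{lem:torus}. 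I would prove the proposition in two steps: (i) a flat column can be arranged exactly when $\varphi\in[\pi/8,3\pi/8]$, and (ii) in dimension two, whenever the column is flat the remaining diagonal unitary $D_2$ can always be chosen to complete the job.

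For step (i), set $D_1=\mathrm{diag}(e^{i\mu_1},e^{i\mu_2})$, $\Delta:=\mu_2-\mu_1$, and read off from~\eqref{def:U_unitary} (the global phase $e^{i\phi}$ drops out)
\begin{align}
\langle 1|U^{\dagger}D_1U|1\rangle&=\cos^2\varphi\;e^{i\mu_1}+\sin^2\varphi\;e^{i\mu_2},\\
\langle 2|U^{\dagger}D_1U|1\rangle&=e^{i(\varphi_0+\varphi_1)}\sin\varphi\cos\varphi\,\bigl(e^{i\mu_2}-e^{i\mu_1}\bigr).
\end{align}
Thus $|\langle 2|U^{\dagger}D_1U|1\rangle|^2=\tfrac12\sin^2(2\varphi)(1-\cos\Delta)$; requiring this to equal $1/2$ forces $1-\cos\Delta=1/\sin^2(2\varphi)$, which is solvable in $\Delta$ precisely when $\sin^2(2\varphi)\geq 1/2$, i.e.\ $\varphi\in[\pi/8,3\pi/8]$. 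For this same $\Delta$ one then checks $|\langle 1|U^{\dagger}D_1U|1\rangle|^2=1-\tfrac12\sin^2(2\varphi)(1-\cos\Delta)=1/2$, so the first column is genuinely flat; the case $l=2$ gives the identical constraint. This already yields the ``only if'' direction.

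For step (ii), fix $\varphi\in[\pi/8,3\pi/8]$ and let $\ket\phi:=U^{\dagger}D_1U\ket 1=\tfrac1{\sqrt2}(e^{i\delta_1}\ket 1+e^{i\delta_2}\ket 2)$ be the flat vector just constructed. A suitable $D_2$ can impose any relative phase, so it is enough to find $\theta$ making $\ket\chi:=\tfrac1{\sqrt2}(\ket 1+e^{i\theta}\ket 2)$ unbiased also in Bob's basis; a one-line computation gives $|\langle 1|U|\chi\rangle|^2=\tfrac12\bigl(1-\sin(2\varphi)\cos(\theta-\varphi_0-\varphi_1)\bigr)$, which equals $1/2$ for $\theta=\varphi_0+\varphi_1+\tfrac\pi2$, and unbiasedness in Alice's basis is automatic. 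Hence $\ket\chi$ is maximally mutually coherent, and after fitting phases with the final $D_2$ it is reached in three strokes, establishing the ``if'' direction and the proposition.

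As a sanity check I would also note the Bloch-sphere version that matches the heuristics of Sec.~\ref{sec:coherence_qubit}: up to a phase $U$ is a rotation whose axis makes angle $2\varphi$ with Alice's $\hat z$ axis, the first stroke puts the Bloch vector at a pole, the third stroke (a $\hat z$-rotation) cannot change the polar angle, and the mutually coherent qubit states lie on the equator; a rotation about an axis tilted by $2\varphi$ can take a pole exactly to the equator iff $\pi/2\leq 4\varphi\leq 3\pi/2$, i.e.\ iff $\varphi\in[\pi/8,3\pi/8]$, in agreement with the algebra. The computation in step (i) is routine; the only conceptual point needing care is step (ii) -- the assertion that flatness of some column is the sole obstruction, i.e.\ that $D_2$ does nothing beyond fitting phases -- which in $d=2$ follows at once from the available phase freedom but would require more thought in higher dimensions.
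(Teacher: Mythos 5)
Your proof is correct and uses essentially the paper's approach: you invoke the same Appendix~\ref{app:mutually_coh} reduction to the flat-column condition on $U^{\dagger}D_1U$, and you compute the same two matrix entries that appear in Eq.~\eqref{eq:obvious}. The difference is in how the condition $\varphi\in[\pi/8,3\pi/8]$ is extracted. The paper runs a geometric argument with two isosceles triangles in the complex plane and the triangle inequality, and has to check both triangles give compatible constraints. You instead work algebraically with $\Delta=\mu_2-\mu_1$, solve $1-\cos\Delta=1/\sin^2(2\varphi)$, and -- this is the nice streamlining -- observe that the diagonal-entry condition $|\langle 1|U^{\dagger}D_1U|1\rangle|=1/\sqrt2$ is then automatic by unitarity of $U^{\dagger}D_1U$, so the second ``triangle'' never needs to be analysed separately. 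You also make explicit the sufficiency step (that a flat column plus an appropriate $D_2$ really does land on a maximally mutually coherent state, which you construct directly as $|\chi\rangle$ rather than invoking Theorem~\ref{lem:torus}); the paper's proof leaves this implicit in the Appendix~\ref{app:mutually_coh} setup. Both routes are sound and yield the same threshold; yours is a bit tighter computationally.
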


\begin{proof}
Let $D:=\text{diag}(e^{i\theta_0}, e^{i\theta_1})$, with $\theta_0,\theta_1\in[0,2\pi)$, be an arbitrary diagonal unitary matrix. Due to Eq.~\eqref{def:U_unitary}, we obviously have
\begin{align}\label{eq:obvious}
\begin{aligned}
\langle 0|U^{\dagger}DU|0\rangle=e^{i\theta_0}\cos^2(\varphi)+e^{i\theta_1}\sin^2(\varphi)\quad\text{and}\quad 
\langle 1|U^{\dagger}DU|0\rangle
=\sin(\varphi)\cos(\varphi)e^{i\left(\varphi_0+\varphi_1\right)}(-e^{i\theta_0}+e^{i\theta_1}).
\end{aligned}
\end{align}
It follows from the previous subsection that one can generate a maximally mutually coherent state with just three resource engine's strokes if and only if either
\begin{align}\label{case1}
\exists_{\theta_2,\theta_3\in(0,2\pi]}\quad\langle 0|U^{\dagger}DU|0\rangle=e^{i\theta_2}\frac{1}{\sqrt{2}}\;\;\;\text{and}\;\;\;\langle 1|U^{\dagger}DU|0\rangle=e^{i\theta_3}\frac{1}{\sqrt{2}}
\end{align}
or 
\begin{align}\label{case2}
\exists_{\theta_2,\theta_3\in(0,2\pi]}\quad\langle 0|U^{\dagger}DU|1\rangle=e^{i\theta_2}\frac{1}{\sqrt{2}}\;\;\;\text{and}\;\;\;\langle 1|U^{\dagger}DU|1\rangle=e^{i\theta_3}\frac{1}{\sqrt{2}}.
\end{align}
If we now compare Eq.~\eqref{eq:obvious} with Eq.~\eqref{case1} (the reasoning for Eq.~\eqref{case2} is analogous), we get
\begin{align}\label{eq:triangle}
\begin{aligned}
\exists_{\theta_0,\theta_1,\theta_2,\theta_3\in[0,2\pi)}\quad
e^{i\theta_0}\cos^2(\varphi)+e^{i\theta_1}\sin^2(\varphi)-\frac{e^{i\theta_2}}{\sqrt{2}}=0,\quad
\sin(\varphi)\cos(\varphi)e^{i\left(\varphi_0+\varphi_1\right)}\big(e^{i\theta_1}-e^{i\theta_0}\big)-\frac{e^{i\theta_3}}{\sqrt{2}}=0.
\end{aligned}
\end{align}

To verify when the above holds, let us consider two triangles (in the complex plane) with vertices:
\begin{align}
\begin{aligned}
(1):\; &\big\{0,-e^{i\theta_0}\sin(\varphi)\cos(\varphi),
-e^{i\theta_0}\sin(\varphi)\cos(\varphi)
+e^{i\theta_1}\sin(\varphi)\cos(\varphi)\big\},\\
(2):\; &\left\{0,e^{i\theta_0}\cos^2(\varphi),
e^{i\theta_0}\cos^2(\varphi)+e^{i\theta_1}\sin^2(\varphi)\right\},
\end{aligned}
\end{align}
where $\varphi,\theta_0,\theta_1\in[0,2\pi)$. 
One can easily observe that triangle (1) is isosceles with legs of length $\sin(\varphi)\cos(\varphi)$. 
The base of this triangle can have length $1/\sqrt{2}$ only if 
$2\sin(\varphi)\cos(\varphi)\geq 1\sqrt{2}$ (the triangle inequality), 
which yields 
\begin{align}\label{eq:condition_for_phi}
\frac{\pi}{8}\leq\varphi\leq\frac{3\pi}{8}.
\end{align}
If we denote the measure of the angles at the base by $\alpha$, then the third angle obviously has measure $\pi-2\alpha$. Note that for a fixed value of $\varphi$, where $\varphi\in[\pi/8,3\pi/8]$, there is only one possible value of $\alpha$, namely 
\begin{align}\label{def:alpha}
\alpha=\arccos \left(\frac{1}{2\sqrt{2}\sin(\varphi)\cos(\varphi)}\right). 
\end{align} 

Analysis similar to the above one (based on applying triangle inequalities) indicates that the third side of triangle (2) has length $1/\sqrt{2}$ if and only if $\varphi$ and $\alpha$ (being half of the angle between the sides of triangle (2), of lengths $\cos^2(\varphi)$ and $\sin^2(\varphi)$) satisfy Eqs.~\eqref{eq:condition_for_phi} and \eqref{def:alpha}, respectively. As a consequence, we obtain the following: for any $\varphi,\varphi_0,\varphi_1\in[0,2\pi)$, Eq.~\eqref{eq:triangle} holds if and only if \hbox{$\varphi\in[\pi/8,3\pi/8]$}, which completes the proof.
\end{proof}


\subsection*{Proof of Proposition~\ref{rem:necessary}}

Using the intuition from the case of a qubit, let us now conduct the proof of Proposition~\ref{rem:necessary}.

\begin{proof}[Proof of Proposition~\ref{rem:necessary}]
The necessary condition in the case of $d=2$ (namely, $\pi/8\leq \varphi \leq 3\pi/8$) follows just from the triangle inequalities. For an arbitrary $d\in\mathbb{N}\backslash\{1,2\}$ the reasoning is analogous (with the  difference that in higher dimensions the necessary conditions are no longer the sufficient ones). As already mentioned at the beginning of this appendix, to solve the given task, it suffices to find a matrix $U=(u_{i,j})_{i,j=1}^d\in\mathcal{U}_d(\mathbb{C})$ for which Eq. \eqref{eq:flat} is satisfied with some $|l\rangle$, $l\in\{1,\ldots,d\}$, and some $D_1\in\mathcal{DU}_n(\mathbb{C})$, i.e., 
\begin{align}
\exists_{l\in\{1,\ldots,d\}}\;\exists_{\kappa_1,\ldots,\kappa_d\in(0,2\pi]}\;\forall_{m\in\{1,\ldots,d\}}\quad \left\langle m\left|U^{\dagger}D_1U\right|l\right\rangle=e^{i\kappa_m}\frac{1}{\sqrt{d}}.
\end{align}
Writing it differently, we obtain (for $D=\text{diag}(e^{i\xi_1},\ldots,e^{i\xi_d})$)
\begin{align}
\exists_{l\in\{1,\ldots,d\}}\;\exists_{\kappa_1,\ldots,\kappa_d\in(0,2\pi]}\;\forall_{m\in\{1,\ldots,d\}}\quad
\sum_{j=1}^de^{i\xi_j}\bar{u}_{jm}u_{jl}=e^{i\kappa_m}\frac{1}{\sqrt{d}},
\end{align} 
which is kind of an equivalent of Eq.~\eqref{eq:triangle}. 
The generalised polygon inequalities then imply the assertion of Proposition~\ref{rem:necessary}. 
\end{proof}


\section{Proof of Proposition~\ref{prop:equiv_H1_H2}}
\label{app:equivalence}

In Sec.~\ref{sec:coherence_operations}, for any $U\in\mathcal{U}_d(\mathbb{C})$, we have defined a matrix $P_U$, whose entries are zero whenever the corresponding entries of $U$ are zero, otherwise they are equal to 1  (cf. Eq. \eqref{def:P_U}). Here, it is convenient to additionally introduce a~matrix $R_U=(r_{ij})_{i,j=1}^d$ over the field $\mathbb{R}_+$, which corresponds to $U$ in the following sense:
\begin{align}\label{def:R_U}
\begin{aligned}
r_{ij}=\left\{\begin{array}{ll}
0&\text{for }u_{ij}=0,\\
\text{some positive number $r_{ij}$}&\text{for }u_{ij}\neq 0.
\end{array}\right.
\end{aligned}
\end{align}

The aim of this appendix is to prove that hypotheses \hyperref[cnd:H1]{(H1)} and \hyperref[cnd:H2]{(H2)} are equivalent. For this, we shall first establish a few auxiliary facts.

\begin{lem}\label{lem:diag}
Let $ A = (a_{ij})_{i,j=1}^d$ and $D=\text{diag}(d_1,\ldots,d_d)$ with $d_i\neq 0$ for any $i\in\{1,\ldots,d\}$. Then, for $AD=(b_{ij})_{i,j=1}^d$ and any $k,l\in\{1,\ldots,d\}$ 
\begin{align}
a_{kl}=0\quad\Leftrightarrow\quad b_{kl}=0.
\end{align}
\end{lem}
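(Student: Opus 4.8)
The plan is to prove this elementary linear-algebra fact by a direct entrywise computation. Writing $AD = (b_{ij})_{i,j=1}^d$, the entries are given by
\begin{equation}
    b_{kl} = \sum_{m=1}^d a_{km} d_{ml},
\end{equation}
where $d_{ml}$ denotes the $(m,l)$-entry of $D$. Since $D=\mathrm{diag}(d_1,\dots,d_d)$ is diagonal, we have $d_{ml}=0$ unless $m=l$, so the sum collapses to a single term:
\begin{equation}
    b_{kl} = a_{kl}\, d_l.
\end{equation}

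From this identity the equivalence is immediate. First I would observe that $b_{kl}=a_{kl}d_l$, so if $a_{kl}=0$ then $b_{kl}=0$, proving the forward implication. Conversely, if $b_{kl}=0$, then $a_{kl}d_l=0$; since by hypothesis $d_l\neq 0$ (indeed all diagonal entries $d_i$ are nonzero), we may divide by $d_l$ to conclude $a_{kl}=0$. This gives the reverse implication, and hence the claimed equivalence $a_{kl}=0 \Leftrightarrow b_{kl}=0$ for every $k,l\in\{1,\dots,d\}$.

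There is essentially no obstacle here: the statement is a one-line consequence of the definition of matrix multiplication together with the fact that right-multiplication by an invertible diagonal matrix merely rescales each column by a nonzero scalar, which cannot create or destroy zero entries. (An analogous statement holds for left-multiplication $DA$, which rescales rows; one could remark on this in passing, but it is not needed for the statement as given.) The only thing worth stating carefully is that the hypothesis $d_i\neq 0$ for all $i$ is exactly what is used in the reverse direction — without it, a zero diagonal entry would kill an entire column of $AD$ regardless of $A$.
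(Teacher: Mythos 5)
Your proof is correct and follows essentially the same route as the paper's, which simply observes that $b_{kl}=a_{kl}d_l$ and notes this immediately gives the equivalence. You merely spell out both directions of the equivalence and the use of $d_l\neq 0$, which the paper leaves implicit.
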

\begin{proof}
Note that for any $k,l\in\{1,\ldots,d\}$ we have 
\hbox{$b_{kl}=a_{kl}d_l$}, 
which immediately yields the assertion of Lemma~\ref{lem:diag}.
\end{proof}

\begin{lem}\label{lem:zeros_1}
Let \hbox{$U=(u_{ij})_{i,j=1}^d, V=(v_{ij})_{i,j=1}^d\in\mathcal{U}_d(\mathbb{C})$}, and let $R_U=(r_{ij})_{i,j=1}^d,R_V=(s_{ij})_{i,j=1}^d$ be the corresponding matrices, with entries in $\mathbb{R}_+$, satisfying Eq.~\eqref{def:R_U}. Then, there exists \hbox{$D=(d_{ij})_{i,j=1}^d\in\mathcal{DU}_d(\mathbb{C})$} such that for all \hbox{$k,l\in\{1,\ldots,d\}$}
\begin{align}
\left(R_UR_V\right)_{kl}\neq 0\quad\Leftrightarrow\quad \left(UDV\right)_{kl}\neq 0.
\end{align}
\end{lem}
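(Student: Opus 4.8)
The plan is to exploit the fact that the zero/non-zero pattern of a matrix product is governed purely by the Boolean (or non-negative real) pattern of the factors, \emph{provided no cancellation occurs} in the complex product. So the only thing that needs to be controlled is the possibility that, in forming $(UDV)_{kl} = \sum_{j} u_{kj} d_{jj} v_{jl}$, several non-zero terms destructively interfere and sum to zero. The key observation is that $(R_U R_V)_{kl} \neq 0$ exactly when there is at least one index $j$ with $u_{kj}\neq 0$ and $v_{jl}\neq 0$ (this is immediate from $R_U, R_V$ having non-negative entries with the same support as $U,V$, so no cancellation can occur on the $R$ side). Thus the content of the lemma is: one can choose the phases $d_{jj}$ so that whenever such a ``connecting'' index exists, the corresponding complex sum is actually non-zero.

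First I would fix notation: write $d_{jj} = e^{i\theta_j}$ and consider the finitely many entries $(k,l)$ for which the support condition holds, i.e.\ for which the set $S_{kl} := \{ j : u_{kj}\neq 0 \text{ and } v_{jl}\neq 0\}$ is non-empty. For each such $(k,l)$, the map $(\theta_1,\dots,\theta_d)\mapsto \sum_{j\in S_{kl}} u_{kj}v_{jl} e^{i\theta_j}$ is a non-trivial trigonometric polynomial (non-trivial because each coefficient $u_{kj}v_{jl}$ with $j\in S_{kl}$ is non-zero), hence its zero set is a proper closed subset of the torus $\mathbb{T}^d$, of measure zero. Taking the finite union over all admissible pairs $(k,l)$, we still get a measure-zero (in particular, proper) subset of $\mathbb{T}^d$; any $(\theta_1,\dots,\theta_d)$ in the complement gives a $D\in\mathcal{DU}_d(\mathbb{C})$ with $(UDV)_{kl}\neq 0$ for every $(k,l)$ with $S_{kl}\neq\emptyset$. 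Conversely, if $S_{kl}=\emptyset$ then every term in the sum vanishes and $(UDV)_{kl}=0$, matching $(R_UR_V)_{kl}=0$. This establishes the stated equivalence for all $k,l$ simultaneously.

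The main (minor) obstacle is simply being careful about the ``genericity'' argument: one must check that each relevant trigonometric polynomial is genuinely not identically zero, which is where the hypothesis that $U,V$ are unitary (so their non-zero entries are honestly non-zero complex numbers) is used, together with Lemma~\ref{lem:diag} to see that inserting the diagonal unitary $D$ does not create or destroy zeros by itself. One could alternatively avoid measure theory entirely and argue inductively/constructively: pick the $\theta_j$ one at a time, at each stage ruling out only finitely many ``bad'' values forced by the already-fixed phases, which is always possible since each bad condition is a single equation in the new phase with non-zero leading coefficient. Either route is routine; I would present the measure-zero version as it is shortest. Finally I would remark that the same reasoning immediately upgrades, by iteration, to products of more than two factors interleaved with diagonal unitaries, which is presumably how this lemma feeds into the proof of Proposition~\ref{prop:equiv_H1_H2}.
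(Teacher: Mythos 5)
Your proof is correct, but it takes a genuinely different route from the paper's. The paper's argument for the hard direction ($(R_UR_V)_{kl}\neq 0 \Rightarrow (UDV)_{kl}\neq 0$) is fully constructive and inductive: it partitions the relevant index pairs $(k,l)$ into classes $I_1,\dots,I_d$ according to the first non-zero entry of the vector $\mathbf{w}^{(kl)}=(u_{km}v_{ml})_m$, and then fixes the phases $\varphi_1,\dots,\varphi_d$ of $D$ one at a time, showing at each step that only finitely many values of the next phase are forbidden (since for each already-non-zero combination, imposing vanishing pins down at most one value of $e^{i\varphi_{N+1}}$). This is precisely the constructive alternative you sketch at the end. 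Your primary proposal instead appeals to genericity on the torus $\mathbb{T}^d$: for each $(k,l)$ with $S_{kl}\neq\emptyset$, the map $(\theta_1,\dots,\theta_d)\mapsto\sum_{j\in S_{kl}}u_{kj}v_{jl}e^{i\theta_j}$ is a non-trivial trigonometric polynomial whose zero set has measure zero (e.g.\ by fixing all but one $\theta_j$ and applying Fubini, since for fixed complementary phases there is at most one bad value of the remaining one), and a finite union of such sets still has measure zero, leaving an uncountable complement from which to choose. Both routes are valid; the measure-zero argument is shorter and more conceptual, while the paper's inductive version is more elementary and makes explicit that the avoided set is finite at each step. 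One small correction: your invocation of Lemma~\ref{lem:diag} here is not actually needed for this lemma's measure-theoretic version (in the paper, that lemma is used later to pass from the two-sided form $UDV$ to the $D_1U^{\dagger}D_2U$ form in Corollary~\ref{concluding_corollary}); what you really use is only that the non-zero entries of $U,V$ are genuinely non-zero, which comes straight from the definition of $R_U,R_V$.
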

\begin{proof}
Let $D\in\mathcal{DU}_d(\mathbb{C})$, and let $k,l\in\{1,\ldots,d\}$ be arbitrary. We shall first prove the simpler direction, i.e., 
\begin{align}\label{implication_1}
(UDV)_{kl}\neq 0\quad\Rightarrow\quad \left(R_UR_V\right)_{kl}\neq 0.
\end{align}
Observe that 
$(UDV)_{kl}\neq 0$
if and only if 
$\sum_{m=1}^du_{km}d_mv_{ml}\neq 0$. 
This, however, implies that there exists at least one \hbox{$m_0\in\{1,\ldots,d\}$} such that $u_{km_0}v_{m_0l}\neq 0$. Further, referring to Eq.~\eqref{def:R_U}, we get that $r_{km_0}s_{m_0l}>0$. Keeping in mind that $r_{ij},s_{ij}\geq 0$ for all $i,j\in\{1,\ldots,d\}$, we finally obtain
$\left(R_UR_V\right)_{kl}=\sum_{m=1}^dr_{km}s_{ml}>0$, 
which completes the proof of Eq.~\eqref{implication_1}.

We now proceed to the harder direction, i.e., to proving that for any $(k,l)\in\{1,\ldots,d\}$
\begin{align}
    \left(R_UR_V\right)_{kl}\neq 0\quad\Rightarrow\quad (UDV)_{kl}\neq 0.
\end{align}
Let us define 
\begin{align}
    I=\left\{(k,l):\;k,l\in\{1,\ldots,d\},\,\left(R_UR_V\right)_{kl}\neq 0\right\}. 
\end{align}
We shall construct a matrix $D\in\mathcal{DU}_d(\mathbb{C})$ such that
\begin{align}\label{condition}
(UDV)_{kl}\neq 0\;\;\;\text{for all}\;\;\;(k,l)\in I.
\end{align}
Note that $\left(R_UR_V\right)_{kl}\neq 0$ if and only if $\sum_{m=1}^d r_{km}s_{ml}\neq 0$. 
Since the matrices $R_U$ and $R_V$ satisfy Eq.~\eqref{def:R_U}, we further get that there exists at least one $m_0\in\{1,\ldots,d\}$ such that $u_{km_0}v_{m_0l}\neq 0$.

For any $(k,l)\in I$ let us define a vector
\begin{align}
&\mathbf{w}^{(kl)}=\left(w^{(kl)}_m\right)_{m=1}^{d}\quad
\text{with}\;\;\; w_m^{(kl)}=u_{km}v_{ml}\;\;\;\text{for all}\;\;\;m\in\{1,\ldots,d\}.
\end{align}
Further, let us also introduce 
\begin{align}
\mathcal{W}=\left\{\mathbf{w}^{(kl)}:\;(k,l)\in I\right\},
\end{align}
the subsets $G_N$ of $\mathcal{W}$ given by
\begin{gather}
G_N=\left\{\mathbf{w}\in\mathcal{W}:\;w_1=\ldots=w_{N-1}=0,\;w_{N}\neq 0\right\}\;\;\;\text{for}\;\;\;N\in\{1,\ldots,d\},
\end{gather}
and the corresponding sets of indexes
\begin{align}
    I_N=\left\{(k,l)\in I:\;\mathbf{w}^{(kl)}\in G_N\right\}\;\;\;\text{for}\;\;\;N\in\{1,\ldots,d\}.
\end{align}
Note that $\{G_N\}_{N=1}^d$ and $\{I_N\}_{N=1}^d$ constitute partitions of the sets  $\mathcal{W}$ and $I$, respectively. 

The construction of the matrix $D\in\mathcal{DU}_d(\mathbb{C})$ for which Eq.~\eqref{condition} holds now proceeds as follows. Define 
\begin{align}
\label{def:D(1)}
&D^{(1)}:=\text{diag}\left(d_1^{(1)},\ldots,d_{d}^{(1)}\right)\quad\text{with}\;\;\;
d_m^{(1)}=\left\{
\begin{array}{ll}
e^{i\varphi_1}&\text{if }m=1\\
1&\text{if }m\neq 1
\end{array}
\right.,\;\;\;\varphi_1\in(0,2\pi).
\end{align}
If $(k,l)\in I_1$, then $\mathbf{w}^{(kl)}\in G_1$, and
\begin{align}
\sum_{m=1}^{d}w^{(kl)}_md_m^{(1)}
&=e^{i\varphi_1}w_1^{(kl)}+\sum_{m=2}^dw_m^{(kl)}=\left(e^{i\varphi_1}-1\right)w_1^{(kl)}+\sum_{m=1}^dw_m^{(kl)}.
\end{align}
Hence, for $(k,l)\in I_1$, we have
\begin{align}\label{varphi1}
\left(UD^{(1)}V\right)_{kl}=\sum_{m=1}^du_{km}d_m^{(1)}v_{ml}=\sum_{m=1}^{d}w^{(kl)}_md_m^{(1)}\neq 0
\end{align}
if and only if
\begin{align}\label{varphi1_choice}
e^{i\varphi_1}\neq 1-\frac{1}{w_1^{(kl)}}\sum_{m=1}^dw_m^{(kl)}.
\end{align}
Note that $w_1^{(kl)}\neq 0$, since $\mathbf{w}^{(kl)}\in G_1$. 

For any $(k,l)\in I_1$ there is at most one value \hbox{$\varphi_1\in(0,2\pi)$} which violates Eq.~\eqref{varphi1_choice}. Since $I_1$ is a~finite set, we see that the set 
\begin{align}
F_1=\left\{\varphi_1\in(0,2\pi):\;\exists_{(k,l)\in I_1}\; e^{i\varphi_1}= 1-\sum_{m=1}^d\frac{w_m^{(kl)}}{w_1^{(kl)}}\right\}
\end{align}
is also finite. We can therefore easily choose \hbox{$\varphi_1\in(0,2\pi)$} such that Eq.~\eqref{varphi1} holds for all $(k,l)\in I_1$. In particular, it suffices to take any number $\varphi_1$ from the uncountable set
$F_1^{\prime}=(0,2\pi)\backslash F_1$.

Now, let $N\in\{1,\ldots,d-1\}$, and suppose that the matrix $D^{(N)}$ satisfies
\begin{align}\label{condition_N}
\left(UD^{(N)}V\right)_{kl}\neq 0\;\;\;\text{for all}\;\;\;(k,l)\in I_1\cup\ldots\cup I_N.
\end{align}
Moreover, assume that $D^{(N)}$ has the following form:
\begin{align}
\label{def:D(N)}
&D^{(N)}:=\text{diag}\left(d_1^{(N)},\ldots,d_d^{(N)}\right)\quad\text{with}\;\;\;
d_m^{(N)}=\left\{
\begin{array}{ll}
e^{i\varphi_m}&\text{if }m\in\{1,\ldots,N\}\\
1&\text{otherwise }
\end{array}
\right.,
\end{align}
where $\varphi_1,\ldots,\varphi_N\in (0,2\pi)$ are some constants. Then we can introduce matrices
\begin{align}\label{def:D(N+1)}
\begin{aligned}
D^{(N+1)}:=\text{diag}\left(d_1^{(N+1)},\ldots,d_d^{(N+1)}\right)\;\;\;
\text{with}\;\;\;
d_m^{(N+1)}=\left\{
\begin{array}{ll}
e^{i\varphi_{N+1}}&\text{if }m=N+1\\
d_m^{(N)}&\text{if }m\neq N+1
\end{array}
\right.,\;\;\;
\text{where}\;\;\;\varphi_{N+1}\in(0,2\pi),
\end{aligned}
\end{align}
The aim now is to prove that the constant $\varphi_{N+1}$ can be always chosen in such a way that 
\begin{align}\label{aim}
\left(UD^{(N+1)}V\right)_{kl}\neq 0\;\;\;\text{for all}\;\;\;(k,l)\in \bigcup_{u=1}^{N+1}I_u.
\end{align}

Define the sets $F_{(N+1),u}$ with $u\in\{1,\ldots,N+1\}$ as
\begin{align}
\begin{aligned}
F_{(N+1),u}
=\left\{\varphi_{N+1}\in(0,2\pi):\;\exists_{D^{(N+1)}\text{given by Eq.~\eqref{def:D(N+1)}}}\exists_{(k,l)\in I_u}\;\left(UD^{(N+1)}V\right)_{kl}= 0\right\}.
\end{aligned}
\end{align}
Let us prove that for any $u\in\{1,\ldots,N+1\}$ the set $F_{(N+1),u}$ is finite. Fix $u\in\{1,\ldots,N+1\}$ and $(k,l)\in I_u$ arbitrarily. We have
\begin{align}
\left(UD^{(N+1)}V\right)_{kl}
&=\sum_{m=1}^dw_m^{(kl)}d_m^{(N+1)}
=\sum_{m=1}^{N+1}\left(e^{i\varphi_m}-1\right)w_m^{(kl)}+\sum_{m=1}^dw_m^{(kl)},
\end{align}
which is equal to zero if and only if 
\begin{align}
\label{condition_for_F_N+1}
\left(e^{i\varphi_{N+1}}-1\right)w_{N+1}^{(kl)}&=
-\sum_{m=1}^dw_m^{(kl)}
-\sum_{m=1}^{N}\left(e^{i\varphi_m}-1\right)w_m^{(kl)}.
\end{align}
If $w_{N+1}^{(kl)}\neq 0$, then Eq.~\eqref{condition_for_F_N+1} is equivalent to 
\begin{align}
e^{i\varphi_{N+1}}=
1-\sum_{m=1}^{d}\frac{w_m^{(kl)}}{w_{N+1}^{(kl)}}
-\sum_{m=1}^{N}\left(e^{i\varphi_m}-1\right)\frac{w_m^{(kl)}}{w_{N+1}^{(kl)}},
\end{align}
which holds for at most one $\varphi_{N+1}\in(0,2\pi)$. On the other hand, if $w_{N+1}=0$, then it follows from Eq.~\eqref{condition_for_F_N+1} that
\begin{align}
0&=-\sum_{m=1}^dw_m^{(kl)}
-\sum_{m=1}^{N}\left(e^{i\varphi_m}-1\right)w_m^{(kl)}
=-\left(UD^{(N)}V\right)_{kl}.
\end{align}
This, however, contradicts our assumption from Eq.~\eqref{condition_N}. Hence, we see that for any $u\in\{1,\ldots,N+1\}$
\begin{align}\label{def:F_N+1u}
\begin{aligned}
F_{(N+1),u}=\Bigg\{&\varphi_{N+1}\in(0,2\pi):\;\;\exists_{(k,l)\in I_u}\;w_{N+1}^{(kl)}\neq 0\;\;\text{and}\;\;
 e^{i\varphi_{N+1}}=1-\sum_{m=1}^{d}\frac{w_m^{(kl)}}{w_{N+1}^{(kl)}}
-\sum_{m=1}^{N}\left(e^{i\varphi_m}-1\right)\frac{w_m^{(kl)}}{w_{N+1}^{(kl)}}\Bigg\},
\end{aligned}
\end{align}
and, since $I_u$ is finite for any $u\in\{1,\ldots,N+1\}$, then so is $F_{(N+1),u}$. 

Now, choosing 
\begin{align}
\varphi_{N+1}\in F_{N+1}^{\prime}=(0,2\pi)\backslash F_{N+1},
\;\;\;\text{where}\;\;\;
F_{N+1}=\bigcup_{u=1}^{N+1}F_{(N+1),u},
\end{align}
we can guarantee that condition from Eq.~\eqref{aim} is fulfilled. 

Finally, recalling that $I=\bigcup_{m=1}^nI_m$, we can define 
\begin{align}
\label{def:D final}
D&:=\text{diag}\left(d_1,\ldots,d_d\right)
\quad\text{with}\;\;\;
d_m=e^{i\varphi_m},\;\;\;\varphi_m\in F_m^{\prime}\;\;\;\text{for}\;\;\;m\in\{1,\ldots,d\},
\end{align}
where 
\begin{align}
F_m^{\prime}=(0,2\pi)\backslash F_m,\;\;\;F_{m}=\bigcup_{u=1}^mF_{m,u}
\end{align}
and any $F_{m,u}$ is defined as in Eq.~\eqref{def:F_N+1u}. The above choice of $\varphi_1,\ldots,\varphi_d$ is always possible (if done inductively on $d$), since the sets $F_1^{\prime},\ldots,F_d^{\prime}$ are uncountable.

\end{proof}

As a corollary of Lemmas~\ref{lem:diag}~and~\ref{lem:zeros_1} we obtain the following statement.

\begin{cor}\label{concluding_corollary}
Let $U\in\mathcal{U}_d(\mathbb{C})$, and let $R_U$ be an arbitrary matrix over the field $\mathbb{R}_+$ satisfying Eq.~\eqref{def:R_U} (in particular, it can be $P_U$, given by Eq.~\eqref{def:P_U}). Then there exist matrices \hbox{$D_1, D_2\in\mathcal{DU}_d(\mathbb{C})$} such that for any $k,l\in\{1,\ldots,d\}$
\begin{align}
\left(R_U^TR_U\right)_{kl}=0
\quad\Leftrightarrow\quad
\left(D_1U^{\dagger}D_2U\right)_{kl}=0.
\end{align}
\end{cor}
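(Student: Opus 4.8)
The plan is to obtain the statement almost directly from Lemma~\ref{lem:zeros_1}, after choosing the two unitaries and their associated nonnegative matrices appropriately, and then to dispose of the extra left factor $D_1$ by invoking Lemma~\ref{lem:diag}.

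First I would note that $R_U^T$ is an admissible ``$R$-matrix'' for $U^{\dagger}$ in the sense of Eq.~\eqref{def:R_U}: indeed $(R_U^T)_{ij}=0$ precisely when $(R_U)_{ji}=0$, that is, when $u_{ji}=0$, that is, when $(U^{\dagger})_{ij}=\overline{u_{ji}}=0$. I would then apply Lemma~\ref{lem:zeros_1} with the first unitary taken to be $U^{\dagger}$ (with associated matrix $R_U^T$) and the second unitary taken to be $U$ (with associated matrix $R_U$, the one fixed in the statement). The lemma then provides a matrix $D_2\in\mathcal{DU}_d(\mathbb{C})$ such that, for all $k,l\in\{1,\ldots,d\}$,
\begin{align}
\left(R_U^TR_U\right)_{kl}\neq 0\quad\Longleftrightarrow\quad \left(U^{\dagger}D_2U\right)_{kl}\neq 0,
\end{align}
equivalently $\left(R_U^TR_U\right)_{kl}=0\iff\left(U^{\dagger}D_2U\right)_{kl}=0$.

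Finally I would account for the factor $D_1$. Fix any $D_1\in\mathcal{DU}_d(\mathbb{C})$ (for instance $D_1=\iden$; in fact the argument works for every such $D_1$). Since $D_1^T=D_1$, we may write $D_1U^{\dagger}D_2U=\big((U^{\dagger}D_2U)^TD_1\big)^T$, and Lemma~\ref{lem:diag}, applied to $A=(U^{\dagger}D_2U)^T$ together with the invertible diagonal matrix $D_1$, shows that $(U^{\dagger}D_2U)^TD_1$ and $(U^{\dagger}D_2U)^T$ have the same set of vanishing entries; transposing, so do $D_1U^{\dagger}D_2U$ and $U^{\dagger}D_2U$. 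Chaining this with the equivalence displayed above yields $\left(R_U^TR_U\right)_{kl}=0\iff\left(D_1U^{\dagger}D_2U\right)_{kl}=0$ for all $k,l$, which is the claim.

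I expect no real obstacle beyond this bookkeeping: the only two points to be careful about are the identification of $R_U^T$ as an $R$-matrix for $U^{\dagger}$, and the observation that left multiplication by a diagonal unitary is covered by Lemma~\ref{lem:diag} after transposition. All the substantive content sits in Lemma~\ref{lem:zeros_1}, whose nontrivial direction required the inductive, phase-avoidance construction of $D$.
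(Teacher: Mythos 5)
Your proof is correct and is essentially the derivation the paper has in mind: the corollary is presented as a direct consequence of Lemmas~\ref{lem:diag} and~\ref{lem:zeros_1}, and you invoke exactly those two, first observing that $R_U^T$ qualifies as an admissible $R$-matrix for $U^{\dagger}$ so that Lemma~\ref{lem:zeros_1} (applied with $V\mapsto U^{\dagger}$, $W\mapsto U$) supplies $D_2$, and then disposing of the left factor $D_1$ via Lemma~\ref{lem:diag} after a transpose (or, equivalently, by noting that left multiplication by an invertible diagonal matrix merely rescales rows and hence preserves the zero pattern). Your extra remark that the equivalence in fact holds for \emph{every} $D_1\in\mathcal{DU}_d(\mathbb{C})$ is a harmless strengthening of the stated existence claim.
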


To finally prove the equivalence of hypotheses \hyperref[cnd:H1]{(H1)} and \hyperref[cnd:H2]{(H2)} it suffices to establish the following lemma.

\begin{lem}\label{lem:last_step}
Let $U\in\mathcal{U}_d(\mathbb{C})$. For any $M\in\mathbb{N}$ there exist matrices $D_1,\ldots,D_{2M}\in\mathcal{DU}_d(\mathbb{C})$ such that for any \hbox{$k,l\in\{1,\ldots,d\}$}
\begin{align}
    \label{eq:lem:21}
    \left(\left(P_U^TP_U\right)^M\right)_{kl}=0\quad\Leftrightarrow \quad \left(D_1U{\dagger}D_2U\ldots D_{2M-1}U^{\dagger}D_{2M}U\right)_{kl}=0.
\end{align}
\end{lem}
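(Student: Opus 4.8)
The plan is to prove Lemma~\ref{lem:last_step} by induction on $M$, with Corollary~\ref{concluding_corollary} serving as the base case and Lemma~\ref{lem:zeros_1} powering the inductive step. For $M=1$ the statement is exactly Corollary~\ref{concluding_corollary} specialised to $R_U=P_U$: it produces $D_1,D_2\in\mathcal{DU}_d(\mathbb{C})$ with $(P_U^TP_U)_{kl}=0\Leftrightarrow(D_1U^\dagger D_2U)_{kl}=0$ for all $k,l$, which is Eq.~\eqref{eq:lem:21} for $M=1$.

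For the inductive step I would assume $D_1,\ldots,D_{2M}\in\mathcal{DU}_d(\mathbb{C})$ satisfy Eq.~\eqref{eq:lem:21} and set $W_M:=D_1U^\dagger D_2U\cdots D_{2M-1}U^\dagger D_{2M}U$. Since $W_M$ is a product of unitaries, $W_M\in\mathcal{U}_d(\mathbb{C})$; moreover, by the inductive hypothesis the zero pattern of $W_M$ coincides with that of the matrix $(P_U^TP_U)^M$, whose entries are automatically nonnegative (being a product of nonnegative matrices). Hence $(P_U^TP_U)^M$ is a legitimate choice of an $\mathbb{R}_+$-matrix associated with $W_M$ in the sense of Eq.~\eqref{def:R_U}. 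I would then apply Corollary~\ref{concluding_corollary} once more (again with $R_U=P_U$) to obtain $D',D''\in\mathcal{DU}_d(\mathbb{C})$ such that the unitary $V:=D'U^\dagger D''U$ has zero pattern equal to that of $P_U^TP_U$, so that $P_U^TP_U$ is a legitimate $\mathbb{R}_+$-matrix associated with $V$. Applying Lemma~\ref{lem:zeros_1} to the pair $W_M,V$ with the associated $\mathbb{R}_+$-matrices $(P_U^TP_U)^M$ and $P_U^TP_U$ then yields $\widetilde D\in\mathcal{DU}_d(\mathbb{C})$ with
\begin{equation}
\bigl(W_M\widetilde D V\bigr)_{kl}\neq 0\quad\Longleftrightarrow\quad\bigl((P_U^TP_U)^M(P_U^TP_U)\bigr)_{kl}=\bigl((P_U^TP_U)^{M+1}\bigr)_{kl}\neq 0
\end{equation}
for all $k,l$. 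Finally, expanding $W_M\widetilde D V=D_1U^\dagger D_2U\cdots D_{2M-1}U^\dagger D_{2M}U\,(\widetilde D D')\,U^\dagger D''U$ and relabelling $D_j':=D_j$ for $j\le 2M$, $D_{2M+1}':=\widetilde D D'\in\mathcal{DU}_d(\mathbb{C})$ and $D_{2M+2}':=D''$, one sees that $W_M\widetilde D V=D_1'U^\dagger D_2'U\cdots D_{2M+1}'U^\dagger D_{2M+2}'U$ is precisely a product of the form required for $M+1$, and it has the prescribed zero pattern. This closes the induction.

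The only genuinely delicate point --- and the reason the $D_i$ cannot simply be taken equal to the identity --- is that over $\mathbb{C}$ the zero pattern of a product of matrices need not equal the combinatorial product of the individual zero patterns, since cancellations between complex entries may introduce spurious zeros. Lemma~\ref{lem:zeros_1} is exactly the device that eliminates such accidental cancellations via a single well-chosen diagonal unitary, and the induction above is little more than bookkeeping showing that $M$ successive applications of this device compose to realise the pattern $(P_U^TP_U)^M$. A secondary technical point, supplied by the inductive hypothesis itself, is the verification that $(P_U^TP_U)^M$ qualifies as the auxiliary $\mathbb{R}_+$-matrix for $W_M$ when invoking Lemma~\ref{lem:zeros_1}.
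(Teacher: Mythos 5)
Your proof is correct and follows essentially the same route as the paper's: induction on $M$, with Corollary~\ref{concluding_corollary} providing the base case and Lemma~\ref{lem:zeros_1} (applied to the $M$-stroke unitary and a fresh two-stroke unitary with the $P_U^TP_U$ pattern) powering the inductive step. The only cosmetic difference is that you retain both diagonals $D',D''$ from Corollary~\ref{concluding_corollary} and fold $D'$ into $\widetilde D$, whereas the paper implicitly drops the left diagonal (which is harmless since left multiplication by a nonsingular diagonal matrix preserves the zero pattern).
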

\begin{proof}

First of all, note that the assertion of Lemma~\ref{lem:last_step}  for $M=1$ immediately follows from Corollary~\ref{concluding_corollary}. To establish it for all $M\in\mathbb{N}$, we shall conduct an inductive proof. Suppose that the assertion of Lemma~\ref{lem:last_step} holds for some $M\in\mathbb{N}$, i.e., there exist matrices \hbox{$D_1,\ldots,D_{2M}\in\mathcal{DU}_d(\mathbb{C})$} such that for any \hbox{$k,l\in\{1,\ldots,d\}$} Eq.~\eqref{eq:lem:21} holds. We will prove that, upon this assumption, there exist matrices $D_{2M+1},D_{2M+2}\in\mathcal{DU}_d(\mathbb{C})$ such that for all \hbox{$k,l\in\{1,\ldots,d\}$} we have
\begin{align}
\label{induction}
    &\left(\left(P_U^TP_U\right)^{M+1}\right)_{kl}=0\quad\Leftrightarrow\quad
    \left(D_1U{\dagger}D_2U\ldots D_{2M+1}U^{\dagger}D_{2M+2}U\right)_{kl}=0.
\end{align}

According to Corollary~\ref{concluding_corollary} (applied for $P_U$), there exists $D\in\mathcal{DU}_d(\mathbb{C})$ such that, for all $k,l\in\{1,\ldots,d\}$, 
\mbox{$\left(P_U^TP_U\right)_{kl}=0$}
if and only if $\left(\mathbbm{1}U^{\dagger}DU\right)_{kl}=0$. 
Now, keeping in mind the inductive assumption and the above, set
\begin{align}
V:=D_1U^{\dagger}D_2U\ldots D_{2M-1}U^{\dagger}D_{2M}U\;\;\;\text{and}\;\;\;W:=U^{\dagger}DU,
\end{align}
and note that the matrices $R_V=\left(P^T_UP_U\right)^{M}$ and $R_W=P^T_UP_U$
satisfy Eq.~\eqref{def:R_U} for $V$ and $W$, respectively. Hence, using Lemma~\ref{lem:zeros_1}, we obtain that there exists $\tilde{D}\in\mathcal{DU}_d(\mathbb{C})$ for which the equivalence
\begin{align}
&\left(\left(P^T_UP_U\right)^MP_U^TP_U\right)_{kl}
=\left(R_VR_W\right)_{kl}=0\quad\Leftrightarrow\quad
\left(V\tilde{D}W\right)_{kl}=0
\end{align}
holds for all $k,l\in\{1,\ldots,d\}$. This, however, implies Eq.~\eqref{induction} for $D_{2M+1}=\tilde{D}$ and $D_{2M+2}=D$, since
\begin{align}
V\tilde{D}W=\left(D_1U^{\dagger}\ldots D_{2M}U\right)\tilde{D}\left(U^{\dagger}DU\right).
\end{align}
\end{proof}


\section{Relations to Markov chains}
\label{app:markov}

For the convenience of the reader, we summarise here certain definitions and facts concerning finite Markov chains, to which we refer in Sec.~\ref{sec:coherence_operations}. 
In particular, we focus on irreducible and aperiodic Markov chains (for more details see, e.g., Refs.~\cite{haggstrom_2002,markov}). 

Consider a time-homogenous Markov chain \hbox{$\mathbf{X}=(X_n)_{n\in\mathbb{N}_0}$} with a countable state space $\Sigma$ (equipped with a~countably generated $\sigma$-field $\mathcal{B}(\Sigma)$ on $\Sigma$), and let 
\begin{align}
\begin{aligned}
T=\{T_{ij}:\,i,j\in \Sigma\}\;\;\;
\text{with}\;\;\;T_{ij}=\mathbb{P}(X_{n+1}=j|X_n=i)\;\;\;\text{for}\;\;\;n\in\mathbb{N}_0
\end{aligned}
\end{align}
be its transition matrix. The $m$-th step transition matrices \hbox{$T(m)=\{T_{ij}(m):\,i,j\in \Sigma\}$} are given by
\begin{align}\label{def:mth_step}
\begin{aligned}
T(0)=\mathbbm{1}\;\;\;
\text{and}\;\;\;T_{ij}(m)=\sum_{k\in \Sigma}T_{ik}T_{kj}(m-1)\;\;\;\text{for}\;\;\;m\in\mathbb{N},
\end{aligned}
\end{align}
and they describe the probabilities of getting from one state to another (or the same) in exactly $m$ steps. 

We say that a state $i\in \Sigma$ is \emph{accessible} from a state \hbox{$j\in\Sigma$} if and only if there exists some non-negative constant $m$ such that $T_{ij}(m)>0$. If $i$ and $j$ are accessible from each other, we call them \emph{communicating}. It shall be noted that the communication relation is an equivalence relation. 
A Markov chain $\mathbf{X}$ is called \emph{irreducible} if and only if its state space constitutes one communication class. 
In the graph-representation of the chain, for communicating states $i$ and $j$ there are directed paths from $i$ to $j$ and from $j$ to $i$. As a~consequence, a finite Markov chain is irreducible if and only if its graph
representation is a strongly connected graph (a directed graph is \emph{strongly connected} if every vertex is reachable from every other vertex). 
The \emph{period} $d(i)$ of a state $i\in \Sigma$ of a~Markov chain $\mathbf{X}$ is given by
$d(i)=\text{GCD}\{n\geq 1:\,T_{ii}(n)>0\}$. 
We call a state $i$ \emph{aperiodic} if $d(i)=1$. 
A Markov chain is \emph{aperiodic} if and only if all its states are aperiodic.

The following well-known facts (whose proofs can be found, e.g., in Ref.~\cite{markov}; cf. Corollaries 3.3.3 and 3.3.4 therein) shall be useful for our purposes.

\begin{prop}\label{prop:_irreducible_aperiodic}
Let $\mathbf{X}$ be an irreducible and aperiodic Markov chain with finite
state space and transition matrix~$T$. Then, there exists a finite constant $M$ such that for all $m\geq M$
\begin{align}
T_{ij}(m)>0\;\;\;\text{for all states}\;\;\;i,j\in \Sigma,
\end{align}
meaning that $T(m)$ is a matrix with all non-zero entries, and so any two states are communicating.
\end{prop}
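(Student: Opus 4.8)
The plan is to reduce the statement to an elementary fact about additive sub-semigroups of $\mathbb{N}$, and then to propagate positivity from $i$ to $j$ using the Chapman--Kolmogorov identity recorded in Eq.~\eqref{def:mth_step}.

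First I would fix a state $i\in\Sigma$ and set $S_i=\{n\geq 1:\,T_{ii}(n)>0\}$. If $a,b\in S_i$, then Eq.~\eqref{def:mth_step} gives $T_{ii}(a+b)\geq T_{ii}(a)\,T_{ii}(b)>0$, so $S_i$ is closed under addition; it is nonempty because in a finite irreducible chain one can leave $i$ and return; and by aperiodicity $\text{GCD}(S_i)=d(i)=1$. I would then invoke (or prove in a couple of lines) the standard number-theoretic lemma: an additively closed subset $S\subseteq\mathbb{N}$ with $\text{GCD}(S)=1$ contains every sufficiently large integer. The quick argument is: choose finitely many $s_1,\dots,s_r\in S$ with $\text{GCD}(s_1,\dots,s_r)=1$; Bézout's identity then yields nonnegative integer combinations $P,Q$ of the $s_k$ with $Q=P+1$; if $P=0$ then $1\in S$ and we are done, while if $P\geq 1$ then $P,Q\in S$ and for every $n\geq P^2$ one may write $n=qP+s$ with $0\le s\le P-1$ and $q\geq P>s$, whence $n=(q-s)P+s(P+1)=(q-s)P+sQ\in S$. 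Applied to $S_i$, this produces a finite $N_i$ with $T_{ii}(n)>0$ for all $n\geq N_i$.

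Next I would use irreducibility directly: for every ordered pair $(i,j)$ of states there is some $n_{ij}\geq 0$ with $T_{ij}(n_{ij})>0$. Combining the two ingredients through Eq.~\eqref{def:mth_step}, for any $m\geq N_i+n_{ij}$ we obtain
\[
T_{ij}(m)\;\geq\; T_{ii}(m-n_{ij})\,T_{ij}(n_{ij})\;>\;0,
\]
since $m-n_{ij}\geq N_i$. Finally, because $\Sigma$ is finite the quantity $M:=\max_{i,j\in\Sigma}\bigl(N_i+n_{ij}\bigr)$ is a finite constant, and the displayed inequality then shows $T_{ij}(m)>0$ for all $i,j\in\Sigma$ and all $m\geq M$; in particular every pair of states communicates.

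The only genuinely non-routine ingredient is the number-theoretic lemma above (equivalently, that a numerical semigroup whose generators have unit greatest common divisor is cofinite) — this is precisely where aperiodicity enters, and it is the step a reader is most likely to scrutinise. Everything else is bookkeeping: closure of $S_i$ under addition, a single application of Chapman--Kolmogorov to bridge from $i$ to $j$, and finiteness of $\Sigma$ to make the maximum defining $M$ well posed.
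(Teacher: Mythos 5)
Your argument is correct, and it fills in a proof that the paper itself does not give: in the paper this proposition is stated without proof and attributed to Corollaries 3.3.3 and 3.3.4 of the cited reference on Markov chains. Your route is the standard textbook one and matches what those references do. The two ingredients you isolate are exactly the right ones: (i) aperiodicity makes the return-time set $S_i=\{n\geq 1: T_{ii}(n)>0\}$ an additive sub-semigroup of $\mathbb{N}$ with unit gcd, and the numerical-semigroup lemma (your Bézout/division argument, or equivalently the Chicken McNugget/Frobenius fact) makes it cofinite; (ii) irreducibility plus one Chapman--Kolmogorov splitting $T_{ij}(m)\geq T_{ii}(m-n_{ij})T_{ij}(n_{ij})$ propagates positivity from $i$ to $j$; finiteness of $\Sigma$ then lets you take a maximum to get a single $M$. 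The small technical points are all handled correctly: $S_i\neq\emptyset$ in a finite irreducible chain, $\gcd(S_i)=d(i)=1$, the split of the Bézout identity into $Q-P=1$ with $P,Q$ nonnegative combinations, and the bound $q-s\geq 1$ when $n\geq P^2$. Nothing is missing.
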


Note that the assertion of Proposition~\ref{prop:_irreducible_aperiodic} might not hold if we abandon the assumption that a~Markov chain is aperiodic. Indeed, there exist irreducible Markov chains such that all matrices $T(m)$ have zero entries (and kind of a block structure). An exemplary transition matrix of such a chain is
\begin{align}
T=\left[
\begin{array}{cccc}
0&0&1/2&1/2\\
1&0&0&0\\
0&1&0&0\\
0&1&0&0
\end{array}
\right].
\end{align}

\begin{prop}\label{prop:loop}
If a Markov chain is irreducible, and additionally its directed graph has a~vertex with a loop (which is equivalent to requiring that the transition matrix $T$ of this Markov chain has at least one nonzero diagonal element), then it is also aperiodic.
\end{prop}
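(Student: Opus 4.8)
The plan is to argue directly from the definition of the period $d(i) = \mathrm{GCD}\{n\geq 1:\, T_{ii}(n)>0\}$ and to exploit that all states of an irreducible chain share the same period. First I would recall (or quickly re-derive) the standard fact that in an irreducible Markov chain the period is a class property, i.e.\ $d(i)=d(j)$ for all states $i,j\in\Sigma$; this follows because if $T_{ij}(r)>0$ and $T_{ji}(s)>0$ for some $r,s$ (which exist by irreducibility), then for any $n$ with $T_{jj}(n)>0$ one has $T_{ii}(r+n+s)>0$ and also $T_{ii}(r+s)>0$, so $d(i)$ divides both $r+s$ and $r+n+s$, hence divides $n$; thus $d(i)\mid d(j)$, and by symmetry $d(i)=d(j)$. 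Therefore it suffices to exhibit one aperiodic state.

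The key step is then immediate: suppose the directed graph of the chain has a vertex $i$ with a loop, which by definition of the graph representation means $T_{ii}>0$, i.e.\ $T_{ii}(1)>0$. Then $1\in\{n\geq 1:\, T_{ii}(n)>0\}$, so $d(i)=\mathrm{GCD}$ of a set containing $1$, which forces $d(i)=1$. Hence state $i$ is aperiodic, and by the class-property argument every state has period $1$, so the chain is aperiodic. The parenthetical equivalence in the statement — that having a looped vertex is the same as $T$ having a nonzero diagonal entry — is just the dictionary between the chain and its graph, so no separate argument is needed beyond noting it.

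I expect no real obstacle here; the only mildly non-trivial ingredient is the class-property of the period, which is classical and can either be cited (e.g.\ to Ref.~\cite{markov}) or dispatched in the two lines sketched above. If one prefers to keep the appendix self-contained, I would include the short divisibility argument; otherwise a citation suffices. The write-up should therefore be only a few sentences long.

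\begin{proof}
Recall that in an irreducible Markov chain all states have the same period. Indeed, fix states $i,j\in\Sigma$; by irreducibility there are $r,s\geq 1$ with $T_{ij}(r)>0$ and $T_{ji}(s)>0$, so $T_{ii}(r+s)>0$ and, for any $n$ with $T_{jj}(n)>0$, also $T_{ii}(r+n+s)>0$. Hence $d(i)$ divides both $r+s$ and $r+n+s$, and therefore $d(i)\mid n$; as this holds for every such $n$, we get $d(i)\mid d(j)$, and by symmetry $d(i)=d(j)$.

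Now assume the directed graph of the chain has a vertex with a loop, i.e.\ there is a state $i$ with $T_{ii}=T_{ii}(1)>0$. Then $1$ belongs to $\{n\geq 1:\, T_{ii}(n)>0\}$, so $d(i)=\mathrm{GCD}\{n\geq 1:\, T_{ii}(n)>0\}=1$, i.e.\ $i$ is aperiodic. By the previous paragraph, $d(j)=d(i)=1$ for every state $j$, so the chain is aperiodic.
\end{proof}
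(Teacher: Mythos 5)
Your proof is correct and takes the classical route (period is a class property for irreducible chains, and a loop gives a state of period $1$); since the paper only cites Ref.~\cite{markov} for this fact rather than proving it, your self-contained two-step argument is exactly the standard proof the citation points to. No gaps.
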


As a corollary of Propositions~\ref{prop:_irreducible_aperiodic}~and~\ref{prop:loop} we obtain the following:
\begin{prop}\label{prop:graph}
If a Markov chain has a directed graph which is strongly connected and has at least one vertex with a loop, then it is both irreducible and aperiodic.
\end{prop}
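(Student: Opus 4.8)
The plan is to simply chain together the two facts recalled earlier in this appendix; no new machinery is needed. First I would invoke the graph-theoretic characterisation of irreducibility for finite Markov chains stated above: a finite Markov chain is irreducible if and only if its directed graph representation is strongly connected. Since by hypothesis the graph is strongly connected, the chain is irreducible.

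Next, the hypothesis that the directed graph has a vertex with a loop is, as noted, equivalent to the transition matrix $T$ having a nonzero diagonal entry, say $T_{ii}>0$ for some state $i$. This is precisely the extra assumption required by Proposition~\ref{prop:loop}, which together with irreducibility (just established) yields that the chain is aperiodic. Combining the two conclusions gives that the chain is both irreducible and aperiodic, as claimed. If one wishes to make the link to Proposition~\ref{prop:_irreducible_aperiodic} explicit, one can then additionally conclude that there is a finite constant $M$ with $T(m)$ strictly positive for all $m\geq M$, which is the form in which this fact is actually used in Sec.~\ref{sec:coherence_operations}.

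I do not anticipate any real obstacle here: both ingredients --- the characterisation \emph{strongly connected $\Leftrightarrow$ irreducible} for finite chains, and Proposition~\ref{prop:loop} --- are already available, so the argument is the one-line implication $\text{strongly connected}+\text{loop}\Rightarrow\text{irreducible}+\text{aperiodic}$. The only point to be slightly careful about is phrasing the loop condition correctly: a loop at vertex $i$ means $T_{ii}>0$, hence $1\in\{n\geq 1:T_{ii}(n)>0\}$, which forces the period $d(i)=1$; but this is exactly what Proposition~\ref{prop:loop} already encapsulates, so it need not be re-derived in the proof of Proposition~\ref{prop:graph}.
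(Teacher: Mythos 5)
Your argument is correct and matches the paper's own reasoning: the paper presents Proposition~\ref{prop:graph} as an immediate corollary, obtained exactly by combining the stated equivalence between strong connectivity and irreducibility for finite chains with Proposition~\ref{prop:loop}. Your additional remark about invoking Proposition~\ref{prop:_irreducible_aperiodic} to conclude that $T(m)$ is eventually strictly positive is consistent with how the result is actually used in Sec.~\ref{sec:coherence_operations}.
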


Using Proposition~\ref{prop:graph}, we can construct a simple example illustrating how to directly approach the problem of estimating the value $m\in\mathbb{N}$ describing the number of steps needed to generate a matrix $T(m)$ with all non-zero entries.

\begin{ex}\label{ex:graph}
Let $U\in\mathcal{U}_d(\mathbb{C})$, and let $W$ be its unistochastic matrix, that is, $W=U\circ \bar{U}$, where $\circ$ represents the entry-wise product (also known as Hadamard or Schur product). 
Consider a finite Markov chain described by a~transition matrix $T=W^TW$ and a directed graph $\mathcal{D}$ with vertices $\{v_1,\ldots,v_d\}$. 
If we assume that $\mathcal{D}$ is strongly connected and one of its vertices, say $v_{i_0}$, has a loop, then, according to Proposition~\ref{prop:graph}, we obtain the following: 
\begin{align}
\exists_{M\in\mathbb{N}}\;\forall_{i,j\in\mathbb\{1,\ldots,d\}}\;\forall_{m\geq M}\;\;\;T_{ij}(m)>0,
\end{align} 
where $T(m)$ denotes here a $m$-th step transition matrix, defined in Eq.~\eqref{def:mth_step}. Willing to find a constant $M$ for which the above condition holds, we may proceed as follows:
\begin{enumerate}
\item Since the graph $\mathcal{D}$ is strongly connected, for any $i,j\in\{1,\ldots,d\}$, there exist
\begin{align}
\alpha_i:=\min\left\{\alpha\in\mathbb{N}:\;T_{ii_0}(\alpha)>0\right\}
\;\;\;\text{and}\;\;\;
\beta_j:=\min\left\{\beta\in\mathbb{N}:\;T_{i_0j}(\beta)>0\right\},
\end{align}
which are finite. 
\item Note that for all $i,j\in\{1,\ldots,d\}$ and  any \hbox{$m\geq \alpha_i+\beta_j$}, we have
$T_{ij}(m)>0$. 
Indeed, letting $m=\alpha_i+\beta_j+k$ with some $k\in\mathbb{N}$, and recalling that $v_{i_0}$ has a loop, we get
$T_{ij}(m)\geq T_{ii_0}(\alpha_i)\left(T_{i_0i_0}(1)\right)^kT_{i_0j}(\beta_j)>0$.
\item Define
$M:=2\max_{i\in\{1,\ldots,d\}}\alpha_i$. 
Since the matrix $T=W^TW$ is symmetric, we have
\begin{align}
    M=\max_{i\in\{1,\ldots,d\}}\alpha_i+\max_{j\in\{1,\ldots,d\}}\beta_j.
\end{align}
Hence, according to Step 2, we know that for all $i,j\in\{1,\ldots,d\}$ and  any $m\geq M$
\begin{align}
    T_{ij}(m)
\geq T_{ii_0}(\alpha_i)
\left(T_{i_0i_0}(1)\right)^{m-\alpha_i-\beta_j}T_{i_0j}(\beta)>0.
\end{align}
\end{enumerate}
\end{ex}


\section{Proof of Proposition~\ref{prop:lower_bound}}
\label{app:proofs-lower-bound}

In order to prove Proposition~\ref{prop:lower_bound} (identifying the lower bound on the number of coherence engine's strokes
needed to produce all operations), let us first formulate and prove two auxiliary lemmas.

\begin{lem}\label{lem:X^TX}
Let $U\in\mathcal{U}_d(\mathbb{C})$, and let $X_U=(x_{ij})_{i,j=1}^d$ be such that
\begin{align}\label{def:X_U}
x_{ij}=\left|u_{ij}\right|\;\;\;\text{for any}\;\;\;i,j\in\{1,\ldots,d\}.
\end{align}
Then, for any $M\in\mathbb{N}$, we have
\begin{align}
\begin{aligned}
\max_{D_1,\ldots,D_{2M}\in\mathcal{DU}_d(\mathbb{C})}\left|\left\langle a\left|D_{2M}U^{\dagger}D_{2M-1}U\ldots D_2U^{\dagger}D_1U\right|b\right\rangle\right|
\leq\left\langle a\left|\left(X_U^TX_U\right)^M\right|b\right\rangle\;\;\;\text{for all}\;\;\;a,b\in\{1,\ldots,d\}.
\end{aligned}
\end{align}
\end{lem}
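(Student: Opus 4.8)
The plan is to prove the inequality by induction on $M$, reducing the $M$-step product to the $(M-1)$-step product. For the base case $M=1$, I would observe that for any $D_1,D_2\in\mathcal{DU}_d(\mathbb{C})$ and any $a,b$,
\begin{align}
\left|\left\langle a\left|D_2U^\dagger D_1 U\right|b\right\rangle\right|
=\left|\sum_{j=1}^d \overline{(D_2)_{aa}}\,\overline{u_{ja}}\,(D_1)_{jj}\,u_{jb}\right|
\leq \sum_{j=1}^d |u_{ja}|\,|u_{jb}|
=\left\langle a\left|X_U^T X_U\right|b\right\rangle,
\end{align}
using the triangle inequality and $|(D_i)_{kk}|=1$; this bound is independent of $D_1,D_2$, so it holds for the maximum. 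The induction step is the heart of the argument: write $D_{2M}U^\dagger D_{2M-1}U\ldots D_2U^\dagger D_1 U = (D_{2M}U^\dagger D_{2M-1}U)\cdot V$, where $V=D_{2M-2}U\ldots D_2U^\dagger D_1U$ is a product of $2(M-1)$ alternating factors (note the leftmost factor of $V$ is $D_{2M-2}U$, so $V$ has the form to which the inductive hypothesis applies after the obvious relabeling). Then expand
\begin{align}
\left|\left\langle a\left|D_{2M}U^\dagger D_{2M-1}U\,V\right|b\right\rangle\right|
\leq \sum_{c=1}^d \left|\left\langle a\left|D_{2M}U^\dagger D_{2M-1}U\right|c\right\rangle\right|\,\left|\left\langle c\left|V\right|b\right\rangle\right|.
\end{align}

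The point to be careful about is that the maximization over the $D_i$ on the left does not factorize cleanly over the two pieces, because the product $U^\dagger D_{2M-1} U$ appears split across the two factors only at the single index $c$; but in fact it does factorize, since $D_{2M-1}$ enters only the first factor, $D_{2M-2},\dots,D_1$ enter only $V$, and $D_{2M}$ is an overall diagonal on the left which does not affect the modulus. Hence, bounding each summand: by the $M=1$ case, $\left|\left\langle a\left|D_{2M}U^\dagger D_{2M-1}U\right|c\right\rangle\right|\leq \left\langle a\left|X_U^TX_U\right|c\right\rangle$, and by the inductive hypothesis, $\max_{D_1,\dots,D_{2M-2}}\left|\left\langle c\left|V\right|b\right\rangle\right|\leq \left\langle c\left|(X_U^TX_U)^{M-1}\right|b\right\rangle$. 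Summing over $c$ and using that $X_U^TX_U$ has non-negative entries gives
\begin{align}
\max_{D_1,\dots,D_{2M}}\left|\left\langle a\left|D_{2M}U^\dagger D_{2M-1}U\ldots D_1U\right|b\right\rangle\right|
\leq \sum_{c=1}^d \left\langle a\left|X_U^TX_U\right|c\right\rangle\left\langle c\left|(X_U^TX_U)^{M-1}\right|b\right\rangle
= \left\langle a\left|(X_U^TX_U)^{M}\right|b\right\rangle,
\end{align}
which closes the induction.

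The main obstacle I anticipate is purely a bookkeeping one: making sure the splitting of the $2M$-fold product into a $2$-fold head times a $2(M-1)$-fold tail produces a tail that genuinely matches the form in the statement of the lemma (alternating $U^\dagger$ and $U$ starting appropriately), so that the inductive hypothesis applies verbatim, and making sure that the suprema over the disjoint groups of diagonal unitaries really do separate. Both are true here because each $D_i$ appears in exactly one of the two factors and the outermost $D_{2M}$ is irrelevant to the modulus, but this needs to be stated carefully rather than waved through. No genuinely hard estimate is involved beyond the triangle inequality and non-negativity of the entries of $X_U^TX_U$.
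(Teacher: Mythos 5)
Your proof is correct and follows essentially the same route as the paper's: an induction on $M$ with the base case $M=1$ from the triangle inequality, and the inductive step obtained by peeling off $D_{2M}U^\dagger D_{2M-1}U$, inserting a resolution of identity, and bounding the head and tail separately (using that each $D_i$ enters exactly one factor and that $X_U^T X_U$ has nonnegative entries). One small slip: the tail $V$ should read $D_{2M-2}U^\dagger D_{2M-3}U\ldots D_1U$, whose leftmost pair is $D_{2M-2}U^\dagger$ rather than $D_{2M-2}U$; the alternation starts with $U^\dagger$, which is exactly what makes the inductive hypothesis apply after relabeling.
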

\begin{proof}
First of all, note that for any \hbox{$a,b\in\{1,\ldots,d\}$} we have
\begin{align}
\begin{aligned}
\max_{D_1,D_2\in\mathcal{DU}_d(\mathbb{C})}
\left|\left\langle a
\left|D_2U^{\dagger}D_1U\right|
b\right\rangle\right|
=\max_{\theta_1,\ldots,\theta_d,\gamma_1,\ldots,\gamma_d\in\mathbb{R}}
\left|\sum_{j=1}^d \bar{u}_{ja}e^{i\theta_j}u_{jb}e^{i\gamma_j}\right|
\leq\sum_{j=1}^d
\left|\bar{u}_{ja}\right|\left| u_{jb}\right|
=\left\langle a\left|\left(X_U^TX_U\right)\right|b\right\rangle,
\end{aligned}
\end{align}
where $X_U=(x_{ij})_{i,j=1}^d$ is given by Eq.~\eqref{def:X_U}.

The proof then proceeds by induction on $M\in\mathbb{N}$. If for any $M\in\mathbb{N}$ and all $a,b\in\{1,\ldots,d\}$ 
\begin{align}
\begin{aligned}
\max_{D_1\ldots,D_{2M}\in\mathcal{DU}_d(\mathbb{C})}\left|\left\langle a\left|D_{2M}U^{\dagger}D_{2M-1}U\ldots D_2U^{\dagger}D_1U\right|b\right\rangle\right|
\leq\left\langle a\left|\left(X_U^TX_U\right)^M\right|b\right\rangle,
\end{aligned}
\end{align}
then also
\begin{align}
\begin{aligned}
&\max_{D_1,\ldots,D_{2M+2}\in\mathcal{DU}_d(\mathbb{C})}\left|\left\langle a\left|D_{2M+2}U^{\dagger}D_{2M+1}U\ldots D_1U\right|b\right\rangle\right|\\
&\qquad\qquad=
\max_{D_1,\ldots,D_{2M+2}\in\mathcal{DU}_d(\mathbb{C})}\left|\sum_{k=1}^d\left\langle a\left|D_{2M+2}U^{\dagger}D_{2M+1}U\right|k\right\rangle\left\langle k\left|D_{2M}U^{\dagger}\ldots D_1U\right|b\right\rangle\right|\\
&\qquad\qquad\leq \sum_{k=1}^d
\left(\max_{D_{1},D_{2}\in\mathcal{DU}_d(\mathbb{C})}\left|\left\langle a\left|D_{2}U^{\dagger}D_{1}U\right|k\right\rangle\right|\right)
\left(\max_{D_1,\ldots,D_{2M}\in\mathcal{DU}_d(\mathbb{C})}\left|\left\langle k\left|D_{2M}U^{\dagger}\ldots D_1U\right|b\right\rangle\right|\right)\\
&\qquad\qquad\leq \sum_{k=1}^d\left\langle a\left|\left(X_U^TX_U\right)\right|k\right\rangle\left\langle k\left|\left(X_U^TX_U\right)^M\right|b\right\rangle
=\left\langle a\left|\left(X_U^TX_U\right)^{M+1}\right|b\right\rangle.
\end{aligned}
\end{align}
\end{proof}

Imagine that $U\in\mathcal{U}_d(\mathbb{C})$, appearing in the definition of $\F_B$, is \textit{close to} a diagonal matrix (meaning that the modulus of non-diagonal elements of $U$ are \textit{small}). Our intuition then tells us that Alice and Bob have to communicate many times before they manage to generate an arbitrary unitary matrix, and this is reflected in the following lemma.

\begin{lem}\label{lem:VDW}
Let $d\in\mathbb{N}\backslash\{1\}$, and let \hbox{$V=(v_{ij})_{i,j=1}^d,W=(w_{ij})_{i,j=1}^d\in\mathcal{U}_d(\mathbb{C})$} be such that for all $i,j\in\{1,\ldots,d\}$, $\;i\neq j$, we have
\begin{align}
\begin{aligned}
\left|v_{ij}\right|\leq v 
\;\;\;\text{for some}\;\;\;v\in\mathbb{R}_+
\;\;\;\text{and}\;\;\;
\left|w_{ij}\right|\leq w \;\;\;\text{for some}\;\;\;w\in\mathbb{R}_+.
\end{aligned}
\end{align}
Then
\begin{align}
\begin{aligned}
\max_{a,b\in\{1,\ldots,d\}:\;a\neq b}\max_{D\in\mathcal{DU}_d(\mathbb{C})}\left|\left\langle a\left|VDW\right|b\right\rangle\right|
\leq (d-2)vw+v+w.
\end{aligned}
\end{align}
\end{lem}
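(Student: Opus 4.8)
The idea is to estimate $|\langle a|VDW|b\rangle|$ directly by expanding the product and separating the ``diagonal'' contributions of $V$ and $W$ from the ``off-diagonal'' ones. Writing $D=\mathrm{diag}(e^{i\theta_1},\dots,e^{i\theta_d})$, we have
\begin{align}
\left\langle a\left|VDW\right|b\right\rangle=\sum_{k=1}^d v_{ak}\,e^{i\theta_k}\,w_{kb},
\end{align}
so by the triangle inequality $|\langle a|VDW|b\rangle|\le\sum_{k=1}^d|v_{ak}||w_{kb}|$, and it suffices to bound this last sum uniformly over $a\neq b$. The point is that in the index range $k\in\{1,\dots,d\}$, at most one term (namely $k=a$) can have $|v_{ak}|$ not bounded by $v$, and at most one term ($k=b$) can have $|w_{kb}|$ not bounded by $w$; since $a\neq b$, these are two distinct indices.

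I would then split the sum into four pieces: $k=a$, $k=b$, and the remaining $d-2$ indices $k\notin\{a,b\}$. For $k=a$: $|v_{aa}||w_{ab}|\le 1\cdot w=w$ (using $|v_{aa}|\le 1$ and $a\neq b$ so $|w_{ab}|\le w$). For $k=b$: $|v_{ab}||w_{bb}|\le v\cdot 1=v$. For each of the $d-2$ remaining terms: $|v_{ak}||w_{kb}|\le v\cdot w$, since $k\neq a$ and $k\neq b$. Summing gives $\sum_k|v_{ak}||w_{kb}|\le (d-2)vw+v+w$, which is the claimed bound; taking the maximum over $a\neq b$ and over $D\in\mathcal{DU}_d(\mathbb{C})$ preserves it.

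There is no real obstacle here — the only thing to be slightly careful about is the bookkeeping of which index escapes which bound and the use of $a\neq b$ to guarantee that $a$ and $b$ are genuinely distinct summation indices (so the two ``bad'' terms do not coincide and get counted with the wrong weight). The bounds $|v_{ij}|\le 1$, $|w_{ij}|\le 1$ for all entries (including diagonal ones) follow from unitarity of $V$ and $W$, since each column has unit norm. This lemma, together with Lemma~\ref{lem:X^TX}, will then feed into the proof of Proposition~\ref{prop:lower_bound}: applying Lemma~\ref{lem:VDW} with $V=W=U$ (so $v=w$ equal to the largest off-diagonal modulus of $U$), or more directly working with $X_U^TX_U$ whose off-diagonal entries are bounded by $c_U$, bounds the growth per pair of strokes and forces $N$ to be at least logarithmically large in $d$.
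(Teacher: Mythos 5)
Your proof is correct and is essentially the same as the paper's: both expand $\langle a|VDW|b\rangle=\sum_k v_{ak}e^{i\theta_k}w_{kb}$, apply the triangle inequality, split the sum over $k$ into $k=a$, $k=b$, and the remaining $d-2$ indices, and bound the diagonal entries of $V,W$ by $1$ via unitarity. The bookkeeping you highlight (that $a\neq b$ ensures the two exceptional indices are distinct) is exactly what the paper's splitting relies on.
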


\begin{proof}
We have
\begin{align}
\begin{aligned}
\max_{a,b\in\{1,\ldots,d\}:\;a\neq b}
\max_{D\in\mathcal{DU}_d(\mathbb{C})}
\left|\left\langle a\left|VDW\right|b\right\rangle\right|
&=\max_{a,b\in\{1,\ldots,d\}:\;a\neq b}\max_{\theta_1,\ldots,\theta_d\in\mathbb{R}}
\left|\sum_{j=1}^dv_{aj}e^{i\theta_j}w_{jb}\right|\\
&\leq\max_{ab\in\{1,\ldots,d\}:\;a\neq b}\sum_{j=1}^d\left|v_{aj}\right|\left|w_{jb}\right|\\
&=\max_{a,b\in\{1,\ldots,d\}:\;a\neq b}
\Bigg(\sum_{j\in\{1,\ldots,d\}\backslash\{a,b\}}\left|v_{aj}\right|\left|w_{jb}\right|
+\left|v_{aa}\right|\left|w_{ab}\right|
+\left|v_{ab}\right|\left|w_{bb}\right|\Bigg)\\
&\leq (d-2)vw+v+w.
\end{aligned}
\end{align}
\end{proof}

We are now ready to give the proof of Proposition~\ref{prop:lower_bound}.

\begin{proof}[Proof of Proposition~\ref{prop:lower_bound}]
For any $M\in\mathbb{N}$ and any  $U\in\mathcal{U}_d(\mathbb{C})$,
let us introduce matrices
\begin{align}
V_{U}^{(D_1,\ldots,D_{2M})}=D_{2M}U^{\dagger}D_{2M-1}U\ldots D_2U^{\dagger}D_1U
\end{align}
\begin{align}
v_{U}^{(D_1,\ldots,D_{2M})}:=\max_{i,j\in\{1,\ldots,d\},\;i\neq j}
\left|\left\langle i\left|V_{U}^{(D_1,\ldots,D_{2M})}\right| j\right\rangle\right|.
\end{align}
For all $M\in\mathbb{N}\backslash\{1\}$, we then have
\begin{align}
\begin{aligned}
&\max_{a,b\in\{1,\ldots,d\}:\;a\neq b}\max_{D_1,\ldots,D_{2M}\in\mathcal{DU}_d(\mathbb{C})}\left|\left\langle a\left|D_{2M}U^{\dagger}D_{2M-1}U\ldots D_2U^{\dagger}D_1U\right|b\right\rangle\right|\\
&\qquad\qquad\qquad=\max_{a,b\in\{1,\ldots,d\}:\;a\neq b}\max_{D_1,\ldots,D_{2M}\in\mathcal{DU}_d(\mathbb{C})}
\max_{D\in\mathcal{DU}_d(\mathbb{C})}
\left|\left\langle a\left|V_{U}^{(D_3,\ldots,D_{2M})}DV_{U}^{(D_1,D_2)}\right|b\right\rangle\right|,
\end{aligned}
\end{align}
which, due to Lemma~\ref{lem:VDW} applied to $V=V_{U}^{(D_3,\ldots,D_{2M})}$ and $W=V_{U}^{(D_1,D_2)}$, entails the following:
\begin{align}
\begin{aligned}
&\max_{a,b\in\{1,\ldots,d\}:\;a\neq b}\max_{D_1,\ldots,D_{2M}\in\mathcal{DU}_d(\mathbb{C})}
\left|\left\langle a\left|D_{2M}U^{\dagger}D_{2M-1}U\ldots D_2U^{\dagger}D_1U\right|b\right\rangle\right|\\
&\qquad\qquad\qquad\leq\max_{D_1,\ldots,D_{2M}\in\mathcal{DU}_d(\mathbb{C})} \left((d-2)v_U^{(D_3,\ldots,D_{2M})}v_U^{(D_1,D_2)}
+v_U^{(D_3,\ldots,D_{2M})}+v_U^{(D_1,D_2)}\right).
\end{aligned}
\end{align}
Further, for any $M\in\mathbb{N}$ and any $U\in\mathcal{U}_d(\mathbb{C})$ define
\begin{align}
v_{U,M}:=\max_{D_1,\ldots,D_{2M}\in\mathcal{DU}_d(\mathbb{C})}v_U^{(D_1,\ldots,D_{2M})},
\end{align}
and observe that $v_{U,1}=c_U$ with $c_U$ given by Eq.~\eqref{def:c_U}. Moreover, for any $M\in\mathbb{N}\backslash\{1,2\}$ and any $U\in\mathcal{U}_d(\mathbb{C})$, we get \begin{align}
v_{U,M}\leq v_{U,M-1}\left((d-2)c_U+1\right)+c_U,
\end{align}
and so
\begin{align}\label{def:v_UN}
v_{U,M}\leq
\frac{\left((d-2)c_U+1\right)^M-1}{d-2}\;\;\;\text{for any}\;\;\;d\in\mathbb{N}\backslash\{1,2\}.
\end{align}

Now, observe that whenever the right hand side of Eq. \eqref{def:v_UN} is smaller than 1, then obviously also 
\begin{align}
\begin{aligned}
\max_{a,b\in\{1,\ldots,d\}:\;a\neq b}
\max_{D_1,\ldots,D_{2M}\in\mathcal{DU}_d(\mathbb{C})}
\left|\left\langle a\left|D_{2M}U^{\dagger}D_{2M-1}U\ldots D_2U^{\dagger}D_1U\right|b\right\rangle\right|<1,
\end{aligned}
\end{align}
which, in turn, implies that Alice and Bob are not able to generate any permutation matrix within $N=2M$ steps (i.e., $M$ Alice's strokes and $M$ Bob's strokes). As a consequence, we may claim that, in order to generate an arbitrary unitary matrix of order $d$, both Alice and Bob need to perform at least $M_{\text{min}}$ operations, with $M_{\text{min}}$ given as the smallest natural number satisfying
\begin{align}\label{def:N_min}
\frac{\left((d-2)c_U+1\right)^{M_{\text{min}}}-1}{d-2}\geq 1\;\;\;\text{for}\;\;\;d\in\mathbb{N}\backslash\{1,2\}.
\end{align}
Finally, we get 
\begin{align}
M_{\text{min}}\geq \frac{\log\left(d-1\right)}{\log\left((d-2)c_U+1\right)}\;\;\;\text{for}\;\;\;d\in\mathbb{N}\backslash\{1,2\}.
\end{align}
Since the minimal number of strokes is given by $N=2M_{\text{min}}$, this concludes the proof. 
\end{proof}

\end{document}